\documentclass[11pt]{article}

\RequirePackage{amsmath,amsthm,amsfonts,dsfont}
\RequirePackage{amssymb,mathrsfs}

\RequirePackage{fullpage}
\RequirePackage[defaultlines=3,all]{nowidow}  \RequirePackage{lmodern} \usepackage[utf8]{inputenc}
\usepackage[T1]{fontenc}
\RequirePackage{microtype} 

\RequirePackage{enumitem}
\setitemize[1]{nosep}

\RequirePackage{mathtools}
\mathtoolsset{centercolon}  \RequirePackage{bm} \RequirePackage[font=small,labelfont=bf]{caption} \RequirePackage{booktabs} \RequirePackage{graphicx}
\RequirePackage{mleftright}
\mleftright

\newcommand{\mytableofcontents}{\thispagestyle{empty}
    \newpage
    \setcounter{tocdepth}{2}
    \tableofcontents
    \newpage
    \pagenumbering{arabic}}

\RequirePackage{xcolor}
\RequirePackage[hyphens]{url}
\RequirePackage{hyperref}
\hypersetup{
    breaklinks=true,
    colorlinks,
    linkcolor={blue!65!black},
    citecolor={blue!65!black},
    urlcolor={red!65!black},
    filecolor={red!65!black},
}

\RequirePackage[capitalize,nameinlink,noabbrev,nosort]{cleveref} \usepackage[hyperpageref]{backref} 

\newtheorem{theorem}{Theorem}[section]
\usepackage{thmtools, thm-restate}

\newtheorem{lemma}[theorem]{Lemma}

\newtheorem{corollary}[theorem]{Corollary}
\newtheorem{definition}[theorem]{Definition}

\newcommand{\N}{\ensuremath{\mathbb{N}}}

\newcommand{\R}{\ensuremath{\mathbb{R}}}

\newcommand{\Z}{\ensuremath{\mathbb{Z}}}

\newcommand{\eps}{\varepsilon} 

\DeclareMathOperator{\poly}{poly}

\DeclareMathOperator{\polylog}{polylog}

\DeclareMathOperator{\vol}{vol}

\DeclareMathOperator*{\argmin}{argmin}

\DeclareMathOperator*{\expect}{\mathbb{E}}

\RequirePackage{derivative}

\renewcommand{\epsilon}{\eps} 
\renewcommand{\vec}[1]{\bm{#1}}

\renewcommand{\tilde}{\widetilde}

\DeclarePairedDelimiter\floor{\lfloor}{\rfloor}
\DeclarePairedDelimiter\ceil{\lceil}{\rceil}

\usepackage{algorithm}
\usepackage{algpseudocodex}
\usepackage{mdframed}
\usepackage{framed}
\usepackage{bbm}

\usepackage[most]{tcolorbox}

\newtcolorbox{algbox}[1]{
    enhanced,
    breakable,
    colback=gray!4,
    colframe=black!60,
    title=\textbf{#1},
    fonttitle=\bfseries,
    boxrule=0.6pt,
    arc=2pt,
    left=8pt,
    right=8pt,
    top=6pt,
    bottom=6pt
}

\usepackage{subcaption}

\crefname{section}{\S\!}{\S\S\!}
\Crefname{section}{\S\!}{\S\S\!}

\def\tO{\tilde{O}}
\def\cP{\mathcal{P}}
\def\cX{\mathcal{X}}
\def\cM{\mathcal{M}}
\def\cO{\mathcal{O}}

\def\Round{\mathfrak{R}}

\newcommand{\repr}{r}

\newcommand{\Paren}[1]{\left(#1\right)}

\newcommand{\sens}{\mathrm{sens}}

\newcommand{\scr}{\mathrm{scr}}

\newcommand{\fr}[1]{\left\{#1\right\}}

\newcommand{\one}{\mathbf 1}

\newcommand{\etanet}{\mathcal{N}_{\eta}}

\newcommand{\calE}{\mathcal{E}}

\newcommand{\calB}{\mathcal{B}}

\newcommand{\calP}{\mathcal{P}}

\newcommand{\msspset}{\mathbb{S}}

\newcommand{\reals}{\mathbb{R}}

\newcommand{\integers}{\mathbb{Z}}

\newcommand{\torus}{\mathbb{T}}

\newcommand{\bbB}{\mathbb{B}}

\newcommand{\ExpPart}{\mathsf{ExpPart}}

\newtheorem{observation}{Observation}

\allowdisplaybreaks

\usepackage{tikz}
\usetikzlibrary{arrows.meta,positioning,fit,calc,decorations.pathreplacing,matrix,external}

\title{
Overcoming the Randomness-Utility Trade-off in\\
Answering Differentially Private Linear Queries
}

\author{
  \begin{tabular}{cc}
    \begin{tabular}[t]{c}
      Surendra Ghentiyala\thanks{Work done during an internship at Google Research.}\\
      Cornell University\\
      \texttt{sg974@cornell.edu}
    \end{tabular}
    &
    \begin{tabular}[t]{c}
      Pritish Kamath\\
      Google Research\\
      \texttt{pritish@alum.mit.edu}
    \end{tabular} \\[1.5cm] \begin{tabular}[t]{c}
      Ravi Kumar\\
      Google Research\\
      \texttt{ravi.k53@gmail.com}
    \end{tabular}
    &
    \begin{tabular}[t]{c}
      Pasin Manurangsi\\
      Google Research\\
      \texttt{pasin@google.com}
    \end{tabular}
  \end{tabular}
}

\date{\today}
\begin{document}

\maketitle

\begin{abstract}
We study the question of answering linear queries with differential privacy using \emph{few} (expected) random bits. We provide a randomness-efficient analog of the $\| \cdot \|_K$-norm mechanism of Hardt and Talwar \cite{HardtT10}. For the $\ell_\infty$-error, our algorithm can answer $d$ linear queries with $O(d / \eps)$ error using $O(\log d)$ random bits, improving upon algorithms of~\cite{canonne2025randomness,ghentiyala2026efficient}; this is optimal when $\eps \le 1/d$.
We also provide a computationally efficient version of our algorithm, albeit with an $O(\log d)$ multiplicative increase in the error.

\end{abstract}

\thispagestyle{empty}
\newpage
\addtocounter{page}{-1}

\mytableofcontents
\newpage

\section{Introduction}

As differential privacy (DP)~\cite{dwork2006calibrating} has become a gold standard in privacy-preserving algorithms, it has seen large-scale adoption in various real-world systems, both in industry and government sectors. This adoption has revealed several 
practical challenges in applying DP.  One such challenge
is \emph{randomness complexity}---the (expected) number of independent, uniformly random bits required to implement these inherently randomized algorithms.  In the US Census deployment of DP, for example, an estimated 7.2 $\times 10^{14}$ random bits were used~\cite{GarfinkelL20}. As such, the randomness complexity has become another important resource constraint---similar to space/time complexity and utility---that can determine the practical applicability of DP algorithms. 

Motivated by these practical considerations, Canonne et al.~\cite{canonne2025randomness} initiate the theoretical study of the randomness complexity of DP algorithms. They focus on arguably the simplest and most fundamental problem: Answering $d$ linear queries. In this setting, one can simply apply the classic $\eps$-DP (discrete) Laplace mechanism~\cite{dwork2006calibrating,GhoshRS12} to achieve an $\ell_\infty$-error of $O(d \log d / \eps)$. However, this requires generating independent noise for every one of the $d$ queries, resulting in randomness complexity of at least $\Omega(d)$. In a surprising result, Canonne et al.~\cite{canonne2025randomness} show that the randomness complexity can be significantly reduced. In particular, they give an algorithm\footnote{They also provide an algorithm that satisfies \emph{approximate}-DP. Nevertheless, our \emph{pure}-DP algorithm already improves on it in the low-randomness-complexity regime. See~\Cref{sec:open} for a more detailed discussion.} that uses only $O(\log d)$ random bits in expectation and achieves an $\ell_\infty$-error of $\tO(d^2 / \eps)$. In their algorithm, it is also possible to trade off more randomness complexity with lower error. Nevertheless, at the other extreme of $O(d \log d / \eps)$ error, matching the Laplace mechanism, their algorithm's randomness complexity remains $\Omega(d)$, similar to the Laplace mechanism. A subsequent work of Ghentiyala~\cite{ghentiyala2026efficient} simplified and improved the efficiency of their algorithm.
More recently, Kalinin and Pagh \cite{kalinin2026dithered} explored the randomness complexity (and practical efficiency) of variants of the Gaussian and Laplace mechanism in a model that allows access to both public and private randomness. Still, the overall randomness complexity vs utility tradeoff has remained unchanged from  \cite{canonne2025randomness} up to logarithmic factors. This brings us to the central question we study:
\begin{quote}\centering\slshape
What is the inherent tradeoff between the randomness complexity \\ and utility  for answering DP linear queries?
\end{quote}

\subsection{Our Results}
Surprisingly, we show that there is \emph{no} tradeoff at all: We can achieve both (nearly) optimal utility and randomness complexity simultaneously, as stated below.

\begin{theorem}[Informal; See \cref{cor:lp-mech}]
There is an $\eps$-DP mechanism that can answer $d$ linear queries with expected $\ell_\infty$-error $O(d / \eps)$ and randomness complexity $O(\log d)$.
\end{theorem}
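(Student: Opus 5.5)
The plan is to build a discrete, randomness-frugal version of the $\|\cdot\|_K$-norm mechanism with $K$ the $\ell_\infty$ ball. Here $q(x)\in\Z^d$ has $\ell_\infty$-sensitivity $1$. The key constraint is that any sampler needs at least $H(M(x))$ expected random bits, and the Knuth--Yao sampler achieves $H(M(x))+2$. So it suffices to design an $\eps$-DP mechanism with $\ell_\infty$-error $O(d/\eps)$ whose output distribution has entropy $O(\log d)$ for every input. This rules out adding full-support noise on $\Z^d$, which has entropy $\Theta(d\log(d/\eps))$. Instead, the output should be restricted to a sparse discrete set $\Lambda\subset\R^d$ with spacing $\Theta(d/\eps)$, so that only a few points of $\Lambda$ receive non-negligible mass.

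\textbf{Mechanism.} Fix $\Lambda$ and run the exponential mechanism over it: output $c\in\Lambda$ with probability proportional to $\exp(-\eps\|q(x)-c\|_\infty/2)$. Each score has sensitivity $1$, so this is $\eps$-DP regardless of the choice of $\Lambda$.

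\textbf{Utility.} Take $s=Cd/\eps$ for a large constant $C$, and let $\Lambda$ have covering radius $O(s)$ and packing radius $\Omega(s)$ in $\ell_\infty$. Then the number of points of $\Lambda$ within distance $(k+1)s$ of $q(x)$ is at most $(O(k+1))^d$. Their total weight, relative to the nearest point, is at most $(O(k+1))^d e^{-\Omega(C k d)}$. For $C$ large this is geometrically decaying in $k$. The expected error is therefore $O(s)=O(d/\eps)$.

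\textbf{Entropy (main obstacle).} The same shell-by-shell sum also bounds the entropy contribution of all shells $k\ge1$ by $O(1)$. What remains is the mass on points at distance roughly equal to the nearest distance. For the naive choice $\Lambda=s\Z^d$ this fails: if $q(x)$ sits at the center of a cube, all $2^d$ corners are equidistant, and the entropy is $\Theta(d)$.

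So the crux is to choose $\Lambda$, or a companion tie-breaking structure, such that for every $y\in\R^d$ the number of points of $\Lambda$ within $\ell_\infty$-distance $\mathrm{dist}(y,\Lambda)+O(1/\eps)$ of $y$ is at most $\poly(d)$. Only such near-ties matter, because anything at distance $\Omega(\log d/\eps)$ further away is damped by a factor of $1/\poly(d)$. I would first try a sheared lattice, generated by $s e_i$ together with a diagonal vector such as $(s/d)(1,2,\dots,d)$. The staggered coordinates are meant to ensure that, for any $y$, at most one coordinate is near a boundary at any given scale, so that near-ties occur in at most $O(d)$ directions.

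If the lattice route is hard to analyze, I would fall back on a $\Lambda$ built from a tiling of $\R^d$ into axis-aligned boxes of side $\Theta(s)$, arranged in a brick-wall pattern so that at most $d+1$ tiles meet at any point. I would then prove the $\poly(d)$ near-tie bound for that tiling directly.

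With such a $\Lambda$, the output entropy is $O(\log d)$. Combining this with the Knuth--Yao bound gives the randomness complexity $O(\log d)$ alongside error $O(d/\eps)$.
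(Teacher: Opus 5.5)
Your reduction is sound. The exponential mechanism over a point set $\Lambda$ with score $\|q(x)-c\|_\infty$ is $\eps$-DP, its error is $O(s)$, and by Knuth--Yao it suffices to bound the output entropy by $O(\log d)$. The gap is in that entropy bound, which is the whole difficulty.

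Your width-$s$ shells only handle points whose excess $\|y-c\|_\infty-\mathrm{dist}(y,\Lambda)$ is $\Omega(s)$. Your near-tie condition only handles excess $O(\log d/\eps)$. The claim that nothing in between matters, because each such point is damped by $1/\poly(d)$, is false: there can be $2^{\Omega(d)}$ of them. For example, take $\Lambda=s\Z^d$ and $y=(s/2-\delta)\mathbf{1}$ with $\log d/\eps\ll\delta\le d\ln 2/(2\eps)$. The nearest point $0$ is unique and there are no near-ties. Yet the other $2^d-1$ corners of the cube all have excess exactly $2\delta$ and total relative weight $(2^d-1)e^{-\eps\delta}\ge 2^{d/2}-1$, so the entropy at $y$ is $\Omega(d)$.

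What is actually needed is control at every intermediate scale. For each $\ell\in[d]$, the number of points with excess at most $s/\ell$ must be $\poly(d)\cdot e^{O(d/\ell)}$, with hidden constant small compared to $C$. Then, ring by ring, the count is beaten by the weight $e^{-\eps s/(2(\ell+1))}=e^{-Cd/(2(\ell+1))}$. This is exactly the multi-scale secluded partition property (\cref{conj:multiscale}) that the paper introduces. Nothing in your argument derives it from a single-scale $\poly(d)$ near-tie bound.

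You also do not supply a construction, and this is where the $\log d$ is won or lost. Consider natural explicit candidates that do have a $\poly(d)$ bound at radius $\Theta(1/\eps)$: shifted-cube or brick-wall tilings, or the paper's own $\ell_\infty$ rounding scheme. The latter's representatives form the lattice $\Z^d+\Z\cdot\frac{1}{d}\mathbf{1}$, close in spirit to your sheared lattice. For all of these, the best available bound at radius $s/\ell$ is $(\ell+1)^{O(d/\ell)}$. Against weights $e^{-Cd/(2\ell)}$, this forces $C=\Theta(\log d)$ and error $O(d\log d/\eps)$, which is precisely the $\log d$ loss of the paper's efficient mechanism.

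To get $O(d/\eps)$, the paper uses a Rogers-style probabilistic construction (\cref{lem:john_K}). It places $\Theta(d\log d)\cdot\vol(\torus)/\vol(K)$ uniform random points on a torus. Using an $\eta$-net, a covering argument, and a Chernoff bound with a union bound over all $\ell\in[d]$, it shows the set has covering radius $1$ and at most $O(d\log d\cdot e^{d/\ell})$ points in every ball of radius $1+1/\ell$.

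Your point-based score would in fact work with that random set. Covering radius $s$ puts every point of excess at most $t$ inside the ball of radius $s+t$ around $y$. (The paper instead scores by the distance $\Delta_P$ to the whole cell.) But without some multi-scale construction of this kind, the proposal does not establish $O(d/\eps)$ error with $O(\log d)$ randomness.
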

As we show later, the above theorem also holds for a broad family of norms, but for the discussion below, we focus on the $\ell_\infty$-norm.  Note that in the $\ell_\infty$-norm, the error bound is even better than the standard Laplace mechanism~\cite{dwork2006calibrating,GhoshRS12} by an $O(\log d)$ factor and is in fact optimal~\cite{HardtT10,SteinkeU16}.
As for the randomness complexity, \cite{canonne2025randomness} prove that $\Omega(\log d)$ is necessary in the high-privacy regime where $\eps \leq 1/d$. It remains an interesting open question to extend this lower bound to larger $\eps$.

Though the above mechanism is \emph{not} efficient, it demonstrates that $O(\log d)$ bits of randomness (in expectation) is sufficient to answer linear queries with optimal expected error.
Nevertheless, we also give a polynomial time mechanism for this task, but with an error bound that is an $O(\log d)$ factor larger than the above bound (hence matching that of the Laplace mechanism).

\begin{theorem}[Informal; See \cref{thm:efficient_main}] Let $\varepsilon < \log d$. There is an efficient $\eps$-DP mechanism that can answer $d$ linear queries with expected $\ell_\infty$-error $O((d \log d) / \eps)$ and randomness complexity $O(\log d)$.
\end{theorem}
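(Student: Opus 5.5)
We plan to build an efficient mechanism by replacing the (inefficient) sampling from a scaled $\ell_\infty$-ball used in the $\|\cdot\|_K$-norm-style mechanism with a randomness-efficient discrete analog that can be computed in polynomial time. The setting: the true answer vector $q(x)\in\Z^d$ has $\ell_1$-sensitivity at most $d$ (each linear query has sensitivity $1$), so neighboring inputs differ in $\ell_\infty$ by at most $1$ per coordinate. The target is an $\eps$-DP output $y$ with $\expect\|y-q(x)\|_\infty = O(d\log d/\eps)$ using $O(\log d)$ expected random bits.

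\textbf{Step 1: a product-structured noise with a small number of ``scales''.} The Laplace mechanism adds i.i.d.\ discrete Laplace noise of scale $d/\eps$ per coordinate; its $\ell_\infty$-error is $O(d\log d/\eps)$ but its randomness is $\Omega(d)$. The key observation we will exploit is that the \emph{entropy} of the output need not be large. First round the answer onto a coarse lattice $\Delta\Z^d$ with $\Delta = \Theta(d/\eps)$, via a randomized rounding whose offset is shared across coordinates. Then privacy reduces to making the distribution of the rounded vector, which lies in a set of $O(1)$ candidate cells per coordinate, change by at most $e^{\eps}$. We will use a single random ``radius/threshold'' variable $t$, drawn from a discrete Laplace-type distribution over $O(\log d)$ levels, together with a shared uniform offset $u\in[0,\Delta)$ specified to $O(\log d)$ bits of precision, and output $\Delta\cdot\lfloor (q(x)+u)/\Delta\rceil$ plus a correction depending on $t$. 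Sampling each of $t,u$ is done by rejection or by Knuth--Yao style sampling with $O(\log d)$ expected bits, and everything else is deterministic and polynomial time.

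\textbf{Step 2: privacy analysis.} For neighbors $x,x'$, each coordinate shifts by at most $1$, so the shared-offset rounding changes a coordinate's cell only when $u$ falls in a window of length $1$ out of $\Delta$. Summing over $d$ coordinates, the probability any rounding differs is at most $d/\Delta=O(\eps)$; we must convert this additive bound into a multiplicative $e^{\eps}$ bound, which is where the $\log d$ loss enters: we will smooth the offset distribution by a discretized Laplace-like density on the torus $\R/\Delta\Z$ with ratio $e^{\eps}$ between shifts of size $d$ requires the per-unit log-density slope $\eps/d$, while keeping bits at $O(\log d)$ forces a coarser scale $\Delta=\Theta(d\log d/\eps)$. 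We use the hypothesis $\eps<\log d$ to ensure $\Delta\ge d$ so cells are well defined and the precision $\log(\Delta)$ is $O(\log d)$ bits.

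\textbf{Step 3: utility and efficiency, and the main obstacle.} Error is at most $\Delta=O(d\log d/\eps)$ deterministically plus the contribution of $t$, whose expectation is $O(\Delta)$, giving the stated bound; the running time is $O(d)$ arithmetic operations. The main obstacle is Step 2: a shared offset correlates all coordinates, so the likelihood ratio of the joint output must be controlled simultaneously over all $d$ coordinates rather than coordinatewise. I would first try to show the joint output is a deterministic function of the single scalar-like pair $(u,t)$ and the data, so the ratio is just the ratio of the densities of $(u,t)$ at two preimages differing by a shift of $\ell_\infty$-size at most $d$ in total, and then verify that the chosen Laplace-on-torus density with slope $\eps/\Delta\cdot\log d$ yields the $e^{\eps}$ bound.
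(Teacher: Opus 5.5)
Your plan has a genuine gap: Step~2 cannot be completed. Randomized rounding with a shared offset is not pure $\eps$-DP, however the offset distribution is shaped, because neighboring datasets produce outputs with different supports.

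\textbf{A counterexample.} Take $d=2$ and neighbors with $q(x)=(s,s)$ and $q(x')=(s,s+1)$ for an integer $s$.
\begin{itemize}
\item Under $x$ the two coordinates of $q(x)+u$ coincide. So $\Delta\lfloor (q(x)+u)/\Delta\rceil$ lies on the diagonal for every $u$.
\item Under $x'$ the second coordinate reaches a rounding threshold exactly one unit of $u$ before the first. So on a window of offsets of length one (mod $\Delta$) the rounded vector is $(k\Delta,(k+1)\Delta)$.
\item As $s$ ranges over the integers these windows cover $\R/\Delta\Z$. Hence for a suitable $s$ this event has positive probability, whatever the distribution of $u$.
\end{itemize}
A correction $c(t)$ with finitely many values, with $t$ independent of $u$ as in your plan, does not repair this. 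Let $r$ maximize $r_2-r_1$ over the range of $c$. Then outputs with $v_2-v_1=\Delta+r_2-r_1$ occur under $x'$, while every output under $x$ has $v_2-v_1\le r_2-r_1$.

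\textbf{Why the reduction in Step~3 fails.} The same thing happens in every dimension. As $u$ sweeps the torus, the rounded vector runs through a chain of at most $d+1$ vectors. The order in which coordinates switch depends on the individual shifts $q_i(x')-q_i(x)\in[-1,1]$, which differ across coordinates. So the reduction you hope for is false: an output's preimages under $x$ and $x'$ do not differ by a single translation of $(u,t)$, and you cannot reduce privacy to a density ratio. Your ``slope $\eps/d$ for a shift of size $d$'' bookkeeping conflates the $\ell_1$-size of $q(x')-q(x)$ with a translation of the scalar $u$. What your window argument actually gives is a total-variation bound of order $d/\Delta$, i.e., an additive $\delta$-type guarantee. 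An additive bound cannot be converted into a multiplicative $e^{\eps}$ bound when some outputs have probability zero under one neighbor.

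\textbf{What the paper does instead.} Privacy comes from an exponential mechanism over the cells of a fixed partition, not from randomness inside the rounding.
\begin{itemize}
\item The partition (\cref{alg: inf_rounding}) also uses a common offset $a/d$ across coordinates. But it chooses the offset \emph{deterministically} as the minimizer of $\Phi_a(x)$. By \cref{lem:balancing}, at most $d\theta+1$ coordinates then have $\{x_i-A(x)/d\}\ge 1-\theta$. This balancing is what makes the partition a multi-scale secluded partition with $k_d(\ell)=(\ell+1)^{O(d/\ell)}$ (\cref{thm: inf_main}).
\item The mechanism samples a cell $P$ with probability proportional to $\exp(-\frac{\eps}{2}\lceil\Delta_P(f(X))\rceil)$. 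This score is $1$-Lipschitz and every cell has positive probability under every dataset, so pure DP follows directly from the exponential mechanism.
\item The MSSP bound controls the number of cells in each distance annulus. This gives the $O(d\log d/\eps)$ error at scale $\rho=\Theta(d\log d/\eps)$ via \cref{lem:tail_bound}, and the $O(\log d)$ randomness (\cref{lem:rand_complexity}).
\item Efficiency comes from three pieces: computing $\Delta_P$ by a linear program (\cref{lem:dist_comp}); enumerating the at most $(d+1)^{7(m+2)}$ cells of the $m$-th annulus (\cref{lem:efficient_enum}); and rejection sampling with a geometric ring index plus a Bernoulli acceptance step (\cref{alg:efficient_sampler}).
\end{itemize}
The two ingredients your plan lacks are the secludedness of the partition and a full-support, Lipschitz score over cells.
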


\subsection{Technical Overview}

For a dataset $X \in \cX^*$, let $f: \cX^* \to \R^d$ denote the function of interest. Throughout this overview, we assume that its $\ell_\infty$-sensitivity is bounded by one. In the case of linear queries, we adhere to the standard DP convention that $f(x) = \sum \phi_i(x_i)$ where $\phi_i$ has co-domain $[0, 1]$. Therefore, the assumption that $f$ has $\ell_{\infty}$-sensitivity bounded by one generalizes the $d$ linear queries setting, as we can view $f(X)_i$ as the answer to the $i$th linear query.

\paragraph{Secluded Partition.}
To describe our algorithm and that of \cite{canonne2025randomness}, we recall the crucial notion of \emph{secluded partition} and its construction from\footnote{See also \cite{WDPRV24}, which is a conference version of these works.} Woude et al.~\cite{WDPRV22,WDPRV23}.

\begin{definition}[Secluded Partition]
\label{def:secluded-partition}
For $\rho > 0$, a partition $\cP$ of $\R^d$ is called a \emph{$(k, \gamma; \rho)$-secluded partition} with respect to the norm $\| \cdot \|$ if there exists a function $\repr : \cP \to \R^d$ such that $\repr(P) \in P$ for all $P \in \cP$ such that the following hold:
\begin{itemize}
\item For all $P \in \cP$ and for all $x \in P$, it holds that $\|\repr(P) - x\| \le \rho$.
\item Any $\| \cdot \|$ ball of radius $\gamma \cdot \rho$ intersects at most $k$ cells in $\cP$.
\end{itemize}
Sometimes, we omit the parameter $\rho$ to mean that it is a $(k, \gamma; \rho)$ secluded partition for some $\rho$.\footnote{Note that a $(k, \gamma; \rho)$-secluded partition can be scaled to obtain a $(k, \gamma; \rho')$-secluded partition for all $\rho' > 0$.}
\end{definition}

We refer to $\repr(P)$ as the \emph{representative} of $P$, and the map $\Round : x \mapsto r(P)$ where $x \in P \in \cP$ as the {\em rounding function}.
Given a secluded partition $\cP$, the algorithm in \cite{canonne2025randomness} can be loosely described as follows: 
\begin{itemize}
\item Add Laplace noise to $f(X)$ to obtain $Y$.
\item Find the partition cell $P \in \cP$ to which $Y$ belongs.
\item Output the representative $r(P)$ of cell $P$.
\end{itemize}
\medskip
The scale parameter $\rho$ is selected so that the Laplace noise added is of norm at most $\gamma \cdot \rho$ with high probability.
The crux of the analysis from \cite{canonne2025randomness} is that, due to the second property of the secluded partition, with high probability the output of the mechanism will be one of these $k$ partition cells. Via a slightly more careful analysis, one can show that this implies that the entropy of the output is also $O(\log k)$. A standard argument by Knuth and Yao~\cite{knuth1976complexity} then implies that this distribution can be sampled using $O(\log k)$ independent bits of randomness (in expectation).

In terms of the high-probability error bound, since the norm of the Laplace noise is at most $\gamma \cdot \rho$ and rounding to the representative adds an error of at most $\rho$,
the triangle inequality ensures that the error is $O(\rho)$.

\paragraph{Our Algorithm.}

Instead of a two-stage analysis of Laplace noise and rounding, our main idea is to forego the noise, and simply use the exponential mechanism~\cite{McSherryT07} on the representatives of the secluded partition, using a score of the distance of $f(X)$ from the corresponding cells.
More formally, let the distance from a point $z \in \R^d$ to a partition cell $P \in \cP$ be defined as
$\Delta_P(z) := \inf_{y \in P} \| z- y\|_\infty$ (we only consider $\| \cdot \|_\infty$ in this overview, though our mechanism works for a large class of norms; see \Cref{subsec:k-norm}).
Crucially, we define the distance to a cell $P$ to be the minimum distance to {\em any} point in $P$ rather than the distance to the representative of $P$; this will prove important in our analysis.

It is easy to see that the sensitivity of $\Delta_P(f(\cdot))$ is bounded by the sensitivity of $f$ (in the given norm). As such, the exponential mechanism is $\eps$-DP as desired. Thus, most of our technical contribution is in analyzing the randomness complexity of this mechanism.

\paragraph{Randomness Complexity.}
Once again, our proof of the randomness complexity essentially boils down to showing that when the scaling factor $\rho$ of a secluded partition is selected appropriately, the output is, with high probability, from a small number of partitions.

To explain the proof and parameter selection in more detail, we need to be more specific about the parameters of secluded partitions. First, let us recall the construction from \cite{WDPRV22,WDPRV23}\footnote{Note that the bounds in \cite{WDPRV22,WDPRV23} are more refined. However, since we are only interested in asymptotic results, we consider this simplified version.}:
\begin{theorem} [\cite{WDPRV23}] \label{thm:wdprv-secluded-partition}
For every $\ell \in [d]$, there is an $((\ell + 1)^{O(d/\ell)}, 1/2\ell)$-secluded-partition.
\end{theorem}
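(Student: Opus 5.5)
We need to prove the existence of an $((\ell+1)^{O(d/\ell)}, 1/2\ell)$-secluded partition with respect to $\|\cdot\|_\infty$. The plan is a block-product construction. Split the $d$ coordinates into $m = \lceil d/\ell \rceil$ blocks of size at most $\ell$. Build an $(\ell+1, 1/2\ell; \rho)$-secluded partition $\cP_0$ of $\R^\ell$ in the $\ell_\infty$ norm, and take the product partition $\cP = \cP_0^{m}$. Each product cell gets as representative the tuple of block representatives.

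\textbf{Step 1: the product reduction.} Because $\|\cdot\|_\infty$ on $\R^d$ is the maximum of the block $\ell_\infty$ norms, two things follow.
\begin{itemize}
\item Every point of a product cell is within $\rho$ of its representative.
\item An $\ell_\infty$ ball of radius $\gamma\rho$ in $\R^d$ is the product of radius-$\gamma\rho$ balls in each block. It therefore meets at most $(\ell+1)^m = (\ell+1)^{O(d/\ell)}$ product cells.
\end{itemize}
So it suffices to handle the single-block case $d=\ell$.

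\textbf{Step 2: the single-block construction.} In $\R^\ell$ we need a partition into bounded cells such that every $\ell_\infty$ ball of radius $\rho/(2\ell)$ meets at most $\ell+1$ cells. I would use the standard ``reclusive'' (Kuratowski–Lebesgue type) cube tiling. Start from a unit-cube tiling of $\R^\ell$ whose cubes are shifted so that no point lies near $\ell+1$ cube faces simultaneously. Concretely, take a staggered brick arrangement in which the $i$-th layer is offset by $i/(\ell+1)$ in each successive coordinate, chosen so that a ball of radius $1/(2(\ell+1))$ meets at most $\ell+1$ cubes. Alternatively, I would cite the explicit construction of \cite{WDPRV22} directly. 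Making each cube half-open gives a genuine partition. The representative is the center of the cube, so $\rho = 1/2$ in $\ell_\infty$. The target radius is then $\gamma\rho = 1/(4\ell)$, which is at most $1/(2(\ell+1))$ for $\ell\ge1$, so the radius condition is satisfied.

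\textbf{Step 3: the main obstacle.} The hard part is the counting argument that a small ball meets at most $\ell+1$ cubes. The approach is to order the coordinates and proceed inductively. In the first coordinate, a ball of radius $<1/(2(\ell+1))$ lies within at most two slabs. Within each slab, the next coordinate's cube boundaries are offset between adjacent slabs by $1/(\ell+1)$. This means that at each step the ball can straddle a boundary in only one of the two slabs.

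This bounds the number of cells by $1+(\text{number of coordinates where a split occurs})$, and that is at most $\ell+1$. I expect this pigeonhole-style induction to need care. If it becomes messy, I would fall back on the weaker bound $2^{O(\ell)}$ cells per block, which still gives $2^{O(d)}$ overall. However, matching $(\ell+1)^{O(d/\ell)}$ exactly requires the $\ell+1$ bound, so the staggered argument, or citing \cite{WDPRV23}, is needed.
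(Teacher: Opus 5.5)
\textbf{Verdict: genuine gap in the base case.} Your reduction in Step~1 is correct, apart from the trivial point that a shorter last block should use the lower-dimensional partition. The product structure also matches the form of the bound in \cite{WDPRV23}. The paper gives no proof of this statement; it cites \cite{WDPRV23}, and separately proves a stronger multi-scale version in \cref{thm: inf_main}.

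\textbf{The gap (Steps~2--3).} Your induction only compares the $x_2$-cuts of two adjacent slabs. After coordinates $1,\dots,j-1$ have been processed, the ball may already meet up to $j$ boxes. For coordinate $j$ to add at most one cell, the $x_j$-cut offsets of \emph{all} these boxes must be pairwise more than $2r$ apart modulo $1$. Your staggering is not specified precisely enough to guarantee this, and the natural readings fail:
\begin{itemize}
\item If the $x_j$-cuts in a box are $\Z + z_{j-1}/(\ell+1)$, take $\ell=3$ and the ball $[-r,r]^3$ with $0<r<1/4$. The boxes $(z_1,z_2)=(-1,0),(0,-1),(0,0)$ all meet it. The first and third both have $x_3$-offset $0$, so both split at $x_3=0$, giving $5>\ell+1$ cells.
\item If instead layer $z_1=i$ is shifted by $i/(\ell+1)$ uniformly in all later coordinates, the cells inside a layer form an ordinary axis-parallel cube grid. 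A ball around an interior grid vertex then meets $2^{\ell-1}$ cells.
\end{itemize}
So the invariant ``at most one box splits per coordinate'' needs a specific offset rule $s_j(z_1,\dots,z_{j-1})$. That rule must separate every family of boxes a small ball can meet simultaneously, and this is the missing piece. Without it, your proof is complete only if you import the $(\ell+1,\frac{1}{2\ell})$-secluded partition of $\R^\ell$ (known, cf.\ \cite{WDPRV22}) as a black box.

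\textbf{A simpler fix.} Your remark that $(\ell+1)^{O(d/\ell)}$ needs exactly $\ell+1$ cells per block is not right. Any base partition of $\R^\ell$ in which balls of radius $\rho/(2\ell)$ meet $(\ell+1)^{O(1)}$ cells suffices, since $\bigl((\ell+1)^{O(1)}\bigr)^{\lceil d/\ell\rceil}=(\ell+1)^{O(d/\ell)}$. Only your $2^{O(\ell)}$ fallback is too weak. The paper's rounding scheme (\cref{alg: inf_rounding}), run in dimension $\ell$, supplies such a base. By \cref{lem: inf_movement_bound} and \cref{lem: inf_intersect_bound} with scale index equal to the dimension, balls of radius $\rho/\ell$ meet $O(\ell^4)$ cells.

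\textbf{Comparison with the paper's route.} Applied directly in $\R^d$, the same lemma removes the need for blocks altogether. Choosing the diagonal shift $a$ that minimizes $\Phi_a$ guarantees, by \cref{lem:balancing}, that at most $d\theta+1$ coordinates lie within $\theta$ below a cut. Hence the rounded image of a radius-$1/\ell$ ball is determined by $a$ and a set of $O(d/\ell)$ coordinates. This gives a single partition that is secluded at every scale $\ell$ simultaneously, and it implies the statement. A block product cannot give this, because its block size is tied to $\ell$, and the multi-scale property is exactly what the paper needs for its mechanism.
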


Crucially however, the choice of the partition in the above construction depends on the value of $\ell$. Our first contribution is to show that there is, in fact, a \emph{single} partition that is 
simultaneously secluded with similar parameters \emph{for all} $\ell$.
We refer to this as a \emph{multi-scale secluded-partition}.
\begin{restatable}[Multi-Scale Secluded Partition]{definition}{mssp}
\label{conj:multiscale}
Let $\rho \in \R_{> 0}$, $d \in \integers_{> 0}$ and $k_d: [d] \to \integers_{>0}$ be a function.
$\cP$ is a \emph{$(k_d, \rho)$-multi-scale secluded partition} ($(k_d, \rho)$-MSSP) with respect to the norm $\| \cdot \|$ if for all $\ell \in [d]$, $\cP$ is a $(k_d(\ell), 1/\ell; \rho)$-secluded partition with respect to the norm $\| \cdot \|$.
\end{restatable}
\noindent We omit $\rho$ in $(k_d, \rho)$-MSSP to mean ``for some $\rho$'' (again, one can modify the $\rho$ easily via scaling). We use the probabilistic method to prove the existence of such a MSSP.
\begin{theorem}[Informal; See \cref{thm:k_norm_part}]
\label{lem:mssp}
For any $d \in \N$, there is a $k_d$-MSSP for the $\ell_\infty$-norm, where
$k_d(\ell) = d^{O(1)} \cdot 2^{O(d/\ell)}$.
\end{theorem}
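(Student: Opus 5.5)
The plan is to use the probabilistic method on an exponential-clock (random-shift) partition, made periodic so that only finitely many events need a union bound. Fix $\rho$ (by scaling we may take $\rho=1$) and a grid of centers $C=\tfrac14\Z^d$, so every point of $\R^d$ is within $\ell_\infty$-distance $1/8$ of some center. Let $L$ be a constant integer multiple of the grid spacing with $L\ge 4$. To each center $c$ attach a shift $\delta_c$ that depends only on $c \bmod L\Z^d$, with the shifts of distinct residues i.i.d.\ $\mathrm{Exp}(\beta)$; we will take $\beta = \Theta(d/\rho)$. Assign $x$ to the cell of $\argmax_{c\in C}\bigl(\delta_c-\|x-c\|_\infty\bigr)$, breaking ties by a fixed rule. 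The resulting partition is $L\Z^d$-periodic, so it suffices to verify the seclusion property for balls centered in one fundamental box.

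\textbf{Step 1 (bounded cells).} Consider the event $\mathcal E_0$ that every $\delta_c\le 1/4$. It concerns only $(4L)^d=2^{O(d)}$ independent exponentials, so choosing $\beta = Cd$ with $C$ large gives $\Pr[\mathcal E_0]\ge 1-2^{-\Omega(d)}$. On $\mathcal E_0$, a point $x$ assigned to $c$ satisfies $\|x-c\|_\infty\le \delta_c + 1/8 < 1/2$. Taking any point of the cell as its representative $\repr(P)$, every $x\in P$ is then within distance $1$ of $\repr(P)$, which gives the radius condition with $\rho=1$.

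\textbf{Step 2 (tail bound for one ball).} Fix a ball $B(z,r)$. Every center competing for a point of the ball lies within distance $1$ of $z$. Because $L\ge 4$, these centers have distinct residues mod $L$, so their shifts are independent. If the ball meets cells of $c$ and $c'$, then $X_c := \delta_c-\|z-c\|_\infty$ and $X_{c'}$ are both within $2r$ of $\max_{c''} X_{c''}$. The standard memorylessness argument for shifted exponentials (as in Miller--Peng--Xu) then gives
\[
\Pr\bigl[\#\{c : X_c \ge \max_{c''} X_{c''}-2r\}\ge k\bigr]\le \bigl(1-e^{-2\beta r}\bigr)^{k-1}.
\]
With $r=1/\ell$ and $\beta=Cd$ this is $\exp\bigl(-(k-1)e^{-2Cd/\ell}\bigr)$. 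Choosing $k_d(\ell)= e^{2Cd/\ell}\cdot C' d\log d = d^{O(1)}2^{O(d/\ell)}$ makes the failure probability at most $d^{-C'd}$.

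\textbf{Step 3 (discretize and union bound).} Take a net $N$ of the fundamental box of spacing $1/(d\ell)$, or simply $1/d^2$ uniformly over $\ell$. Replacing a ball of radius $1/\ell$ by a concentric ball of radius $2/\ell$ centered at a net point covers every ball, and the factor $2$ is absorbed into the $O(\cdot)$ in the exponent. The net has size $d^{O(d)}$ and there are $d$ values of $\ell$, so with $C'$ large the union bound over Step 2 together with $\mathcal E_0$ leaves positive probability that all conditions hold simultaneously. Periodicity then extends these conditions to all of $\R^d$. This yields a single partition that is a $(k_d(\ell),1/\ell;1)$-secluded partition for every $\ell\in[d]$.

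\textbf{Main obstacle.} The delicate step is Step 2. Because the shifts are conditioned on $\mathcal E_0$, they are truncated, and the geometric tail bound for the number of near-maximal clocks must be carried out carefully under this truncation. I would first prove the bound for untruncated exponentials. I would then note that the event in Step 2 together with $\mathcal E_0$ is contained in the union of the untruncated event and $\overline{\mathcal E_0}$. This means it suffices to bound the two events separately in the union bound, rather than working with the conditional distribution. A secondary check is that the dependence introduced by periodicity never causes two competing centers for the same ball to share a shift; the choice $L\ge 4>2\cdot(\text{competition radius})$ is meant to rule this out.
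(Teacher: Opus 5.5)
Your proof is correct, but it takes a genuinely different route from the paper's. The paper (\cref{lem:john_K}, lifted to all norms in \cref{thm:k_norm_part} via John's theorem) uses a Rogers-style random covering. It places $s\approx \lambda\,\vol(\torus)/\vol(K)$ uniform points on a torus of side $20\sqrt d$, with $\lambda=\Theta(d\log d)$, and takes their Voronoi cells. The radius condition comes from a covering argument over an $\eta$-net. Seclusion comes from point counting: a radius-$1/\ell$ ball that meets many cells forces many random points into a ball of radius $1+1/\ell+\eta$, whose expected count is $\lambda(1+1/\ell+\eta)^d\le 2\lambda e^{d/\ell}$; Chernoff plus a union bound then finish.

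You instead fix the centers on the grid $\frac14\Z^d$ and put the randomness into periodic exponential shifts. The radius condition becomes the event $\mathcal{E}_0$ that all $2^{O(d)}$ shifts in a period are at most $1/4$, and this is exactly what forces $\beta=\Theta(d)$. Seclusion then comes from the near-tie lemma for shifted exponentials rather than from counting points. Your reduction is sound: on $\mathcal{E}_0$ you pass to untruncated clocks on the finitely many centers near $z$, which have distinct residues and hence independent shifts.

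The two approaches buy different things.
\begin{itemize}
\item The paper's argument is norm-agnostic and gives the sharper $O(d\log d\cdot e^{d/\ell})$, with exponent constant $1$.
\item Yours relies on the product structure of $\ell_\infty$: a constant-side period box has a $1/8$-net of size $2^{O(d)}$. A naive transfer to, say, $\ell_2$ would need $d^{\Theta(d)}$ shifts per period. That pushes $\beta$ to $\Theta(d\log d)$ and the bound to $d^{O(d/\ell)}$, unless the centers and period lattice are chosen more carefully.
\item Your constant in the exponent is also larger, but that is harmless for the informal $2^{O(d/\ell)}$ statement.
\item In exchange, your construction uses a fixed, structured center set, and its seclusion bound follows directly from a standard lemma.
\end{itemize}

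Two small parameter fixes are needed.
\begin{itemize}
\item With the factor-$2$ net enlargement at $\ell=1$, competing centers can lie up to $2+3/8$ from the net point. Then $L=4$ no longer guarantees distinct residues. Take $L=8$, which costs only a constant in $\beta$, or use the additive $1/d^2$ enlargement instead.
\item Write $\log(d+1)$ rather than $\log d$ in $k_d(\ell)$, so that the union bound is nontrivial at $d=1$.
\end{itemize}
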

We note that, interestingly, the parameters we obtain in the above theorem are \emph{better} than those of the (single-scale) secluded-partition from \cite{WDPRV23} in certain regimes, e.g.,
$\Theta(1) \leq \ell \leq \Theta(d/\log d)$; see \Cref{thm:wdprv-secluded-partition}.   This might be of independent interest.

\paragraph{Proof Overview.}
We are now ready to discuss the core argument in the proof. We set $\rho = \Theta\Paren{d/\eps}$ where ``$\Theta$'' contains a sufficiently large constant. Then, we claim that the algorithm will, with high probability, output only a partition cell that is within distance $\frac{\rho \log d}{d}$ of $f(X)$. Note that this implies that the error is at most $\rho + \frac{\rho \log d}{d} = O\Paren{d/\eps}$. Moreover, since there are only $d^{O(1)}$ such partition cells, it is not hard to further show
that the entropy of the output is at most $O(\log d)$, which yields the desired randomness complexity.

The high-level intuition for the claim is as follows. First, it is not hard to show that faraway partition cells, i.e., those with $\Delta_P(f(X)) > \rho$, are selected with very low total probability.
Among the remaining partition cells, we can group them into ``rings'' based on distances; namely, for each $\ell \in [d/\log d]$, we can consider the group of all cells $P$ where $\Delta_P(f(X)) \in \left[\frac{\rho}{\ell+1}, \frac{\rho}{\ell}\right)$. Since $\cP$ is an MSSP, by \Cref{lem:mssp}, there are at most $d^{O(1)}2^{O(d/\ell)} \leq \exp\Paren{O(d/\ell)}$ such cells. However, the exponential mechanism assigns probability at most $\exp\Paren{-\frac{\eps}{2} \cdot \frac{\rho}{\ell + 1}} = \exp\Paren{-\Theta\Paren{d/\ell}}$ to each such cell. Thus, by picking the constant in big-$\Theta$ to be sufficiently large, the probability of selecting any such cell is small. Finally, a union bound over all these rings concludes the proof. 

\paragraph{\boldmath Extensions to Other $K$-Norms.} Finally, as an astute reader might have already noticed, our mechanism does not use any specific property of the $\ell_\infty$-norm, apart from the existence of a multi-scale secluded-partition. It turns out that, via a simple probabilistic argument, a multi-scale secluded-partition exists very generally across a large class of $K$-norms for ``somewhat nice'' $K$. As a result, it is simple to extend our mechanism to match the error of the so-called $K$-norm mechanism by Hardt and Talwar~\cite{HardtT10}, where $K$ is the {\em sensitivity polytope} (\Cref{def:sensitivity-polytope}) of the function $f$. In particular, for $\ell_p$-norms, we can still get an $\ell_p$-error of $O(d/\eps)$ using $O(\log d)$ bits of randomness.

\paragraph{Computationally Efficient Mechanism.}
The probabilistic construction of an MSSP does not immediately yield an efficient mechanism. However, for the special case of the $\ell_\infty$-norm, we construct an explicit MSSP, and show that the algorithm
using it can be implemented efficiently.
The MSSP, however, has slightly worse parameters, leading to an error of $O((d \log d)/\eps)$ with randomness complexity of $O(\log d)$.
The MSSP itself can be viewed as a modification of the partition by Hoza and Klivans~\cite{HozaK18}. The parameters we obtain for this MSSP essentially match those of \cite{WDPRV22,WDPRV23} (\Cref{thm:wdprv-secluded-partition}), which are worse than the probabilistic construction. This results in the $O(\log d)$ factor increase in the error. Closing this gap is an interesting open problem.
 \section{Preliminaries}
\label{sec:prelim}

\subsection{Convex Geometry}

Let $\calB_p^d(r)$ denote the
$\ell_p$-ball of radius $r$ in $\R^d$.  
For any absorbing disk\footnote{A set $K \subseteq \R^d$ is an \emph{absorbing disk} if (i) it is convex, (ii) it is symmetric, i.e., $K = -K$, and (iii) for all $z \in \R^d$, there exists $t \in \R_{\ge 0}$ such that $z \in tK$.} $K$ that does not contain a nontrivial subspace of $\reals^d$, define the \emph{$\| \cdot \|_K$-norm} as
\begin{align}
\label{eq:knorm}
\| z \|_K & := \inf \left\{ t \geq 0 : z \in tK \right\}. \end{align}
Throughout this work, we assume that $K$ is an absorbing disk not containing a nontrivial subspace of $\reals^d$ and hence satisfies the conditions needed to make $\| \cdot \|_K$ a norm. Note that the notion of $\| \cdot \|_K$-norm subsumes all $\ell_p$-norms.

\begin{definition}[$\delta$-net]
For a set $K \subset \reals^d$ and norm $\| \cdot \|$ on $\reals^d$, we say that a finite set $A \subseteq K$ is a \emph{$\delta$-net} for $K$ with respect to $\| \cdot \|$ if for all $x \in K$, there exists $y \in A$ such that $\| x - y\| \leq \delta$.
\end{definition}

\begin{lemma}[{\cite[Lemma 4.16 rephrased]{pisier1999volume}}]
    \label{lem: covering_num_bound}
    Let $\| \cdot \|_1, \| \cdot \|_2$ be two arbitrary norms on $\reals^d$ with respective unit balls $B_1, B_2$ where $B_2 \subseteq B_1$. Let $\delta > 0$. There is a finite set $A \subset B_1$ that is a $\delta$-net for $B_1$ with respect to $\| \cdot \|_2$ with
    \[ |A| \leq \left( 1 + \frac{2}{\delta} \right)^d \cdot \frac{\vol(B_1)}{\vol(B_2)} .\]
\end{lemma}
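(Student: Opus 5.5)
The plan is the standard volumetric packing argument. Call a set $A \subseteq B_1$ \emph{$\delta$-separated} with respect to $\| \cdot \|_2$ if $\|x - y\|_2 > \delta$ for all distinct $x, y \in A$. I will show that every finite $\delta$-separated set has at most $N := (1+2/\delta)^d \cdot \vol(B_1)/\vol(B_2)$ elements. Then I take a $\delta$-separated set of maximum cardinality, which exists because cardinalities are bounded by $N$, and check that it is a $\delta$-net.

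\textbf{Step 1 (maximality gives a net).} Let $A \subseteq B_1$ be $\delta$-separated of maximum size, and let $x \in B_1$. If $\|x - y\|_2 > \delta$ for every $y \in A$, then $A \cup \{x\}$ would be a larger $\delta$-separated subset of $B_1$, a contradiction. Hence some $y \in A$ satisfies $\|x-y\|_2 \le \delta$, so $A$ is a $\delta$-net for $B_1$ with respect to $\| \cdot \|_2$.

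\textbf{Step 2 (volume count).} Fix a finite $\delta$-separated $A \subseteq B_1$.
\begin{itemize}
\item \emph{Disjointness.} The translates $y + \tfrac{\delta}{2} B_2$ for $y \in A$ have pairwise disjoint interiors. If two of them shared an interior point, the triangle inequality for $\| \cdot \|_2$ would give $\|y - y'\|_2 < \delta$.
\item \emph{Containment.} Each translate lies in $B_1 + \tfrac{\delta}{2} B_2$. Since $B_2 \subseteq B_1$, this set is contained in $B_1 + \tfrac{\delta}{2} B_1$, which equals $(1+\tfrac{\delta}{2}) B_1$ by convexity of $B_1$.
\item \emph{Comparing volumes.} Using $d$-homogeneity of Lebesgue measure,
\[ |A| \cdot (\delta/2)^d \vol(B_2) \le (1+\delta/2)^d \vol(B_1). \]
Rearranging gives $|A| \le (1+2/\delta)^d \, \vol(B_1)/\vol(B_2)$.
\end{itemize}
Both volumes are finite and positive, since unit balls of norms on $\R^d$ are bounded convex bodies with nonempty interior.

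There is no real obstacle here. The only points needing care are the inclusion $B_1 + \tfrac{\delta}{2}B_1 = (1+\tfrac{\delta}{2})B_1$, which uses convexity, and the existence of a maximal separated set. The latter is settled by applying the volume bound of Step 2 before choosing $A$, rather than by appealing to compactness.
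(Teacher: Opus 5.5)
Your proof is correct: a maximum-cardinality $\delta$-separated subset of $B_1$ is a $\delta$-net, and packing the translates $y + \tfrac{\delta}{2}B_2$ inside $(1+\tfrac{\delta}{2})B_1$ gives exactly the stated bound. The paper does not prove this lemma itself; it cites \cite{pisier1999volume}, and your argument is the standard volumetric packing argument behind that result.
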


\begin{theorem}[John's Theorem~\cite{john1948extremum};  {\cite[Theorem 3.13 rephrased]{tao2006additive}}]
    \label{thm:john}
    Let $K$ be a symmetric convex body in $\reals^d$ (namely $K = -K$). Then there exists an invertible linear transformation $T: \reals^d \rightarrow \reals^d$ such that 
    \[ \calB_2^d(1) \subseteq T K \subseteq \calB_2^d(\sqrt{d}), \qquad \text{where } TK := \{ Tz \mid z \in K\}.\]
\end{theorem}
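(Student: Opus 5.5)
The plan is the classical maximal-volume-ellipsoid argument, restricted to ellipsoids centered at the origin; this restriction is harmless because $K=-K$. Since $K$ is a symmetric convex body, it is compact and contains a small Euclidean ball $\calB_2^d(r)$ around the origin. Consider the family of origin-centered ellipsoids $E_A=\{Az : \|z\|_2\le 1\}$ with $A$ positive definite and $E_A\subseteq K$. Containment forces $\|A\|_{op}$ to be bounded by the circumradius of $K$, so the matrices range over a bounded set. The volume $\vol(E_A)=|\det A|\cdot\vol(\calB_2^d(1))$ is continuous, and it is at least $r^d\vol(\calB_2^d(1))>0$ on this family. Moreover, the condition $E_A\subseteq K$ is closed because $K$ is closed. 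By compactness we therefore get a maximizer $E=E_{A^*}$ with $A^*$ invertible, since its volume is positive. Set $T=(A^*)^{-1}$. Then $TE=\calB_2^d(1)\subseteq TK$, and $\calB_2^d(1)$ is a maximal-volume origin-centered ellipsoid inside the symmetric convex body $TK$, because $T$ scales all volumes by the same factor.

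It remains to show $TK\subseteq\calB_2^d(\sqrt d)$. Suppose instead some $x\in TK$ has $\|x\|_2=t>\sqrt d$. After a rotation, which preserves both balls, assume $x=te_1$. By symmetry and convexity, $C:=\mathrm{conv}(\calB_2^d(1)\cup\{\pm te_1\})\subseteq TK$. We look for an origin-centered ellipsoid $E_{a,b}=\{z: z_1^2/a^2+\|z'\|_2^2/b^2\le 1\}$ inside $C$ with $ab^{d-1}>1$, which contradicts maximality. Here $z'$ denotes the last $d-1$ coordinates.

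The key step is verifying containment, and I will do it via support functions. $E_{a,b}\subseteq C$ iff $h_{E}(u)\le h_C(u)$ for all unit $u$. Here $h_E(u)^2=a^2u_1^2+b^2(1-u_1^2)$ and $h_C(u)=\max(1,t|u_1|)$. Write $s=u_1^2\in[0,1]$; the left side is linear in $s$.
\begin{itemize}
\item On $[0,1/t^2]$, where the bound is $1$, it suffices to check the endpoints: $b^2\le 1$ and $a^2/t^2+b^2(1-1/t^2)\le 1$.
\item On $[1/t^2,1]$, the bound $t^2s$ is linear with slope $t^2>a^2-b^2$ (we will have $a<t$), so containment follows from the check at $s=1/t^2$.
\end{itemize}
Take $b^2=1-\delta$ and $a^2=1+(t^2-1)\delta$, which satisfy both endpoint conditions with equality. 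Then
\[
(ab^{d-1})^2=(1+(t^2-1)\delta)(1-\delta)^{d-1}=1+(t^2-d)\delta+O(\delta^2),
\]
which exceeds $1$ for small $\delta>0$ precisely because $t^2>d$.

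The main obstacle is this containment-plus-volume computation. In particular, the threshold $\sqrt d$ must come out exactly, which is why I use the exact support-function criterion rather than a cruder cone estimate. The compactness and existence step is routine.
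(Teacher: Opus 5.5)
Your proof is correct. Note that the paper gives no proof of this statement: it imports John's theorem by citation (John; Tao--Vu, Theorem 3.13) and only uses the conclusion, in the proof of \cref{thm:k_norm_part}. So there is no in-paper argument to compare against, and what you have written is a complete, self-contained proof.

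I checked the key computation. With $b^2=1-\delta$ and $a^2=1+(t^2-1)\delta$, both endpoint constraints at $s=0$ and $s=1/t^2$ hold with equality. Since $a<t$, the difference $h_E^2-h_C^2$ is nonincreasing on $[1/t^2,1]$. Finally, $(ab^{d-1})^2=1+(t^2-d)\delta+O(\delta^2)$, so the threshold comes out at exactly $\sqrt d$. This is consistent with the cube $[-1,1]^d$, where the bound is tight.

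Compared with the classical route through John's contact-point characterization, your argument takes a different path. That route extracts from maximality a decomposition $\sum_i c_i u_i u_i^{\top}=I$ over contact points $u_i$, so that $\sum_i c_i=d$. It then uses symmetry to get $|\langle x,u_i\rangle|\le 1$, hence $\|x\|_2^2=\sum_i c_i\langle x,u_i\rangle^2\le d$. Your direct perturbation is more elementary: it needs no Lagrange-multiplier or separation argument in matrix space. The contact-point route gives more structure, such as the factor $d$ for non-symmetric bodies and the decomposition of the identity itself, but the paper only needs the symmetric sandwich.

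Two cosmetic fixes are worth making.
\begin{itemize}
\item The lower bound $r^d\vol(\mathcal{B}_2^d(1))$ holds for the \emph{maximum} volume, witnessed by $A=rI$, not for every member of the family.
\item The compactness argument should be run over the closure, i.e.\ positive semidefinite $A$ with $\|A\|_{\mathrm{op}}\le R$ and $E_A\subseteq K$. Positivity of the maximum then gives invertibility of $A^*$, as you essentially say.
\end{itemize}
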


\subsection{Differential Privacy (DP)}

We recall the definition of (pure-) DP~\cite{dwork2006calibrating} below. The definition relies on a symmetric notion of {\em neighboring datasets}, denoted $X \sim X'$; some typical neighborhood notions are adding/removing one element, or replacing one element to obtain dataset $X'$ from $X$ (or vice versa). A {\em mechanism} $\cM : \cX^* \to \Delta_{\cO}$ maps datasets $X \in \cX^*$ to probability measures over the output space $\cO$.

\begin{definition}[Differential Privacy]
A mechanism
$\cM: \cX^* \to \Delta_{\cO}$ is \emph{$\eps$-differentially private ($\eps$-DP)} if, for any pair $X \sim X'$ of neighboring datasets and any measurable set $E \subseteq \cO$, it holds that $\Pr[\cM(X) \in E] \leq e^{\eps} \cdot \Pr[\cM(X') \in E]$.

When $\cO$ is countable, this is equivalent to requiring that, for all $o \in \cO$, $\Pr[\cM(X) = o] \leq e^{\eps} \cdot \Pr[\cM(X') = o]$.
\end{definition}

\noindent For the purposes of our results, the exact neighborhood notion is irrelevant, as long as the function $f$ has low \emph{sensitivity} for neighboring datasets.

\begin{definition}[Sensitivity]
Let $f: \cX^* \to \R^d$ be a function, and $\|\cdot\|$ be any norm over $\R^d$. The sensitivity of $f$ with respect to $\|\cdot\|$ is defined as
$\sens_{\|\cdot\|}(f) = \sup_{X \sim X'} \|f(X) - f(X')\|.$
\end{definition}

For the special case of $d = 1$, we assume throughout that the norm is the absolute value and will henceforth omit it from the notation.
It is also useful to define the notion of a \emph{sensitivity polytope}, as follows.

\begin{definition}[Sensitivity Polytope]\label{def:sensitivity-polytope}
We say that $K \subseteq \R^d$ is a \emph{sensitivity polytope} of $f$ if $f(X) - f(X') \in K$ for all neighboring datasets $X \sim X'$.
\end{definition}
Note that in the above definition, a sensitivity polytope for $f$ might not be unique. This is somewhat more convenient for our exposition below (instead of defining the sensitivity polytope to be the unique smallest such polytope).

The sensitivity of $f$ and the sensitive polytope of $f$ relate to each other as follows:
\begin{observation}
Let $K$ be a sensitivity polytope of $f$. Then, we have $\sens_{\|\cdot\|_K}(f) \leq 1$.
\end{observation}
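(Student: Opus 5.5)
The plan is to unfold the two definitions involved, namely the definition of the sensitivity $\sens_{\|\cdot\|_K}(f)$ as a supremum over neighboring pairs and the definition \eqref{eq:knorm} of $\|\cdot\|_K$ as an infimum over dilation factors, and to check that $t = 1$ is always an admissible dilation factor. Throughout, $K$ is assumed to be an absorbing disk not containing a nontrivial subspace, as stipulated in the preliminaries, so that $\|\cdot\|_K$ is a genuine norm and the sensitivity is well defined.

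\textbf{Step 1.} Fix an arbitrary pair of neighboring datasets $X \sim X'$ and set $z := f(X) - f(X')$. By \Cref{def:sensitivity-polytope}, $z \in K$.

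\textbf{Step 2.} Since $K = 1 \cdot K$, we have $z \in 1\cdot K$. Hence $t = 1$ belongs to the set $\{t \ge 0 : z \in tK\}$ appearing in \eqref{eq:knorm}. The infimum of a set is at most any of its elements, so
\[
\|f(X) - f(X')\|_K = \|z\|_K \le 1 .
\]

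\textbf{Step 3.} This bound holds for every neighboring pair, so taking the supremum over all $X \sim X'$ gives
\[
\sens_{\|\cdot\|_K}(f) = \sup_{X \sim X'} \|f(X) - f(X')\|_K \le 1 .
\]

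There is no real obstacle in this argument. The only point requiring care is that the norm is defined via an infimum, so membership $z \in K$ yields only the inequality $\|z\|_K \le 1$ and not equality. This inequality is exactly what the statement requires.
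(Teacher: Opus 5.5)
Your proof is correct and matches the paper's argument: $f(X)-f(X') \in K = 1\cdot K$ makes $t=1$ admissible in the infimum defining $\|\cdot\|_K$, so $\|f(X)-f(X')\|_K \le 1$ for every neighboring pair. You only spell out the final supremum step, which the paper leaves implicit.
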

\begin{proof}
For any neighboring $X \sim X'$, we have $f(X) - f(X') \in K$. Using the definition of  $\| \cdot \|_K$-norm (see \eqref{eq:knorm}), it follows that $\|f(X) - f(X')\|_K \leq 1$.
\end{proof}
The \emph{exponential mechanism}  \cite{McSherryT07} is parameterized by the output set $\cO$ and a
\emph{scoring function} $\{\scr_o\}_{o \in \cO}$, where $\scr_o: \cX^* \to \R$. The algorithm simply outputs each $o \in \cO$ with probability proportional to $\exp\Paren{-\frac{\eps}{2} \cdot \scr_o(X)}$. The privacy guarantee of the mechanism is given below:
\begin{theorem}[Exponential Mechanism, \cite{McSherryT07}] \label{thm:exm-dp}
If $\scr_o$ has sensitivity at most 1 for all $o \in \cO$, then the exponential mechanism is $\eps$-DP.
\end{theorem}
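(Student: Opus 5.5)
The statement is the exponential-mechanism privacy guarantee (\Cref{thm:exm-dp}). I will prove it by the standard ratio argument: bound separately the change in the unnormalized weight of a fixed outcome and the change in the normalizing constant, each by a factor $e^{\eps/2}$. For the presentation I will assume $\cO$ is countable, which is how the mechanism is used in this paper (outputs are representatives of partition cells). I will also assume the normalizer $Z(X) := \sum_{o \in \cO} \exp\Paren{-\frac{\eps}{2}\scr_o(X)}$ is finite and positive, since otherwise the mechanism is not well defined. By the countable case of the DP definition, it suffices to compare point probabilities.

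Fix neighbors $X \sim X'$ and an outcome $o \in \cO$.

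\textbf{Numerator.} Since $\sens(\scr_o) \le 1$, we have $\scr_o(X') \le \scr_o(X) + 1$. Hence
\[ \exp\Paren{-\tfrac{\eps}{2}\scr_o(X)} \le e^{\eps/2}\exp\Paren{-\tfrac{\eps}{2}\scr_o(X')}. \]

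\textbf{Normalizer.} Applying the same pointwise bound to every $o' \in \cO$, now with the roles of $X$ and $X'$ swapped, gives $\exp\Paren{-\tfrac{\eps}{2}\scr_{o'}(X')} \le e^{\eps/2}\exp\Paren{-\tfrac{\eps}{2}\scr_{o'}(X)}$. Summing over $o'$ yields $Z(X') \le e^{\eps/2} Z(X)$. This is valid for countable sums of nonnegative terms, and it shows $Z(X') < \infty$ whenever $Z(X) < \infty$.

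\textbf{Combining.}
\[ \Pr[\cM(X)=o] = \frac{\exp\Paren{-\tfrac{\eps}{2}\scr_o(X)}}{Z(X)} \le \frac{e^{\eps/2}\exp\Paren{-\tfrac{\eps}{2}\scr_o(X')}}{e^{-\eps/2}Z(X')} = e^{\eps}\Pr[\cM(X')=o]. \]
Neighboring is symmetric and $o$ was arbitrary, so the mechanism is $\eps$-DP.

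The only step needing care is the normalizer. One must use that sensitivity bounds the score change in both directions, and that the sum converges; convergence transfers between neighbors by the bound just proved. If one wants a general measurable $\cO$ with a base measure, the same argument goes through with integrals in place of sums, since the density ratio bound integrates over any event $E$.
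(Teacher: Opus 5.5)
Your proof is correct. The paper does not prove this theorem itself but quotes it from \cite{McSherryT07}, and your argument is the standard proof from that source: bound the change in the unnormalized weight of $o$ and in the normalizer each by a factor $e^{\eps/2}$. Restricting to countable $\cO$ with a finite normalizer matches how the paper uses the mechanism, since its outputs are partition cells (or their representatives).
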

Adopting the conventions of \cite{canonne2025randomness}, we now define the output entropy and randomness complexity of a mechanism. 

\begin{definition}[Output Entropy]
    \label{def:output_entropy}
    The \emph{output entropy} of a mechanism $\cM: \cX^* \to \Delta_{\cO}$ is defined as $\sup_{X \in \cX^*} H(\mathcal{M}(X))$, where $H(\cdot)$ is the Shannon entropy.
\end{definition}

\begin{definition}[Randomness Complexity]
    \label{def:rand_complexity}
    The \emph{randomness complexity} of a mechanism $\cM: \cX^* \to \Delta_{\cO}$ is defined as the supremum over $X \in \cX^*$ of the expected number of random bits used by $\cM(X)$.
\end{definition}

Output entropy is a purely information-theoretic notion while randomness complexity is also a computational notion. We have intentionally left out the model of computation in \cref{def:rand_complexity}. When we design our efficient mechanism in \cref{sec:efficient}, we will work with the standard notion of a probabilistic Turing machine running in polynomial time. However, in all other instances, we will assume that the model of computation for which we are measuring randomness complexity is as general as possible (e.g., an infinite decision tree reading the input and some number of random bits). This will allow us to use the DDG-tree sampler of Knuth and Yao \cite{knuth1976complexity} to smoothly transition between output entropy and randomness complexity.

\begin{theorem}[paraphrased \cite{knuth1976complexity}]
    \label{thm:KY}
    Let $\vec{p} = (p_1, \dots, p_n)$ be a discrete probability distribution (where $n$ may be infinite).  The average number of random bits used by an optimum DDG-tree algorithm for $\vec{p}$ is at most $H(\vec{p}) + 2$ where $H(\cdot)$ denotes the binary entropy function.
\end{theorem}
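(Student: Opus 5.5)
The plan is to build the standard Knuth--Yao DDG tree directly from the binary expansions of the probabilities, and then bound its expected depth term by term. Write each $p_i = \sum_{j \ge 1} b_{ij} 2^{-j}$ with $b_{ij} \in \{0,1\}$. For dyadic $p_i$ we take the terminating expansion, and for $p_i = 1$ the trivial tree. The tree is a complete binary tree read one random bit per level. At depth $j$ we make exactly $\sum_i b_{ij}$ of the available nodes into leaves, labeled $i$ for each $i$ with $b_{ij} = 1$, and we continue branching from the remaining nodes.

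The first step is to check that this is well defined. At every depth $j$ the number of internal nodes stays nonnegative, because the number of nodes still unused at depth $j$ is $2^j\bigl(1 - \sum_i \sum_{j' < j} b_{ij'} 2^{-j'}\bigr)$. This quantity is at least $2^j \sum_i \sum_{j' \ge j} b_{ij'} 2^{-j'} \ge \sum_i b_{ij}$, since $\sum_i p_i = 1$. Each leaf labeled $i$ at depth $j$ is reached with probability $2^{-j}$, so $i$ is output with probability exactly $p_i$, and the walk terminates almost surely. This also covers $n = \infty$, since all sums are absolutely convergent. It then suffices to exhibit this particular tree, because the optimum DDG tree can only do better.

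The second step is the expected number of bits, which equals the expected leaf depth $\sum_i T(p_i)$, where $T(p) := \sum_j j\, b_j(p) 2^{-j}$. The claim reduces to the single-variable inequality
\[ T(p) \le p \log_2(1/p) + 2p \qquad \text{for all } p \in (0,1], \]
which, summed over $i$, gives $H(\vec p) + 2$. To prove it, let $k$ be the index with $2^{-k} \le p < 2^{-k+1}$, and write $p = 2^{-k} q$ with $q \in [1,2)$. Then $b_j(p) = 0$ for $j < k$, and shifting indices gives $T(p) = k p + 2^{-k} S(q)$, where $S(q) := \sum_{m \ge 0} m\, c_m 2^{-m}$ and $q = \sum_{m \ge 0} c_m 2^{-m}$ with $c_0 = 1$. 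Since $\log_2(1/p) = k - \log_2 q$, the inequality becomes scale-free:
\[ S(q) \le q(2 - \log_2 q) \qquad \text{for } q \in [1,2). \]

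The main obstacle is this last inequality, because the crude bound $S(q) \le \sum_m m 2^{-m} = 2$ only yields $H + 3$. I would prove it by induction on the binary digits. Writing $q = 1 + q'/2$ with $q' \in [0,2)$, we get $S(q) = \bigl(q' + S(q')\bigr)/2$, and it suffices to check that the right-hand side $\phi(q) = q(2 - \log_2 q)$ satisfies $\phi(1 + x/2) \ge (x + \phi(x))/2$ for the relevant ranges. Here $\phi$ must be extended to $[0,1)$ through the same recursion, using the leading zero digits. Concavity of $\phi$ handles the induction step, and the base case $q = 1$ is tight with equality. Non-terminating expansions follow by taking limits of truncations, since both sides are continuous from the appropriate side. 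If the induction proves awkward, the fallback is the direct route in Cover--Thomas: compare $S(q)$ with the extremal digit pattern $q = 2 - 2^{-M}$ and verify the resulting one-variable inequality by calculus.
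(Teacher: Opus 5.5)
The paper gives no proof of this statement; it cites Knuth and Yao as a black box. Your argument is a correct, self-contained reconstruction of the standard proof. It uses the same DDG tree read off from the binary expansions, the same well-definedness count, and the same reduction of the expected depth to the per-atom bound $T(p)\le p\log_2(1/p)+2p$.

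Your induction also goes through. The extension of $\phi$ to $[0,1)$ forced by the leading-zero recursion is again $x(2-\log_2 x)$, because $rx+2^{-r}\phi(2^r x)=x(2-\log_2 x)$, and this function is concave on $[0,2]$. Hence $\phi(1+x/2)\ge \tfrac12\bigl(\phi(2)+\phi(x)\bigr)=1+\tfrac12\phi(x)\ge \tfrac12\bigl(x+\phi(x)\bigr)$ for $x<2$. The truncation limit is fine, since $S$ increases monotonically along truncations and $\phi$ is continuous. One slip: the base case $q=1$ is not tight, since $S(1)=0<2=\phi(1)$. Equality in your recursion occurs only at $x=0$, and the bound is only approached as $q\to 2^-$.

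Your route differs from the textbook (Cover--Thomas) argument only in the per-atom inequality, and there you work harder than necessary. The crude bound loses because it replaces $2^{-k}S(q)$ by $2^{-k+1}\le 2p$ rather than by $2^{-k}q=p$. Compare $S(q)$ with $q$ instead of with the constant $2$. Since $c_0=1$,
\[
q-S(q)=\sum_{m\ge 0}(1-m)c_m2^{-m}=1-\sum_{m\ge 2}(m-1)c_m2^{-m}\ \ge\ 1-\sum_{m\ge 2}(m-1)2^{-m}=0 .
\]
Hence $T(p)=kp+2^{-k}S(q)\le (k+1)p$. Since $p<2^{-k+1}$ gives $k<\log_2(1/p)+1$, you get $T(p)<p\log_2(1/p)+2p$ directly.

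The bound $S(q)\le q$ is in fact stronger than your target $S(q)\le q(2-\log_2 q)$, because $\log_2 q<1$. So your concavity induction is valid but roundabout: the direct comparison removes the recursion, the extension of $\phi$, and the limiting argument.
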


For the algorithm described in \Cref{sec:efficient}, it is essential to provide an \emph{efficient} method for sampling from the distribution rather than merely establishing a bound on the entropy. For this, we will need to generate a $\mathrm{Ber}(p)$ random variable as a subroutine. This can be done efficiently and with low randomness, since it is
a special case of the Knuth--Yao sampler \cite{knuth1976complexity} where the DDG-tree construction and execution are polynomial time (a formal proof of a more general version of \cref{lem:knuth-yao} can be found in the appendix of \cite{ghentiyala2026efficient}).

\begin{lemma}
    \label{lem:knuth-yao}
    For a $t$-bit value $p$, we can sample from $\mathrm{Ber}(p)$ in $\poly(t)$ time and using $O(1)$ bits of randomness in expectation.
\end{lemma}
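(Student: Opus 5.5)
The plan is to reduce $\mathrm{Ber}(p)$ sampling to a lazy comparison against a uniform random real. Let $p = 0.p_1p_2\cdots p_t$ in binary, padded with zeros after position $t$. Draw fresh uniform bits $b_1, b_2, \dots$ one at a time, which we view as the binary expansion of a uniform $U \in [0,1)$. At step $i$ we compare $b_i$ with $p_i$:
\begin{itemize}
\item if $b_i < p_i$, we halt and output $1$, since then $U < p$;
\item if $b_i > p_i$, we halt and output $0$, since then $U > p$;
\item if $b_i = p_i$, we continue.
\end{itemize}
If no decision has been made after step $t$, the prefix of $U$ equals $p$. In that case $U \ge p$, so we output $0$; this is the same event as $b_i > p_i = 0$ at every later step, so we may simply stop at step $t$.

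\textbf{Correctness.} The output is $1$ exactly when $U < p$, an event of probability $p$. The only potential tie is $U = p$, which has measure zero.

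\textbf{Randomness.} Each drawn bit decides the outcome with probability exactly $1/2$, independently of the previous bits, because $b_i \neq p_i$ with probability $1/2$. Hence the number of bits used is stochastically dominated by a $\mathrm{Geom}(1/2)$ variable truncated at $t$, and its expectation is at most $\sum_{i \ge 1} i\,2^{-i} = 2 = O(1)$.

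\textbf{Running time.} Each step costs $O(1)$ work plus indexing into the $t$-bit string, and there are at most $t$ steps. So the worst-case time is $O(t)$, or $\poly(t)$ once indexing cost is included. The procedure is exactly the Knuth--Yao DDG tree for the two-point distribution $(p, 1-p)$, consistent with \cref{thm:KY}.

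There is no real obstacle here. The only point needing care is the exact-tie case at bit $t$, which the padding convention handles.
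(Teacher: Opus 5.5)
Your proof is correct and matches the paper's approach. The paper does not write the argument out: it invokes the Knuth--Yao DDG-tree sampler for the two-point distribution $(p,1-p)$ and defers details to the appendix of \cite{ghentiyala2026efficient}. Your bit-by-bit comparison of $p$ against a lazily generated uniform $U$ is that tree made explicit, and the $\mathrm{Geom}(1/2)$ bound gives at most $2$ expected bits.

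One remark on scope. The paper later applies this lemma to non-dyadic rationals such as $W_0/U$ and $w_P/W_0$, where padding with zeros is unavailable. The same comparison still works if you generate the bits of $p=a/b$ on demand by long division. That keeps the expected number of random bits at most $2$ and gives expected (rather than worst-case) $\poly(t)$ time.
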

 \section{Multi-Scale Secluded Partitions}
In this section, we will consider multi-scale secluded partitions (MSSPs) solely as a geometric object and show their existence, an explicit construction, and useful properties.  We recall the definition.
\mssp*
MSSPs generalize secluded partitions (see \cref{def:secluded-partition} and \cref{thm:wdprv-secluded-partition}) by demanding a stronger order of quantifiers. Rather than asking that for every $\ell$, there exist a $(k_d(\ell), 1/\ell; \rho)$-secluded partition $\calP_\ell$, we require a single $\calP$ that is a $(k_d(\ell), 1/\ell; \rho)$-secluded partition for all $\ell \in [d]$.

\subsection{\boldmath Existence for any \texorpdfstring{$\| \cdot \|_K$}{K}-norm}
\label{subsec:k-norm}

We will first show that for $K$ that are sandwiched between inner and outer $\ell_2$-balls of appropriate sizes, there exists an MSSP for the $\| \cdot \|_K$ norm. We will then generalize this to any $K$ by applying the proper transformation.
\begin{lemma}
    \label{lem:john_K}
    Let $K \subset \reals^d$ be any set such that $\| \cdot \|_K$ is a norm and $\calB^d_2(1) \subseteq K \subseteq \calB_2^d(\sqrt{d})$. There exists a $k_d$-MSSP for $\| \cdot \|_{K}$, where $k_d(\ell) = O(d \log d \cdot e^{d/ \ell})$.
\end{lemma}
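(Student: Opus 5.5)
We build the partition by a randomized "ball-carving" construction and then show, by a probabilistic argument combined with a covering bound, that one random outcome works simultaneously for all $\ell \in [d]$. Normalize $\rho = 1$ and write $B := K$ for the unit ball of $\|\cdot\|_K$.

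\textbf{Construction.} Take a countable dense sequence of centers, e.g.\ an infinite lattice-like $\frac12$-net of $\reals^d$ in $\|\cdot\|_K$, and assign independent uniform random priorities $\pi(c) \in [0,1]$. Each point $x$ goes to the cell of the center $c$ with $x \in c + B$ of smallest priority; the representative is $c$ (one must check $c$ lies in its own cell, or else pick a point of the cell and use radius $2$, which only rescales $\rho$). The first secluded property (radius $\rho$) is then immediate.

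\textbf{Single-ball analysis.} Fix a ball $z + \frac1\ell B$. The cells meeting it are indexed by the centers $c$ with $(c+B) \cap (z + \frac1\ell B) \ne \emptyset$, i.e.\ $c \in z + (1+\frac1\ell)B$, that actually "win" some point of the small ball. A center $c$ wins a point only if it has the smallest priority among the centers whose $B$-balls contain that point. Order the candidate centers by priority. The $j$-th one can claim a point of the small ball only if it is not already dominated, and it is dominated whenever an earlier center $c'$ satisfies $c' + B \supseteq z + \frac1\ell B$, i.e.\ $c' \in z + (1-\frac1\ell)B$.

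Hence the number of cells hit is at most the number of candidates drawn before the first center falling in the inner region $z+(1-\frac1\ell)B$. The number of centers in $z+(1+\frac1\ell)B$ relative to those in $z+(1-\frac1\ell)B$ is governed by a volume ratio
\[
\left(\frac{1+1/\ell}{1-1/\ell}\right)^d \approx e^{2d/\ell},
\]
up to net-discretization losses, controlled via \cref{lem: covering_num_bound}. Rescaling the construction to use inner radius $1$ and outer $1+\frac1\ell$ (balls of radius $\rho$ with $\rho$ adjusted) should give ratio $e^{d/\ell}$. So the count is geometric-like: the probability of exceeding $t\, e^{d/\ell}$ cells is $e^{-\Omega(t)}$.

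\textbf{Union bound.} Next we must handle all balls $z+\frac1\ell B$ simultaneously for all $\ell$. Discretize $z$ by a fine net; the sandwich $\calB_2^d(1)\subseteq K\subseteq \calB_2^d(\sqrt d)$ with \cref{lem: covering_num_bound} bounds the net size within a bounded window by $(\poly d)^{d}$. Here a slightly enlarged ball at a net point covers any ball. But an infinite union over all of $\reals^d$ cannot be done directly, so we use periodicity: make the construction periodic modulo a large lattice, so that only finitely many translates matter. Choose $t = \Theta(d \log d)$ so that $e^{-\Omega(t)}$ beats $(\poly d)^d \cdot d$, yielding $k_d(\ell)=O(d\log d\cdot e^{d/\ell})$ for every $\ell$ with positive probability.

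\textbf{Main obstacle.} I expect the delicate step to be exactly this union bound over $z$ together with the discretization/periodicity, keeping the loss in the exponent at $d/\ell$ rather than $2d/\ell$. I would first try making the centers a periodic lattice net, with priorities periodic as well, and then check that the enlargement from net rounding only changes $\ell$ by a constant factor absorbed into the volume ratio.
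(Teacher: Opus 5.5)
\textbf{There is a genuine gap: the counting step cannot give the exponent $d/\ell$, and the proposed fix does not work.} Your scaffolding (periodicity, a union bound over a fine net of centers $z$ and over $\ell$, tail parameter $t=\Theta(d\log d)$) matches the paper's. The trouble is the count itself. Your bound on the number of cells meeting $z+rB$ is $t$ times the ratio of centers in the outer region $z+(R+r)B$ to centers in the dominating region $z+(R-r)B$, where $R$ is the carving radius and $r=\rho/\ell$.

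\begin{itemize}
\item Using $c$ as its own representative certifies only $\rho=R$, since a cell can be essentially all of $c+RB$. Even that is not legal in general, because $c$ need not lie in its own cell. Otherwise $\rho=2R$.
\item So $r\ge R/\ell$, and the ratio is at least $\left(\frac{1+1/\ell}{1-1/\ell}\right)^d\ge e^{2d/\ell}$. With $\rho=2R$ it is at least $e^{4d/\ell}$.
\item Rescaling cannot help, because $r$, $R$ and $\rho$ scale together. Nor can $R$ be tuned per $\ell$, since a single partition must serve every $\ell$. For the same reason, passing to radius $2$ does not ``only rescale $\rho$''; it doubles $r/R$.
\item For $\ell=1$ (and $\ell\le 2$ when $\rho=2R$) the dominating region is a point or empty, so your argument gives no bound there. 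The obvious repair, covering the ball by smaller balls, gives $c^d$ with $c$ well above $e$, not $O(d\log d\cdot e^{d})$.
\item A $\tfrac12$-net of centers loses a $2^{\Theta(d)}$ factor between packing and covering counts. You would need covering radius $O(1/d)$, e.g.\ a fine lattice, for center counts to track volumes.
\end{itemize}
Completed carefully, your route plausibly yields $\poly(d)\,e^{O(d/\ell)}$. That suffices for the later mechanism, since \cref{lem:tail_bound} tolerates any constant in the exponent, but it does not prove this lemma.

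\textbf{The missing idea is to bound an absolute count of sites rather than an inner/outer ratio.} The paper places $s=\lceil\lambda\vol(\torus)/\vol(K)\rceil$ uniform random sites on a torus of side $20\sqrt d$, with $\lambda=Cd\log(d+1)$. It takes their $\|\cdot\|_K$-Voronoi cells, with each site representing its own cell.
\begin{itemize}
\item A net-plus-union-bound argument shows that with high probability every point is within distance $1$ of a site. Hence each cell lies in the unit $K$-ball around its site, and $\rho=1$.
\item Any cell meeting $x+\frac1\ell K$ then has its site in $x+(1+\frac1\ell)K\subseteq q+(1+\frac1\ell+\eta)K$, where $q$ is a net point and $\eta=\frac{1}{4d}$.
\item The number of sites in that ball is binomial with mean $\lambda(1+\frac1\ell+\eta)^d\le 2\lambda e^{d/\ell}$.
\item Chernoff plus a union bound over $q$ and $\ell$ then gives $O(d\log d\cdot e^{d/\ell})$ for all $\ell$ simultaneously. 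The density $\lambda=\Theta(d\log d)$ is exactly what pays for the covering and these union bounds.
\end{itemize}
Only the outer volume $(1+\frac1\ell)^d$ enters. That is why the exponent is $d/\ell$ rather than $2d/\ell$, and why $\ell=1$ needs no special treatment.
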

\begin{proof}
    Let $\torus = \reals^d/(L \integers^d)$ be the flat torus with side length $L = 20 \sqrt{d}$.  Let the
    distance in the torus be given by
    $d_{\torus}(x, y) := \min_{z \in L \integers^d} \| x - y + z \|_K$.

    Let $\eta = 1/(4d)$. We construct an $\eta$-net $\etanet$ on the torus $\torus$, in which the elements of the net will be our \emph{anchor points}. In fact, it suffices to construct the $\eta$-net on $[-L/2, L/2]^d$ since distances on the flat torus cannot be any greater than on $[-L/2, L/2]^d$. By \cref{lem: covering_num_bound},
    \begin{align*}
    |\etanet| 
    &\leq~ \left( 1 + \frac{2}{1/(4d)} \right)^d \cdot \frac{\vol([-L/2, L/2]^d)}{\vol(K)}\\
    &\leq~ (9d)^d \cdot \frac{\vol([-L/2, L/2]^d)}{\vol(\calB_2^d(1))}\\
    &\leq~ (9d)^d (20 \sqrt{d})^d \left( d ! \right)\\
    &=~ 2^{O(d \log d)}.
    \end{align*}
Let $\lambda = C \cdot d \log(d+1)$ for a sufficiently large constant $C$. We choose 
    \[ s = \left\lceil \lambda \cdot \frac{\vol(\torus)}{\vol(K)} \right\rceil, \]
    independent points uniformly from $\torus$ to create a subset $S$.

    Our secluded partition will correspond to the Voronoi cells of each point in $S$ with respect to $d_{\torus}(\cdot, \cdot)$. Formally, the partition is defined as $\cP_{S} := \{ P_w \mid w \in S \}$ where $P_w := \{ x \in \R^d \mid w = \argmin_{w' \in \msspset} d_{\torus}(x, w') \}$, and the representative function is $r(P_w) := w$ for all $w \in S$.
We now show that both the covering and secludedness properties hold with high probability.
    \paragraph{Covering.} We first show that with high probability, every point $x \in \torus$ is within distance 1 of some point in $S$. To do so, we will show that with high probability each anchor point is within distance $1-\eta$ of some point in $S$. Note that this is sufficient as it would imply that for any $x \in \torus$, $x$ is within distance $\eta$ of an anchor point, which is itself within distance $1-\eta$ of $S$, so $x$ is within distance $1$ of $S$ by the triangle inequality. For $q \in \etanet$, let $\calE_q$ denote the event that no element in $S$ lands within distance $1-\eta$ of $q$ and $\calE$ denote the union of events $\calE_q$ over $q \in \etanet$.  Now, 
    \[
    \Pr[\calE_q] 
    = \left( 1 - \frac{\vol((1-\eta) K)}{\vol(\torus)} \right)^s 
    \leq \left( 1 - (1-\eta)^d \frac{\vol(K)}{\vol(\torus)} \right)^{\lambda \frac{\vol(\torus)}{\vol(K)}}
    \leq e^{-\lambda (1-\eta)^d} 
    \leq e^{-\lambda/100}.
    \]
    We can then union bound over $\etanet$ to conclude that $\calE$ occurs with small probability.  Indeed,
    \[
    \Pr[\calE]
    \leq \sum_{q \in \etanet} \Pr[\calE_q]
    \leq |\etanet| \cdot e^{-\lambda/100}
    \leq 2^{O(d \log d)} \cdot e^{-\lambda/100}
    \leq 0.01,
    \]
    for a sufficiently large constant $C$.  From now on, we 
    assume that the covering property holds.

    \paragraph{Secludedness.}  Suppose the secludedness condition is violated for some fixed $\ell$, i.e., there exists a point $x \in \torus$ such that the ball of radius $1/\ell$ around $x$ intersects more than $6C \cdot d \log(d+1) \cdot e^{d/\ell}$ cells. By the covering property, this implies that the ball of radius $1+1/\ell$ around $x$ intersects more than $6C \cdot d \log(d+1) \cdot e^{d/\ell}$ points in $S$. Furthermore, since $x$ is within distance $\eta$ of some point $q$ in $\etanet$, we have that the ball of radius $r_\ell := 1+1/\ell+\eta$ around $q$ intersects more than $6C \cdot d \log(d+1) \cdot e^{d/\ell}$ points in $S$. It is therefore sufficient to upper bound the probability that there exists $q \in \etanet$ such that
    \[ F_q := \left| S \cap \left( q + r_\ell K \right)\right| > 6C \cdot d \log(d+1) \cdot e^{d/\ell} .\]
    Let $\calE_{\ell, q}$ be the event this happens.  Note that $F_q$ is the sum of $s$ i.i.d. Bernoulli random variables:
    \[ \mathbb{E}[F_q] 
    = s \frac{\vol(r_\ell K)}{\vol(\torus)}
    = s \frac{r_\ell^d \cdot \vol(K)}{\vol(\torus)}
    = r_\ell^d \cdot \lambda . \]
    Next, we bound $r_\ell^d$.  Clearly
    $r_\ell^d \geq 1$, while 
    \[
    r_\ell^d 
    = \left( 1 + \frac{1}{\ell} + \eta \right)^d
    = \left( 1 + \frac{1}{\ell} + \frac{1}{4d} \right)^d
    \leq e^{\left( \frac{1}{\ell} + \frac{1}{4d} \right) \cdot d}
    \leq 2e^{d/\ell}.
    \]
    Using a simple Chernoff bound,
    \[ 
    \Pr[\calE_{\ell, q}] = \Pr[F_q > 6C \cdot d \log(d+1) \cdot e^{d/\ell})] \leq  \Pr[F_q > 3 \lambda \cdot r_\ell^d] \leq 2^{-\lambda \cdot r_\ell^d} \leq 2^{-\lambda},\]
    where we used $6C \cdot d \log(d+1) \cdot e^{d/\ell} \geq 3 \lambda r_\ell^d$.
    We can then union bound over $\etanet$ to conclude that $\calE = \cup_{\ell, q} \calE_{\ell, q}$ occurs with small probability.  Indeed,
     \[ 
     \Pr[\calE] \leq \sum_{\ell \in [d]} \sum_{q \in \etanet} \Pr[\calE_{\ell, q}] 
     \leq d \cdot |\etanet| \cdot 2^{-\lambda}
    = d \cdot 2^{O(d \log d)} \cdot 2^{-C \cdot d \log(d+1)}
    \leq 0.01. \]

By the probabilistic method, a covering and secluded set $S$ of size $s$ exists for $\torus$. We can now create such a set $\msspset$ for all of $\reals^d$ simply by letting 
$\msspset = S + L \integers^d$.
Since $S + K = \torus$, $\msspset + K = \reals^d$.  Thus, every point in $\reals^d$ is within $\| \cdot \|_{K}$ distance 1 of some point in $\msspset$.  The Voronoi cells of each point in $\msspset$ with respect to
$\|\cdot\|_K$-norm yields the MSSP partition
$\calP$ (with scaling factor $\rho = 1$).
\end{proof}
\begin{theorem}
    \label{thm:k_norm_part}
    Let $K \subset \reals^d$ be any symmetric set such that $\| \cdot \|_K$ is a norm. There exists a $k_d$-MSSP for $\| \cdot \|_{K}$, 
    where $k_d(\ell) = O(d \log d \cdot e^{ d/ \ell})$.
\end{theorem}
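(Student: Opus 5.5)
The plan is to reduce to \cref{lem:john_K} via John's Theorem (\cref{thm:john}) and then transport the partition back with the inverse linear map. Since $\|\cdot\|_K$ is a norm and $K$ is symmetric, $K$ is a symmetric convex set. It is bounded because it contains no nontrivial subspace, and it is absorbing. Replacing $K$ by its closure changes neither the norm nor the volume-based arguments, so we may treat $K$ as a symmetric convex body. \cref{thm:john} then gives an invertible linear $T$ with $\calB_2^d(1) \subseteq TK \subseteq \calB_2^d(\sqrt d)$. The set $K' := TK$ is again an absorbing disk with no nontrivial subspace, so $\|\cdot\|_{K'}$ is a norm. The key identity, immediate from \eqref{eq:knorm}, is $\|Tz\|_{K'} = \|z\|_K$ for all $z$, since $Tz \in tTK \iff z \in tK$ by invertibility. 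Thus $T$ is an isometry from $(\reals^d, \|\cdot\|_K)$ onto $(\reals^d, \|\cdot\|_{K'})$.

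Next, apply \cref{lem:john_K} to $K'$ to obtain a $k_d$-MSSP $\cP'$ with representative function $r'$ and scale $\rho$, where $k_d(\ell) = O(d\log d\cdot e^{d/\ell})$. Define $\cP := \{T^{-1}P' : P' \in \cP'\}$ and $r(T^{-1}P') := T^{-1} r'(P')$. Since $T^{-1}$ is a bijection, $\cP$ is a partition of $\reals^d$, and $r(P) \in P$.

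It remains to check the two conditions for each $\ell \in [d]$.

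\textbf{Radius condition.} For $x \in T^{-1}P'$ we have $\|r(P)-x\|_K = \|r'(P') - Tx\|_{K'} \le \rho$.

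\textbf{Secludedness.} A $\|\cdot\|_K$-ball $B$ of radius $\rho/\ell$ centered at $c$ is mapped by $T$ exactly onto the $\|\cdot\|_{K'}$-ball of radius $\rho/\ell$ centered at $Tc$. Moreover, $B$ meets $T^{-1}P'$ if and only if $TB$ meets $P'$. Hence $B$ meets at most $k_d(\ell)$ cells of $\cP$.

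So $\cP$ is a $(k_d,\rho)$-MSSP for $\|\cdot\|_K$ with the same $k_d$. No real obstacle is expected. The only care needed is confirming that $K$ meets the hypotheses of \cref{thm:john} (compactness and nonempty interior, handled via absorbing and boundedness, taking the closure if needed), and that $TK$ still makes $\|\cdot\|_{TK}$ a norm, which holds since linear bijections preserve convexity, symmetry, absorbency, and the absence of nontrivial subspaces.
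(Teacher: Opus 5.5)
Your proposal is correct and follows essentially the paper's proof: apply John's theorem, invoke \cref{lem:john_K} for $TK$, and pull cells and representatives back through $T^{-1}$, using $\|z\|_K = \|Tz\|_{TK}$ to transfer both the radius bound and the ball-intersection count. Your extra checks that $K$ is a symmetric convex body fill in details the paper leaves implicit. One small correction: convexity comes from the paper's standing absorbing-disk assumption, not from $\|\cdot\|_K$ being a norm. Alternatively, you can simply replace $K$ by the closed unit ball of the norm, since the MSSP property depends only on the norm.
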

\begin{proof}
    By John's theorem (\cref{thm:john}), there exists an invertible linear transformation $T: \reals^d \rightarrow \reals^d$ such that if we let $\tilde{K} = T(K)$, then
    $\calB^d_2(1) \subseteq \tilde{K} \subseteq \calB_2^d(\sqrt{d}).$
    Observe that $z \in t K$ if and only if $T(z) \in T(tK) = t \cdot T(K)$. Therefore, $\| x \|_K = \| T(x) \|_{\tilde{K}}$, so $T$ preserves distances. 

    By \cref{lem:john_K}, there is 
    a $k_d$-MSSP
$\widetilde{\calP}$ for $\|\cdot\|_{\widetilde K}$, where
$k_d(\ell)=O(d \log d \cdot e^{d/\ell}).$  

Define a partition of $\reals^d$ by
$\calP = \{ T^{-1}(\widetilde P) \mid 
          \widetilde P \in \widetilde\calP \}$
with the representative function $r(T^{-1}(\widetilde P)) = T^{-1} \widetilde r(\widetilde P)$.
We now show that $\calP$ has the two desired properties. First, if $P=T^{-1}(\widetilde P)\in\calP$ and $x\in P$, then $Tx \in \widetilde P$, hence $\|x - r(P)\|_K = \|T(x) - T(r(P))\|_{\widetilde K} \le 1$.

Next, we show the secludedness property. Consider any point $v \in \reals^d$ and let us consider all the partition cells of $\mathcal{P}$ intersected by $v + (1/\ell)K$, in other words, those within distance $1/\ell$ of $v$ in the $\| \cdot \|_K$ norm. It suffices to upper bound
    \begin{align*}
        \Big| \{ P \in \mathcal{P} : P \cap \left( v + \frac{1}{\ell} K \right) \neq \emptyset \}\Big|
        &= \Big| \{ P \in \mathcal{P} : T(P) \cap T\left( v + \frac{1}{\ell} K \right) \neq \emptyset \}\Big|\\
        &= \Big| \{ \tilde{P} \in \tilde{\mathcal{P}} : 
        \tilde{P} \cap \left( T(v) + \frac{1}{\ell} \tilde{K} \right) \neq \emptyset \}\Big| 
        \leq O(d \log d \cdot e^{d/\ell}).
    \qedhere
    \end{align*}
\end{proof}
We remark that our construction of an MSSP for $\| \cdot \|_K$ is essentially the same as that of random coverings studied by Rogers~\cite{Rogers_1957}, who considered the question of how to cover $\R^d$ using $K$ with minimum \emph{density}. Our construction can be viewed as placing a copy of $K$, centered at all points in $\msspset$. The fact that $\msspset + K = \R^d$ ensures that this is a covering of the space. However, our objective is different. A more related objective comes from the follow up work of Erd\H{o}s and Rogers~\cite{Erdos1962}, who aimed to bound the \emph{multiplicity} (i.e., number of times it is covered) of any point in $\reals^d$. This is similar to our objective when the threshold distance $\gamma \cdot \rho$ approaches zero. However, it still does not account for the case of general (and multiple) values of $\gamma$. Thus, we are not aware of any method that recovers an MSSP directly from prior work.

\subsection{\boldmath Explicit Construction for the \texorpdfstring{$\ell_\infty$}{L-infty}-norm}
In this section, we give an explicit construction of a rounding scheme that implicitly defines an MSSP induced by the pre-image of the rounding scheme. Though this scheme has slightly worse parameters than those of \cref{subsec:k-norm}, it has the advantage of being explicit and has an efficient rounding function. Indeed, the rounding function can be implemented in polynomial time, and this will be key to our construction of a polynomial time, low randomness mechanism later in \cref{sec:efficient}.

\subsubsection{Rounding Scheme and Properties} 
We first define a rounding scheme $\Round$ (\cref{alg: inf_rounding}), which rounds points in $\reals^d$ to a representative point. We can then define the partition induced by $\Round$ by partitioning all $x \in \reals^d$ according to the points to which they would round under $\Round$. See \cref{fig:secluded_partition} for a visualization of the partition induced by \cref{alg: inf_rounding} for $d=2$.  We use $\fr{y} = y - \floor{y}$ to denote the fractional part of $y$.

\begin{figure}[htbp]
\begin{center}
\begin{tikzpicture}[scale=1.8]

\foreach \i in {-2,-1,0,1} {
    \foreach \j in {-2,-1,0,1} {
        
\pgfmathsetmacro{\rL}{0.6 + 0.3*sin(\i*70 + \j*40)}
        \pgfmathsetmacro{\gL}{0.6 + 0.3*sin(\i*40 + \j*80 + 120)}
        \pgfmathsetmacro{\bL}{0.6 + 0.3*sin(\i*80 + \j*30 + 240)}
        \definecolor{colorL}{rgb}{\rL, \gL, \bL}

\filldraw[fill=colorL, draw=white, thick]
            (\i, \j) --
            (\i+1, \j) --
            (\i+1, \j+0.5) --
            (\i+0.5, \j+0.5) --
            (\i+0.5, \j+1) --
            (\i, \j+1) -- cycle;

\pgfmathsetmacro{\rS}{0.6 + 0.3*sin(\i*70 + \j*40 + 180)}
        \pgfmathsetmacro{\gS}{0.6 + 0.3*sin(\i*40 + \j*80 + 300)}
        \pgfmathsetmacro{\bS}{0.6 + 0.3*sin(\i*80 + \j*30 + 60)}
        \definecolor{colorS}{rgb}{\rS, \gS, \bS}

\filldraw[fill=colorS, draw=white, thick]
            (\i+0.5, \j+0.5) rectangle (\i+1, \j+1);

\filldraw[black!80] (\i, \j) circle (0.015);
        \filldraw[black!80] (\i+0.5, \j+0.5) circle (0.015);
    }
}

\draw[thick, ->, >=stealth] (-2.2, 0) -- (2.2, 0) node[right] {$x$};
\draw[thick, ->, >=stealth] (0, -2.2) -- (0, 2.2) node[above] {$y$};

\foreach \x in {-2, -1, 1, 2}
    \draw (\x, 1pt) -- (\x, -1pt) node[anchor=north, font=\small] {$\x$};
\foreach \y in {-2, -1, 1, 2}
    \draw (1pt, \y) -- (-1pt, \y) node[anchor=east, font=\small] {$\y$};
    
\node[anchor=north east, font=\small] at (-1pt, -1pt) {$0$};

\filldraw[black] (-0.8, -0.6) circle (0.015) node[right, font=\footnotesize] {$x_1$};
\draw[->, >=stealth, thick, dashed, shorten >=2pt] (-0.8, -0.6) -- (-1, -1) 
    node[midway, sloped, above, fill=white, fill opacity=0, text opacity=1, inner sep=1pt, font=\scriptsize, rounded corners=1pt] {$R(x_1)$};

\filldraw[black] (1.75, 0.8) circle (0.015) node[right, font=\footnotesize] {$x_2$};
\draw[->, >=stealth, thick, dashed, shorten >=2pt] (1.75, 0.8) -- (1.5, 0.5) 
    node[midway, sloped, above, fill=white, fill opacity=0, text opacity=1, inner sep=1pt, font=\scriptsize, rounded corners=1pt] {$R(x_2)$};

\filldraw[black] (-1.2, 1.25) circle (0.015) node[above right=-2pt, font=\footnotesize] {$x_3$};
\draw[->, >=stealth, thick, dashed, shorten >=2pt] (-1.2, 1.25) -- (-2, 1) 
    node[midway, sloped, above, fill=white, fill opacity=0, text opacity=1, inner sep=1pt, font=\scriptsize, rounded corners=1pt] {$R(x_3)$};

\end{tikzpicture}
\end{center}
    \caption{The induced two-dimensional $\ell_\infty$-MSSP from \cref{alg: inf_rounding}. Included are example points $x_1, x_2, x_3$ along with where they would be rounded under $\Round$.}
    \label{fig:secluded_partition}
\end{figure}

\begin{algorithm}[H]
\caption{$\ell_\infty$ Rounding Scheme $\Round$}
\label{alg: inf_rounding}
\begin{algorithmic}
    \State \textbf{Parameters:} $d \in \integers_{> 0}$
    \State \textbf{Input:} Point $x \in \reals^d$
    \Statex
    \State For each $a \in \{ 0, \dots, d-1 \}$, compute $\Phi_a(x) = \sum_{i=1}^d \{ x_i - \frac{a}{d} \}$

    \State $A(x) \leftarrow \argmin_{a} \Phi_a(x)$ \Comment{break ties by choosing smallest such $a$}

    \For {$i \in [d]$}
        \State $y_i \leftarrow \frac{A(x)}{d} + \left\lfloor x_i - \frac{A(x)}{d} \right\rfloor$
    \EndFor\\
    \Return $(y_1, \dots, y_d)$
\end{algorithmic}
\end{algorithm}

We first show that the $\ell_\infty$-error due to rounding is small.
\begin{lemma}
    \label{lem: inf_movement_bound}
    For all $d \in \integers_{> 0}$ and $x \in \reals^d$, $\| \Round(x) - x \|_\infty \leq 1$.
\end{lemma}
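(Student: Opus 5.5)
This is a coordinatewise statement, and it holds for any choice of the shift $a$, not just the minimizing $A(x)$. So I will fix an arbitrary $a \in \{0,\dots,d-1\}$, set $y_i = \frac{a}{d} + \lfloor x_i - \frac{a}{d} \rfloor$, and bound $|y_i - x_i|$ for each $i \in [d]$ separately. The selection rule for $A(x)$, including tie-breaking, plays no role here; it only matters later for the secludedness property.

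The key step is the identity
\[
x_i - y_i = \Paren{x_i - \tfrac{a}{d}} - \left\lfloor x_i - \tfrac{a}{d} \right\rfloor = \fr{x_i - \tfrac{a}{d}} \in [0,1).
\]
Hence $0 \le x_i - y_i < 1$ for every $i$, so $\|\Round(x) - x\|_\infty = \max_i |x_i - y_i| < 1 \le 1$. Applying this with $a = A(x)$ gives the lemma.

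I do not expect any real obstacle. The only care needed is to note that the floor is applied to the shifted value $x_i - a/d$ and the shift is then added back. This makes the fractional-part identity exact, so the rounding moves each coordinate downward by less than one. As a byproduct, $\Round(x) \le x$ coordinatewise, which may be useful in later arguments about the cells.
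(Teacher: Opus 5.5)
Your proof is correct and is essentially the paper's argument. The paper likewise notes that for any $0 \le z < 1$ one has $x_i - 1 \le \lfloor x_i - z \rfloor + z \le x_i$, then sets $z = A(x)/d$; this is exactly your fractional-part identity, and it likewise does not depend on how $A(x)$ is chosen.
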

\begin{proof}
    We will show that for each $i \in [d]$, $|y_i - x_i| \leq 1$. Note that for all $0 \leq z < 1$, $x_i - 1 \leq \lfloor x_i - z \rfloor + z \leq x_i$. Setting $z = A(x)/d$ confirms $|y_i - x_i| \leq 1$.
\end{proof}
We now prove the key properties underlying the fact that $\Round$ induces a good MSSP: that if $A(x) = a$, then not many coordinates $i$ can have $\{ x_i - a/d\}$ close to $1$.  For the remainder of this section, we fix $d \in \integers_{> 0}$, $x \in \reals^d$ and use the notation from \cref{alg: inf_rounding}.
\begin{lemma}
    \label{lem:phi_diff}
    For all $a, b \in \{0, \ldots, d-1\}$, if $b \equiv a - m \pmod d$ for some $m \in \{0, \dots, d-1 \}$, then
    \[ \Phi_b(x) - \Phi_a(x) = m - \left| \left\{ i : \fr{x_i - \frac{a}{d}} \ge 1 - \frac{m}{d} \right\} \right| .\]
    
\end{lemma}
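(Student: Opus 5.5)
The plan is a coordinate-by-coordinate comparison of fractional parts, followed by summing over $i$. Set $u_i := \fr{x_i - \frac{a}{d}} \in [0,1)$. Because $b \equiv a - m \pmod d$, the shifts differ by $\frac{b}{d} - \frac{a}{d} \equiv -\frac{m}{d} \pmod 1$. The fractional part only depends on its argument modulo $1$, so
\[ \fr{x_i - \tfrac{b}{d}} = \fr{x_i - \tfrac{a}{d} + \tfrac{m}{d}} = \fr{u_i + \tfrac{m}{d}}. \]
This lets us write $\Phi_b(x) - \Phi_a(x) = \sum_{i=1}^d \left( \fr{u_i + \frac{m}{d}} - u_i \right)$, and the task reduces to evaluating each summand.

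Next I evaluate the summand using $u_i \in [0,1)$ and $\frac{m}{d} \in [0,1)$, which give $u_i + \frac{m}{d} \in [0,2)$. There are two cases.
\begin{itemize}
\item If $u_i < 1 - \frac{m}{d}$, then $u_i + \frac{m}{d} < 1$. The fractional part is $u_i + \frac{m}{d}$ itself, so the summand equals $\frac{m}{d}$.
\item If $u_i \ge 1 - \frac{m}{d}$, then $u_i + \frac{m}{d} \in [1,2)$. The fractional part is $u_i + \frac{m}{d} - 1$, so the summand equals $\frac{m}{d} - 1$.
\end{itemize}

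Summing over all $d$ coordinates, each contributes $\frac{m}{d}$, for a total of $m$. Each coordinate in the set $\left\{ i : \fr{x_i - \frac{a}{d}} \ge 1 - \frac{m}{d} \right\}$ additionally subtracts $1$. This gives exactly the claimed identity.

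No step here is a real obstacle. The only point needing care is the first reduction: the congruence must be read modulo $d$ on the indices, which becomes modulo $1$ on the shifts $\frac{a}{d}, \frac{b}{d}$. The case $m = 0$ (so $b = a$) is consistent: the threshold becomes $u_i \ge 1$, which never holds, and both sides are $0$.
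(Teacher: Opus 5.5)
Your proof is correct and matches the paper's argument: it likewise sets $t_i = \fr{x_i - \frac{a}{d}}$, uses periodicity to get $\fr{x_i - \frac{b}{d}} = \fr{t_i + \frac{m}{d}} = t_i + \frac{m}{d} - \one_{t_i \ge 1 - m/d}$, and sums over $i$. Your two-case evaluation is simply that indicator identity written out case by case.
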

\begin{proof}
    Let $t_i = \fr{x_i - \frac{a}{d}} \in [0,1)$.  By periodicity, 
    \[
    \fr{x_i - \frac{b}{d}} 
    = \fr{t_i + \frac{m}{d}}
    = t_i + \frac{m}{d} - \one_{t_i \ge 1 - m/d},
    \]
    where $\one$ denotes the  indicator function.  Now, the change in $\Phi$ is exactly
    \begin{align*}
        \Phi_b(x) - \Phi_a(x) 
      &= \sum_{i=1}^d \left( \left\{ x_i - \frac{b}{d} \right\} - \left\{ x_i - \frac{a}{d} \right\} \right)
      = \sum_{i=1}^d \left( \fr{t_i + \frac{m}{d}} - t_i \right)\\ 
      &= \sum_{i=1}^d \left( \frac{m}{d} - \one_{t_i \ge 1 - \frac{m}{d}} \right) 
      = m - \left| \left\{ i : t_i \ge 1 - \frac{m}{d} \right\} \right|.
      \qedhere
    \end{align*}
\end{proof}
\begin{lemma}
\label{lem:balancing}
For every \(\theta\in[0,1]\),
\[
  \left|
  \left\{
  i\in [d]:
  \fr{x_i-\frac{A(x)}{d}}\ge 1-\theta
  \right\}
  \right|
  \le
  d\theta+1.
\]
\end{lemma}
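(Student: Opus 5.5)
The plan is to combine the minimality of $\Phi_{A(x)}$ with \cref{lem:phi_diff}. Write $a = A(x)$ and let
\[ N(\theta) := \left|\left\{ i : \fr{x_i - \tfrac{a}{d}} \ge 1-\theta \right\}\right| . \]
For each $m \in \{0,\dots,d-1\}$, let $b \equiv a - m \pmod d$. Since $a$ minimizes $\Phi$, we have $\Phi_b(x) - \Phi_a(x) \ge 0$. By \cref{lem:phi_diff} this gives
\[ N(m/d) \le m \qquad \text{for all } m \in \{0,\dots,d-1\}. \]
This bound on the grid of multiples of $1/d$ is the core of the argument. Ties are harmless here, because we only use $\Phi_b \ge \Phi_a$ and not strict inequality.

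Next I would extend the bound from the grid to arbitrary $\theta \in [0,1]$ using monotonicity: $N(\theta)$ is nondecreasing in $\theta$. There are two cases.

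\emph{Case $\theta < (d-1)/d$.} Set $m = \lceil d\theta \rceil \le d-1$. Then $\theta \le m/d$, so
\[ N(\theta) \le N(m/d) \le m \le d\theta + 1 . \]

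\emph{Case $\theta \ge (d-1)/d$.} Here we use the trivial bound $N(\theta) \le d \le d\theta + 1$, which holds because $d\theta \ge d-1$.

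Together the two cases cover all of $[0,1]$, including $\theta = 1$. The only step needing care is choosing the rounding of $d\theta$ so that $m$ stays within $\{0,\dots,d-1\}$, the range where \cref{lem:phi_diff} applies; the split into the two cases above is exactly what handles this.
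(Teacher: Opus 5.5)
Your proposal is correct and follows essentially the same route as the paper: you derive $N(m/d)\le m$ for $m\in\{0,\dots,d-1\}$ from the minimality of $\Phi_{A(x)}$ via \cref{lem:phi_diff}, then take $m=\lceil d\theta\rceil$, with the top boundary case covered by the trivial bound $N(\theta)\le d\le d\theta+1$. The only cosmetic difference is where the boundary case is cut off: the paper splits on whether $\lceil d\theta\rceil = d$, while you split on whether $\theta \ge (d-1)/d$.
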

\begin{proof}
Let $a = A(x)$ and let $t_i = \fr{x_i - \frac{a}{d}} \in [0,1)$. Consider any $b, m \in [0, d-1]$ such that $b \equiv a - m \pmod d$. Since $a$ minimizes $\Phi_a(x)$, we have $\Phi_b(x) - \Phi_a(x) \geq 0$. By \cref{lem:phi_diff},
\[ m - \left| \left\{ i : t_i \ge 1 - \frac{m}{d} \right\} \right| = \Phi_b(x) - \Phi_a(x) \geq 0 .\]
This yields $\left| \left\{ i : t_i \ge 1 - \frac{m}{d} \right\} \right| \le m$. For any $\theta \in [0,1]$, choose $m = \ceil{d\theta}$. Note that if $m = d$, the lemma is trivially true since this would imply $d \theta+1 \geq d$. Therefore, we assume $m \in [0, d-1]$. Since $\frac{m}{d} \ge \theta$, any coordinate $i$ with $t_i \ge 1 - \theta$ also satisfies $t_i \ge 1 - \frac{m}{d}$. Thus, the number of such coordinates is at most $m = \ceil{d\theta} \le d\theta + 1$.
\end{proof}

\subsubsection{Bounding Intersected Partition Cells}

We now prove that any sufficiently small ball only intersects relatively few partition cells of the partition induced by $\Round$, satisfying 
the second condition of an MSSP.  In this section, let $\calB_\infty(x; r)$ denote the $\ell_\infty$-ball of radius $r$ centered at $x$.

\begin{lemma}
    \label{lem: inf_intersect_bound}
    Fix $d \in \integers_{> 0}$, $x \in \reals^d$, and define $\Round$ as in \cref{alg: inf_rounding}. For every $\ell \in [d]$,
    \[ |\Round(\calB_\infty(x; 1/\ell)) | \leq (\ell+1)^{O(d/\ell)}. \]
\end{lemma}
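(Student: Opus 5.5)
The plan is to count the possible outputs of $\Round$ directly. An output $\Round(z)$ is determined by two pieces of data: the shift $a = A(z) \in \{0,\dots,d-1\}$, and the integer vector $(\lfloor z_i - a/d \rfloor)_{i \in [d]}$. So I will fix $a$, bound the number of floor vectors that can occur among points $z \in \calB_\infty(x;1/\ell)$ with $A(z)=a$, and then multiply by $d$. Throughout, write $r = 1/\ell$.

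First I dispose of small $\ell$. If $\ell = 1$, each coordinate $z_i - a/d$ ranges over an interval of length $2$. Hence each floor takes at most $3$ values, and the count is at most $d \cdot 3^d = 2^{O(d)}$, which is $(\ell+1)^{O(d/\ell)}$. The same crude argument handles any constant $\ell$. So assume $\ell \ge 2$, which gives $2r \le 1$; for $\ell = 2$ the endpoint case is harmless and I would handle it by the same crude count.

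Now fix $a$. For each $i$, the value $z_i - a/d$ lies in the closed interval $J_i = [x_i - a/d - r,\, x_i - a/d + r]$, which has length $2r \le 1$. Apart from the endpoint case, $J_i$ contains at most one integer. If $J_i$ contains no integer, then $\lfloor z_i - a/d\rfloor$ is the same for every such $z$. If $J_i$ contains an integer $n_i$, call $i$ \emph{ambiguous}; then the floor is either $n_i$ or $n_i - 1$. The floor equals $n_i - 1$ only when $z_i - a/d \in [n_i - 2r, n_i)$, which means $\fr{z_i - a/d} \ge 1 - 2r$.

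This is where \cref{lem:balancing} enters. Since $A(z) = a$, applying the lemma to $z$ with $\theta = 2r$ shows that at most $2d/\ell + 1$ coordinates satisfy $\fr{z_i - a/d} \ge 1-2r$. Therefore, for fixed $a$, the output is determined by a subset of $[d]$ of size at most $2d/\ell + 1$: namely, the ambiguous coordinates that take the lower value. The number of such subsets is $\sum_{j \le 2d/\ell + 1}\binom{d}{j} \le (e\ell)^{O(d/\ell)} = (\ell+1)^{O(d/\ell)}$.

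Multiplying by the $d$ choices of $a$ costs nothing asymptotically, because $d \le (1+d/t)^{t}$ with $t = d/\ell \ge 1$, i.e.\ $d \le (\ell+1)^{d/\ell}$. Finally, the cells of the partition induced by $\Round$ that meet the ball are exactly the preimages of the points in $\Round(\calB_\infty(x;1/\ell))$, so this count is also the number of cells intersected.

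The main point needing care is the observation that only the \emph{lower} floor choice is constrained by \cref{lem:balancing}. Coordinates may take the upper value freely, and the argument works because the lemma bounds the number of coordinates whose fractional part is near $1$, which are exactly those rounded to the lower value. I also need to check the boundary case $2r = 1$, where $J_i$ may contain two integers; I would handle it by the crude $3^d$ bound or by using half-open intervals.
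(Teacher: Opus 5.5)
Your proposal is correct and follows essentially the same route as the paper: it handles $\ell \le 2$ by the crude $d \cdot 3^d$ count, and for $\ell \ge 3$ it uses the following chain. For a fixed shift $a$ each coordinate has at most two floor values, taking the lower one forces $\fr{z_i - a/d} \ge 1 - 2/\ell$, and \cref{lem:balancing} with $\theta = 2/\ell$ then limits such coordinates to about $2d/\ell + 1$, so a subset count finishes the bound. Your Bernoulli-inequality remark $d \le (\ell+1)^{d/\ell}$ justifies absorbing the factor $d$ from the choice of shift, which the paper only asserts.
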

\begin{proof}
Let $r=1/\ell$. For any fixed shift $a \in [0, d-1]$ for any point $z \in \calB_\infty(x; r)$, the shifted values $z_i - a/d$ lie in an interval of length $2r = 2/\ell$. We handle two cases based on this length.

(i) \emph{Small $\ell$: $\ell \le 2$.} 
Since the interval length is $2/\ell \le 2$, for any fixed $a$, there are at most three possible values for $\frac{a}{d} + \lfloor z_i - \frac{a}{d} \rfloor$ over $z \in \calB_\infty(x; r)$. So across the $d$ possible shifts $a$, the total number of vectors in $\Round(\calB_\infty(x; 1/\ell))$ is at most $d \cdot 3^d$. 
Because $\ell \le 2$, we have $d \le 2(d/\ell)$, so $d \cdot 3^d \le 2^{O(d/\ell)} \le (\ell+1)^{O(d/\ell)}$.

(ii) \emph{Large $\ell$: $\ell \ge 3$.}
Here, the interval length $2r = 2/\ell < 1$. Therefore, for each fixed shift $a$, over $z \in \calB_\infty(x; r)$, after rounding, each coordinate takes at most two adjacent values for $\frac{a}{d} + \lfloor z_i - \frac{a}{d} \rfloor$. Let $M_i(a) = \max_{y \in \calB_\infty(x; r)} \floor{y_i - a/d}$.

Consider a point $z \in \calB_\infty(x; r)$ that selects shift $A(z) = a$. Say $\floor{z_i - a/d}$ does not take the maximum floor value $M_i(a)$. Then, we must have $\floor{z_i - a/d} = M_i(a) - 1$. Furthermore, by the definition of $M_i(a)$, $\floor{x_i + 1/\ell -a/d} = M_i(a)$, which implies $x_i + 1/\ell - a/d \geq M_i(a)$. Since $z_i + 1/\ell \geq x_i$, we have $z_i + 2/\ell -a/d \geq M_i(a)$. Rearranging leaves us with $z_i - a/d \geq M_i(a) - 2/\ell$. This, together with the fact that $\floor{z_i - a/d} = M_i(a)-1$, implies that the fractional part of $z_i - a/d$ is tightly bounded:
\[
  \fr{z_i - \frac{a}{d}} = \left(z_i - \frac{a}{d}\right) - (M_i(a) - 1) \ge 1 - \frac{2}{\ell}.
\]
So we have established that if $\floor{z_i - a/d} = M_i(a)-1$, then $\{ z_i - a/d \} \geq 1-2/\ell$. So by \cref{lem:balancing}
with $\theta = 2/\ell$, for any $z \in \calB_{\infty}(x; r)$, there exist at most $k = \floor{2d/\ell} + 1$ coordinates $i \in [d]$ such that $\lfloor z_i - a/d \rfloor = M_i(a) - 1$. 

Thus, $\Round(z)$ is completely determined by $a$ and this subset of at most $k$ coordinates. Since $3 \leq \ell  \le d$, we have $k \le 2d/\ell + d/\ell = 3d/\ell \le d$. Summing over all $d$ shifts and using the standard binomial tail bound $\sum_{j=0}^k \binom{d}{j} \le \left(\frac{ed}{k}\right)^k$, the number of values $\Round(z)$ takes on over $z \in \calB_{\infty}(x; r)$ is at most
\[
  |\Round(\calB_\infty(x; r))| 
  \le d \sum_{j=0}^k \binom{d}{j} 
  \leq d \sum_{j=0}^{3d/\ell} \binom{d}{j} 
  \le d \left(\frac{ed}{3d/\ell}\right)^{3d/\ell} 
  = d \left(\frac{e\ell}{3}\right)^{3d/\ell} 
  \le d \ell^{100 d/\ell} = (\ell+1)^{O(d/\ell)}.
\qedhere
\]
\end{proof}
By combining
 \cref{lem: inf_movement_bound} and \cref{lem: inf_intersect_bound}, we conclude:
\begin{theorem}\label{thm: inf_main}
For any \(d \in \integers_{> 0} \), the partition induced by \cref{alg: inf_rounding} is a $k_d$-MSSP for $k_d(\ell) = (\ell+1)^{O(d/\ell)}$.
\end{theorem}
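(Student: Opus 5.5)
The plan is to check the two conditions of \cref{conj:multiscale} directly, with scale $\rho = 1$, using the partition $\cP$ induced by $\Round$. Its cells are the nonempty preimages $\Round^{-1}(y)$ for $y$ in the image of $\Round$, each with representative $r(\Round^{-1}(y)) := y$.

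The first step is to confirm that this is a legitimate secluded partition, which requires $r(P) \in P$, i.e., $\Round(y) = y$ whenever $y = \Round(x)$. If $y = \Round(x)$ with $a = A(x)$, then every coordinate satisfies $y_i - a/d \in \integers$. Hence $\fr{y_i - a/d} = 0$ for all $i$, so $\Phi_a(y) = 0$, which is the minimum possible value since each $\Phi_b$ is nonnegative. So $a$ is a minimizer of $\Phi_\cdot(y)$. The tie-breaking rule could in principle select a smaller $a'$ with $\Phi_{a'}(y) = 0$. In that case $y_i - a'/d \in \integers$ as well, so $\Round(y)_i = a'/d + (y_i - a'/d) = y_i$ and we still get $\Round(y) = y$. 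I expect this fixed-point check to be the only step where some care is needed.

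The covering condition then follows at once from \cref{lem: inf_movement_bound}: for any cell $P$ and any $x \in P$,
\[
\|r(P) - x\|_\infty = \|\Round(x) - x\|_\infty \le 1 = \rho .
\]

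For secludedness at scale $\gamma = 1/\ell$, fix $\ell \in [d]$ and a ball $\calB_\infty(v; 1/\ell)$. A cell $\Round^{-1}(y)$ meets this ball exactly when $y \in \Round(\calB_\infty(v; 1/\ell))$. So the number of cells meeting the ball equals $|\Round(\calB_\infty(v;1/\ell))|$, which \cref{lem: inf_intersect_bound} bounds by $(\ell+1)^{O(d/\ell)}$, uniformly in $v$ and $\ell$. Setting $k_d(\ell) = (\ell+1)^{O(d/\ell)}$ with the constant from that lemma finishes the argument.
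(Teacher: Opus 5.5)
Your proposal is correct and matches the paper's proof, which likewise just combines \cref{lem: inf_movement_bound} (covering at scale $\rho = 1$) with \cref{lem: inf_intersect_bound} (secludedness, since the cells meeting the ball correspond exactly to the points of $\Round(\calB_\infty(v;1/\ell))$). Your explicit check that $\Round$ fixes its image, so that $r(P) \in P$, is a detail the paper leaves implicit, and the tie-breaking case you guard against cannot actually occur: if both $y_i - a/d$ and $y_i - a'/d$ are integers, then $(a-a')/d \in \integers$, which forces $a = a'$.
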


\subsection{Tail Bounds from MSSPs}

Our definition of an MSSP with scaling factor $\rho$ only tells us how many partition cells can intersect a ball of discrete radius $\rho/\ell$ for $\ell \in [d]$.  It does not bound the number of partition cells a ball of real-valued radius more than $\rho/d$ may intersect. Although the definition of an MSSP does not bound this quantity, we can bound it by using the properties of an MSSP. 

Recall that $\Delta_P(x) := \inf_{z \in P} \| x - z\|$. For a fixed $y \in \reals^d$, let 
\[
\cP_{< \gamma}(y) = \{ P \in \calP \mid \Delta_P(y) < \gamma \cdot \rho \}.
\]
The case $\gamma \in [1/d, 1]$ is easy.
\begin{lemma}
\label{lem:extending_partitions_small}
     Let $\calP$ be a $k_d$-MSSP for $k_d(\ell) = m(d) \cdot e^{\lambda(d)/\ell}$, where
    $m(\cdot)$ and $\lambda(\cdot)$ are functions.
     For any $y \in \reals^d$ and for any $\gamma \in [1/d, 1]$, $|\cP_{< \gamma}(y)| \leq m(d) \cdot e^{2 \lambda(d) \cdot \gamma}$.
\end{lemma}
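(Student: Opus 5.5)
The plan is to reduce to the discrete scales $1/\ell$ that appear in the definition of an MSSP by rounding $\gamma$ up to the nearest reciprocal of an integer. Given $\gamma \in [1/d, 1]$, I set $\ell = \floor{1/\gamma}$. Since $1 \le 1/\gamma \le d$, we have $\ell \in [d]$, so the MSSP guarantee applies at this scale. Moreover $\ell \le 1/\gamma$, so $\gamma \le 1/\ell$. Hence every cell $P$ with $\Delta_P(y) < \gamma\rho$ also satisfies $\Delta_P(y) < \rho/\ell$.

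Such a cell contains a point within distance strictly less than $\rho/\ell$ of $y$. It therefore intersects the $\|\cdot\|$-ball of radius $\rho/\ell$ around $y$, whether that ball is taken closed or open. Because $\calP$ is a $(k_d(\ell), 1/\ell; \rho)$-secluded partition, at most $k_d(\ell) = m(d)\, e^{\lambda(d)/\ell}$ cells can intersect this ball. This gives
\[
|\cP_{<\gamma}(y)| \le m(d)\, e^{\lambda(d)/\ell}.
\]

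It remains to compare $1/\ell$ with $\gamma$. From $1/\gamma \ge 1$ we get $\floor{1/\gamma} \ge \frac{1}{2}\cdot\frac{1}{\gamma}$, because $\floor{t} \ge t/2$ for every $t \ge 1$. Therefore $1/\ell \le 2\gamma$, and so $e^{\lambda(d)/\ell} \le e^{2\lambda(d)\gamma}$, assuming $\lambda(d) \ge 0$. That assumption holds in every intended application, and if $\lambda(d) < 0$ one could simply replace $\lambda$ by $\max(\lambda,0)$.

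There is no real obstacle. The only step that needs care is the choice of integer scale: $\ell$ must be small enough that the ball of radius $\rho/\ell$ contains the open ball of radius $\gamma\rho$, yet large enough that $1/\ell$ loses at most a factor $2$ against $\gamma$. The floor $\ell = \floor{1/\gamma}$ achieves both.
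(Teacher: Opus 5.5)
Your proof is correct and is essentially the paper's argument: set $\ell = \lfloor 1/\gamma \rfloor \in [d]$, note $\gamma \le 1/\ell \le 2\gamma$, and chain $|\cP_{<\gamma}(y)| \le |\cP_{<1/\ell}(y)| \le k_d(\ell) \le m(d)\, e^{2\lambda(d)\gamma}$. Your explicit assumption $\lambda(d) \ge 0$ is one the paper uses silently, and it is genuinely needed: for $\lambda(d)<0$ the stated bound can fail, and passing to $\max(\lambda,0)$ only gives the weaker bound $m(d)$.
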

\begin{proof}
  Since $\gamma \leq 1$, we have $1/(2\gamma) \leq \lfloor 1/ \gamma \rfloor := \ell$. Furthermore, $1/\ell \ge \gamma$. For any $y$,
    \[ |\cP_{< \gamma}(y)| 
    ~\le~ |\cP_{<1/\ell}(y)|
    ~\le~ m(d) \cdot e^{\frac{\lambda(d)}{\ell}}
    ~\le~ m(d) \cdot e^{2 \lambda(d) \cdot \gamma}. \qedhere\]
\end{proof}
Next we address the case $\gamma > 1$.
\begin{lemma}
    \label{lem:extending_partitions_large}
    If $\calP$ is a $k_d$-MSSP with respect to $\| \cdot \|_K$, then 
    for any $y \in \reals^d$ and $\gamma \geq 1$, 
    $|\cP_{< \gamma}(y)| \leq k_d(1) \cdot (3 \gamma)^d$.
\end{lemma}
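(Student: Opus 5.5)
The plan is a covering argument: cover the ball of radius $\gamma\rho$ around $y$ by few balls of radius $\rho$, and apply the $\ell=1$ secludedness property to each. Write $B(z;r) := z + rK$ for the $\|\cdot\|_K$-ball, so that $\Delta_P(y) < \gamma\rho$ says exactly that $P$ intersects $B(y;\gamma\rho)$.

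The steps are as follows.

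\textbf{Step 1: a net of the large ball.} Choose a $\rho$-net $A$ of $B(y;\gamma\rho)$ with respect to $\|\cdot\|_K$. By scaling, this is the same as a $(1/\gamma)$-net of $K$ with respect to $\|\cdot\|_K$. Apply \cref{lem: covering_num_bound} with $B_1 = B_2 = K$ and $\delta = 1/\gamma$. This gives
\[
|A| \le \left(1 + 2\gamma\right)^d \le (3\gamma)^d,
\]
where the last inequality uses $\gamma \ge 1$. If one prefers to avoid invoking the lemma with equal balls, the standard volumetric argument on a maximal $\rho$-separated set gives the same bound $(1+2\gamma)^d$.

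\textbf{Step 2: reduce each cell to a small ball.} Let $P$ be a cell with $\Delta_P(y) < \gamma\rho$. Pick a point $z \in P$ with $\|z - y\|_K < \gamma\rho$. Then $z$ lies in $B(y;\gamma\rho)$, so some $a \in A$ satisfies $\|z - a\|_K \le \rho$. Hence $P$ intersects $B(a;\rho)$.

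\textbf{Step 3: count.} The MSSP property with $\ell = 1$ says every ball of radius $\rho$ intersects at most $k_d(1)$ cells. A union bound over $a \in A$ then gives
\[
|\cP_{<\gamma}(y)| \le |A| \cdot k_d(1) \le k_d(1)\,(3\gamma)^d.
\]

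Steps 2 and 3 are routine. The only care is needed in Step 1: the net size must be bounded by exactly $(1+2/\delta)^d$ with $\delta = 1/\gamma$, and closedness or strictness of the ball does not matter because the net covers the closed ball.
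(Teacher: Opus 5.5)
Your proof is correct and is essentially the paper's argument: cover the radius-$\gamma\rho$ ball around $y$ by radius-$\rho$ balls centered at the points of a net from the volumetric covering lemma, apply the $\ell = 1$ secludedness bound $k_d(1)$ to each of these balls, and sum. The only difference is how you parametrize the covering lemma: you take $B_1 = B_2 = K$ and $\delta = 1/\gamma$, obtaining $(1+2\gamma)^d \le (3\gamma)^d$, whereas the paper takes $B_1 = \gamma K$, $B_2 = K$ and $\delta = 1$ to get $3^d\gamma^d$ directly.
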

\begin{proof}
Assume without loss of generality that $y = \vec{0}$. We may also rescale the grid to $\rho = 1$ without affecting $|\cP_{< \gamma}(y)|$, so assume $\rho = 1$. Let $\delta = 1$, $B_1 = \gamma K$, and $B_2 = K$. By \cref{lem: covering_num_bound}, $\gamma K$ can be covered by a $\delta$-net of size at most $3^d \gamma^d$. 
    Now place a $\| \cdot \|_K$ ball of radius $\rho$ around each $\delta$-net point. 
    Furthermore, the MSSP bound guarantees that each ball of radius $\rho$ intersects at most $k_d(1)$ cells. Summing over all of the points in the $\delta$-net, $\gamma K$ intersects at most $k_d(1) \cdot 3^d \gamma^d$ partition cells.
\end{proof}

We now show the useful property of good MSSPs with proper scaling $\rho$: if a partition cell $P$ is sampled
 with probability proportional to $\exp(-\frac{\varepsilon}{2} s_P(v))$ for 
 any upper bound $s_P(v) \geq \Delta_P(v)$, then the probability that $P$ is at a distance greater than $\gamma$ from $v$ decays exponentially with $\gamma$. 
\begin{lemma} \label{lem:tail_bound}
    Let $\varepsilon > 0, \alpha \geq 1, M \geq 1, A >0$. Let $A' = \max(2A, A+ 3)$ and $C = 2A' + 2$. Let $\calP$ be a $(k_d, \rho)$-MSSP for $\rho = \frac{C \alpha d}{\varepsilon}$ where $k_d(\ell) \leq M e^{A \alpha d/\ell}$. Let $v \in \mathbb{R}^d$ be arbitrary and $s_P(\cdot)$ be a nonnegative function such that $s_P(v) \geq \Delta_P(v)$ for all $P \in \mathcal{P}$ and $s_{P_0}(v) = 0$ for some $P_0$. Consider sampling $P \in \mathcal{P}$ with probability proportional to $\exp(-\frac{\varepsilon}{2} s_P(v))$. Then, 
    \[ \Pr[P \notin \cP_{< \gamma}] \leq 2 M \cdot e^{(2A+1) \alpha - \alpha d \gamma}, \mbox{ for } \gamma \geq 1/d. \]
\end{lemma}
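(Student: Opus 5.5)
The plan is to bound the probability as a ratio of sums. The normalizing constant is at least $\exp(-\frac{\eps}{2} s_{P_0}(v)) = 1$, so
\[
\Pr[P \notin \cP_{<\gamma}] \le \sum_{P \notin \cP_{<\gamma}} \exp\!\Big(-\tfrac{\eps}{2}\Delta_P(v)\Big),
\]
using $s_P(v) \ge \Delta_P(v)$. I then decompose the cells outside $\cP_{<\gamma}$ into dyadic or arithmetic shells by distance. For $\beta \ge \gamma$, the shell of cells with $\Delta_P(v)/\rho \in [\beta, \beta')$ is contained in $\cP_{<\beta'}(v)$. Each such cell contributes at most $\exp(-\frac{\eps}{2}\beta\rho) = \exp(-\frac{C\alpha d \beta}{2})$.

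To count the cells in each shell, I use \cref{lem:extending_partitions_small} and \cref{lem:extending_partitions_large}.

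\textbf{Range $\beta \in [1/d, 1]$.} I use the shells $\Delta_P/\rho \in [1/(\ell+1), 1/\ell)$ for $\ell = 1, \dots, \lfloor 1/\gamma \rfloor$. The count is at most $k_d(\ell) \le M e^{A\alpha d/\ell}$, and the weight is at most $e^{-C\alpha d/(2(\ell+1))}$. Since $C/2 = A'+1 \ge 2A+1$, we have
\[
\frac{C}{2(\ell+1)} \ge \frac{(2A+1)}{\ell+1} \ge \frac{A}{\ell} + \frac{1}{\ell+1}
\]
for $\ell \ge 1$. This uses $\frac{2A}{\ell+1} \ge \frac{A}{\ell}$, which is equivalent to $\ell \ge 1$. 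So each term is at most $M e^{-\alpha d/(\ell+1)}$, and the terms form an increasing sequence as $\ell$ decreases. A cruder but summable route: the shell term is at most $M\exp(-(A'+1-A)\alpha d/(\ell+1))$. Then I sum over $\ell \le 1/\gamma$ by comparing with $\exp(-\alpha d\gamma)$. The first term ($\ell+1 \le 1/\gamma + 1 \le 2/\gamma$) gives roughly $\exp(-(A'+1-A)\alpha d\gamma/2)$. I'll use $A' \ge 2A$, hence $A'+1-A \ge A+1$, but the factor $1/2$ is annoying. Alternatively, I can use \cref{lem:extending_partitions_small} directly with a continuous shell parametrization: cells with $\Delta_P/\rho \in [\beta, 2\beta)$ number at most $M e^{4A\alpha d\beta}$ and have weight $e^{-(A'+1)\alpha d\beta}$. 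I will need to fiddle with the constant ($A' \ge 2A$ so $A'+1 \ge 2A+1$); if needed, I will use finer shells $[\beta, \beta + 1/d)$, where the count $M e^{2A\alpha d(\beta+1/d)} = M e^{2A\alpha}e^{2A\alpha d\beta}$ explains the $e^{(2A+1)\alpha}$ slack in the statement. Concretely, I take shells $\beta_j = \gamma + j/d$, $j \ge 0$. Each shell has count at most $M e^{2A\alpha d \beta_{j+1}} = M e^{2A\alpha(d\gamma + j+1)}$ and weight at most $e^{-(A'+1)\alpha(d\gamma + j)}$. Since $A'+1 \ge 2A+1$, the product is at most $M e^{2A\alpha} e^{-\alpha(d\gamma+j)}$. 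Summing over $j$ gives at most $M e^{2A\alpha} e^{-\alpha d\gamma}/(1-e^{-\alpha}) \le 2 M e^{(2A+1)\alpha - \alpha d\gamma}$, since $\alpha \ge 1$ gives $1/(1-e^{-1}) \le 2 \le e^{\alpha}$ slack. This handles all shells with $\beta_{j+1} \le 1$.

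\textbf{Range $\beta \ge 1$.} Here I use \cref{lem:extending_partitions_large}: the count is at most $k_d(1)(3\beta')^d \le M e^{A\alpha d}(3\beta')^d$. I take the same shells $\beta_j$ (or unit shells). The weight $e^{-(A'+1)\alpha d\beta}$ beats $e^{A\alpha d}\cdot(3\beta)^d \cdot e^{\alpha d \beta}$ because $A' \ge A+3$ leaves an extra $e^{-3\alpha d \beta} \cdot e^{A\alpha d(1-\beta)}$, which dominates $(3\beta)^d \le e^{d(\ln 3 + \beta - 1)}$. The shells are $[\beta, \beta+1/d)$ or $[n, n+1)$, and I sum geometrically again.

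The main obstacle is the bookkeeping at the shell boundaries, namely the off-by-$1/d$ in the count, and stitching the two regimes so the total stays within $2M e^{(2A+1)\alpha-\alpha d\gamma}$. I would split the budget as the geometric tail in the first regime plus a negligible contribution from the second regime.
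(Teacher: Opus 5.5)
Your final approach is correct and is essentially the paper's proof: drop the normalizer using $s_{P_0}(v)=0$, slice into rings of width $\rho/d$, count cells with \cref{lem:extending_partitions_small} below radius $\rho$ and \cref{lem:extending_partitions_large} above it, and sum a geometric series using $A'\ge 2A$ and $A'\ge A+3$. The one difference is that the paper anchors the rings at $\lfloor \gamma d\rfloor/d$ rather than at $\gamma$, so the switch between the two counting lemmas falls exactly on a ring boundary ($j=d$) and the straddling-shell bookkeeping you flagged disappears. The price is an extra factor $e^{\alpha}$ from $\lfloor\gamma d\rfloor\ge \gamma d-1$, which is where the $+1$ in $(2A+1)\alpha$ comes from.
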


\begin{proof}   
        Let $k = \lfloor \gamma d \rfloor \ge 1$. We bound the tail probability by summing over rings of step size $1/d$:
    \begin{align*}
        \Pr[P \notin \cP_{< \gamma}] 
        \leq \sum_{j=k}^{\infty} \Pr\left[P \in \cP_{< \frac{j+1}{d}} \setminus \cP_{< \frac{j}{d}}\right]
        \leq \sum_{j=k}^\infty |\cP_{< \frac{j+1}{d}}| \cdot \exp \left(-\frac{\varepsilon}{2} \cdot \frac{j}{d} \rho \right)
        = \sum_{j=k}^\infty |\cP_{< \frac{j+1}{d}}| \cdot \exp \left(-\frac{C \alpha}{2} j \right),
    \end{align*}
    where the second inequality follows by observing that, for any ring $j$, the probability of sampling $P$ is at most $\exp(-\frac{\varepsilon}{2} s_P(v))$ (since $s_{P_0}(v) = 0$ causes the normalizing factor to be at least 1), which is at most $\exp(-\frac{\varepsilon}{2} \Delta_P(v))$ since $s_P(v) \geq \Delta_P(v)$, and $\Delta_P(v) \geq \frac{j}{d} \rho$ by assumption. The final equality follows from our choice of $\rho$.
    
    Let $A$ and $B$ be universal constants such that $k_d(\ell) \leq M e^{A \alpha d/ \ell}$. We bound $|\cP_{< \frac{j+1}{d}}|$ in two cases:

        (i) \emph{$j < d$.}  By \cref{lem:extending_partitions_small}, $|\cP_{< \frac{j+1}{d}}| \le M e^{2 A \alpha d \frac{j+1}{d}} = M e^{2 A \alpha (j+1)} = M e^{2 A \alpha} e^{2 A \alpha j}$.

        (ii) \emph{$j \geq d$.} By \cref{lem:extending_partitions_large}, $|\cP_{< \frac{j+1}{d}}| \le k_d(1) \left( 3 \frac{j+1}{d} \right)^d \leq M e^{A \alpha d} \left( 3 \frac{j+1}{d} \right)^d \leq M e^{A \alpha d} \left( 10 \frac{j}{d} \right)^d $. 
        Since $\ln(10 x) \le 3x$ for all $x \ge 1$, we have $d \ln(10 j/d) \le 3j \le 3\alpha j$. Thus, $(10 \frac{j}{d})^d \le e^{3 \alpha j}$, so $|\cP_{< \frac{j+1}{d}}| \leq M e^{A \alpha d} e^{3 \alpha j}$. Combining this with $A \alpha d \le A \alpha j$, $|\cP_{< \frac{j+1}{d}}|$ is strictly bounded by $M e^{A' \alpha j}$.
        
    Since $e^{2 A \alpha} > 1$, in both cases, $|\cP_{< \frac{j+1}{d}}|$ is upper bounded by $M e^{2 A \alpha} e^{A' \alpha j}$. 
    Therefore, by substituting $C = 2A' + 2$,
    \[ |\cP_{< \frac{j+1}{d}}| \cdot \exp \left(-\frac{C \alpha}{2} j \right)
    \leq 
    M e^{2 A \alpha} e^{A' \alpha j} \cdot e^{-(A'+1) \alpha j}
    = M e^{2 A \alpha} e^{- \alpha j},\]
    where the final inequality follows by choosing $C$ sufficiently large.
    Plugging this into our original bound on $\Pr[P \notin \cP_{< \gamma}]$ yields a converging geometric series:
    \[ \Pr[P \notin \cP_{< \gamma}] \leq \sum_{j=k}^\infty M e^{2 A \alpha} e^{- \alpha j} 
    \leq M e^{2 A \alpha} \frac{e^{-\alpha k}}{1 - e^{-\alpha}}
    \leq M e^{2 A \alpha} \frac{e^{-\alpha k}}{1 - e^{-1}}
    \leq 2 M e^{2 A \alpha} e^{-\alpha k}. 
\]
    Since $k = \lfloor \gamma d \rfloor \geq \gamma d - 1$, we have $e^{-\alpha k} \leq e^{-\alpha (\gamma d -1)} = e^{\alpha} e^{-\alpha \gamma d}$. Subtituting this into the above, we obtain
    \[ \Pr[P \notin \cP_{< \gamma}]
    \leq 2 M e^{2 A \alpha} e^{-\alpha k}
    \leq 2M e^{2A \alpha} e^{\alpha} e^{-\alpha \gamma d}
    = 2 M e^{(2A+1) \alpha-\alpha \gamma d}. 
    \qedhere
    \]

\end{proof} \section{Low Randomness Mechanism for any $\|\cdot\|_K$-norm}

In this section, we use 
an MSSP partition to construct a low randomness $\varepsilon$-DP mechanism for any $\| \cdot \|_K$-norm.  To this end, we use
the $k_d$-MSSP partition $\calP$ with respect to the $\|\cdot\|_K$-norm from~\cref{thm:k_norm_part}, where $k_d(\ell) = O(d \log d \cdot e^{d/\ell})$; we set the scaling factor $\rho = C d/\varepsilon$ for a
large constant $C$.  Let 
$r: \calP \rightarrow \mathbb{R}^d$ be the representative
function.  

\begin{theorem}
    \label{thm:k_norm_to_mech}
Let $\varepsilon > 0$.  
Let $f: \mathcal{X}^* \rightarrow \reals^d$ be such that
$\sens_{\| \cdot \|_K}(f) \leq 1$.  There is a mechanism that (i) is $\varepsilon$-DP,
(ii) has expected $\| \cdot \|_K$-error  $O(d/\varepsilon)$, and
(iii) has randomness complexity $O(\log d)$.
\end{theorem}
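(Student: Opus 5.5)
We will use the exponential mechanism over the countable output set $\cO = \calP$, where $\calP$ is the $k_d$-MSSP from \cref{thm:k_norm_part} scaled to $\rho = Cd/\varepsilon$. Given $X$, it outputs $r(P)$ with probability proportional to $\exp\Paren{-\frac{\varepsilon}{2}\scr_P(X)}$, where $\scr_P(X) := \Delta_P(f(X)) = \inf_{z\in P}\|f(X)-z\|_K$.

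\emph{Privacy.} For each $P$, the map $v \mapsto \Delta_P(v)$ is $1$-Lipschitz in $\|\cdot\|_K$ by the triangle inequality: $\Delta_P(v) \le \|v-v'\|_K + \Delta_P(v')$. Hence $\sens(\scr_P) \le \sens_{\|\cdot\|_K}(f) \le 1$, and \cref{thm:exm-dp} gives $\varepsilon$-DP. We also need the normalizer to be finite. Let $P_0$ be the cell containing $f(X)$, so $\scr_{P_0}(X)=0$. Finiteness then follows from the counting bounds below, since the number of cells within distance $\gamma\rho$ grows only polynomially in $\gamma$ by \cref{lem:extending_partitions_large}.

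\emph{Tail bound.} Apply \cref{lem:tail_bound} with $v=f(X)$, $s_P=\Delta_P$, $\alpha=1$, and $k_d(\ell)=O(d\log d\, e^{d/\ell})$. This gives $M = O(d\log d)$ and $A=1$, so $C = 2\max(2,4)+2 = 10$. The conclusion is $\Pr[P\notin\cP_{<\gamma}] \le 2M e^{3-d\gamma}$ for all $\gamma\ge 1/d$.

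\emph{Error.} If the output cell satisfies $P\in \cP_{<\gamma}\setminus \cP_{<\gamma'}$, pick $z\in P$ with $\|f(X)-z\|_K<\gamma\rho$. Then $\|r(P)-f(X)\|_K \le \rho + \gamma\rho$, using the first property of the secluded partition. The expected error is therefore at most
\[
\rho + \rho\,\expect[\gamma^*] \le \rho\Paren{1 + \tfrac{c\log d}{d} + \textstyle\int_{c\log d/d}^\infty 2Me^{3-d\gamma}\,d\gamma},
\]
where $\gamma^*$ is the smallest $\gamma$ with $P\in\cP_{<\gamma}$. With $c$ chosen so that $Me^{-c\log d}\le 1$, the integral is $O(1/d)$. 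The total is $O(\rho) = O(d/\varepsilon)$.

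\emph{Randomness.} By \cref{thm:KY} it suffices to bound the output entropy $H(\cM(X))$ by $O(\log d)$ uniformly in $X$. The output is the label $r(P)$; distinct cells have distinct representatives, so the entropy equals that of the cell distribution. Partition the cells into shells $S_0 = \cP_{<\gamma_0}$ with $\gamma_0 = c\log d/d$, and $S_j = \cP_{<\gamma_0+j/d}\setminus\cP_{<\gamma_0+(j-1)/d}$ for $j\ge1$. Let $q_j$ be the probability of shell $j$. By the chain rule,
\[
H \le H(\{q_j\}) + \sum_j q_j\log|S_j|.
\]
We bound the pieces as follows.
\begin{itemize}
\item By \cref{lem:extending_partitions_small}, $|S_0|\le M e^{2c\log d} = d^{O(1)}$.
\item For $j\ge1$, \cref{lem:tail_bound} gives $q_j \le 2Md^{-c}e^{3-j+1}$, which is at most $e^{-j}$ times a small constant for large $c$.
\item By \cref{lem:extending_partitions_small,lem:extending_partitions_large}, $\log|S_j| = O(\log d + j + d\log(1+j/d)) = O(\log d + j)$.
\end{itemize}
The sum $\sum_j q_j \log|S_j|$ is therefore $O(\log d)$. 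A geometrically decaying $\{q_j\}$ also has $O(1)$ entropy, since the function $-q\log q$ summed over $q_j\le e^{-j}$ converges. So $H=O(\log d)$.

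The main obstacle is this entropy step. In particular, shells with $\gamma>1$ can contain exponentially many cells, and the geometric decay from \cref{lem:tail_bound} must beat the $d\log(\gamma)$ growth from \cref{lem:extending_partitions_large}. I expect this to work because $\log|S_j|$ is linear in $j$ while $q_j$ decays like $e^{-j}$, but the constants (the choice of $C$ and $c$) need care. A smaller point is measurability and tie-breaking of the Voronoi cells, which should not affect the argument since the output space is countable.
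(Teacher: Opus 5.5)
Your proposal is correct and follows the paper's proof essentially step for step. It uses the same exponential mechanism over the cells of the MSSP from \cref{thm:k_norm_part} with $\rho=\Theta(d/\varepsilon)$, gets privacy from the $1$-Lipschitzness of $\Delta_P$, and drives both the utility and the entropy bounds by \cref{lem:tail_bound} together with \cref{lem:extending_partitions_small,lem:extending_partitions_large}.

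The differences are only cosmetic:
\begin{itemize}
\item You start the shells at $\gamma_0=c\log d/d$, so the shell index has $O(1)$ entropy. The paper instead uses $Z=\lceil d\,\Delta_P(f(X))/\rho\rceil$ with $H(Z)\le\mathbb{E}[Z]+O(1)$.
\item You integrate the tail continuously in units of $\rho$, which gives error $\rho+\mathbb{E}[\Delta_P(f(X))]$. This incidentally avoids the additive constant that the integer-step sum in \cref{lem:utility} incurs when $\varepsilon$ is large.
\end{itemize}
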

This immediately implies low randomness mechanisms for linear queries in
the $\ell_p$-norm.
\begin{corollary} \label{cor:lp-mech}
    Let $p \geq 1$ and let $\varepsilon > 0$.  Let $f: \mathcal{X}^* \rightarrow \mathbb{R}^d$ be such that $\sens_{\ell_p}(f) \leq 1$.  There is a mechanism that (i) 
    is $\varepsilon$-DP, (ii) 
    has expected $\ell_p$-error  $O(d/\varepsilon)$, and 
    (iii) has randomness complexity $O(\log d)$.
\end{corollary}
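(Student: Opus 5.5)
The corollary should follow by specializing \cref{thm:k_norm_to_mech} to $K = \calB_p^d(1)$, the unit $\ell_p$-ball. First check that this $K$ is admissible for the paper's standing assumptions. It is convex, symmetric, and absorbing (every $z$ lies in $\|z\|_p \cdot K$), and it is bounded, so it contains no nontrivial subspace. Moreover, $\|\cdot\|_K$ coincides with $\|\cdot\|_p$: since $K$ is the unit ball of $\|\cdot\|_p$, we have $z \in tK$ iff $\|z\|_p \le t$, and therefore $\inf\{t \ge 0 : z \in tK\} = \|z\|_p$. This holds for every $p \in [1,\infty]$, including $p = \infty$, where $K$ is the cube.

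With this identification, the hypothesis $\sens_{\ell_p}(f) \le 1$ is exactly $\sens_{\|\cdot\|_K}(f) \le 1$. Applying \cref{thm:k_norm_to_mech} then yields a mechanism that is $\eps$-DP and has randomness complexity $O(\log d)$. Its expected $\|\cdot\|_K$-error is $O(d/\eps)$, which is the same as expected $\ell_p$-error $O(d/\eps)$.

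All of the real work is in \cref{thm:k_norm_to_mech}; its underlying MSSP from \cref{thm:k_norm_part} already applies to any symmetric $K$ inducing a norm. The one point needing care is that the hidden constants in \cref{thm:k_norm_to_mech} do not depend on $K$, and hence not on $p$. That theorem's bounds come from \cref{thm:k_norm_part}, which supplies $k_d(\ell) = O(d\log d \cdot e^{d/\ell})$ uniformly in $K$ via John's theorem, together with the tail bound \cref{lem:tail_bound}, whose constants depend only on $A$, $M$, $\alpha$. So the $O(\cdot)$ bounds are uniform in $p$, and the corollary follows directly.
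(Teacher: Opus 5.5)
Your proposal is correct and follows the paper's own route: the paper treats the corollary as an immediate instantiation of \cref{thm:k_norm_to_mech} with $K = \calB_p^d(1)$, so that $\|\cdot\|_K = \|\cdot\|_p$. Your extra checks are sound and make explicit what the paper leaves implicit. These are that this $K$ satisfies the standing assumptions, and that the hidden constants are uniform in $K$ (hence in $p$) because of the John's-theorem normalization in \cref{thm:k_norm_part}.
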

The mechanism in these results is 
described in \Cref{alg:proportional_mech}.  In the 
rest of the section, we will prove the three properties of this mechanism, namely, privacy, utility, and randomness complexity.

\begin{algorithm}[h]
\caption{$\ExpPart$: Exponential Partition Sampling Mechanism}
\label{alg:proportional_mech}
\begin{algorithmic}
\State \textbf{Input: } Dataset $X \in \cX^*$
\State \textbf{Parameters:} Norm $\| \cdot \|_K$, function $f: \cX^* \rightarrow \mathbb{R}^d$, MSSP $\cP$, Privacy parameter $\varepsilon$, Representative function $\repr: \cP \to \R^d$ satisfying $r(P) \in P$ for all $P \in \mathcal{P}$\\

\State Sample $P \in \cP$ proportional to $\exp(-\frac{\varepsilon}{2} \cdot \Delta_P(f(X)))$ 
\Comment{$\Delta_P(x) = \inf_{z \in P} \| x - z\|_K$}

\noindent \Return $\repr(P)$

\end{algorithmic}
\end{algorithm}

\subsection{Privacy}

We start by analyzing the privacy guarantee of \Cref{alg:proportional_mech}, which follows almost immediately from the definition and the privacy guarantee of the exponential mechanism.

\begin{lemma} \label{lem:main-dp}
$\Delta_P(f(\cdot))$ has sensitivity at most $1$ for any fixed $P$; consequently, \Cref{alg:proportional_mech} is $\eps$-DP.
\end{lemma}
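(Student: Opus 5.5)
The plan is to show that, for a fixed cell $P$, the map $z \mapsto \Delta_P(z) = \inf_{y \in P} \|z - y\|_K$ is $1$-Lipschitz with respect to $\|\cdot\|_K$. Composing with $f$ then gives the sensitivity bound, and privacy follows from \Cref{thm:exm-dp}.

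\textbf{Lipschitz bound.} Fix $z, z' \in \R^d$ and any $y \in P$. By the triangle inequality for the norm $\|\cdot\|_K$,
\[
\Delta_P(z) \le \|z - y\|_K \le \|z' - y\|_K + \|z - z'\|_K .
\]
Taking the infimum over $y \in P$ on the right gives $\Delta_P(z) \le \Delta_P(z') + \|z - z'\|_K$. Swapping the roles of $z$ and $z'$ gives $|\Delta_P(z) - \Delta_P(z')| \le \|z - z'\|_K$.

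\textbf{Sensitivity.} Apply this with $z = f(X)$ and $z' = f(X')$ for neighboring $X \sim X'$. The hypothesis $\sens_{\|\cdot\|_K}(f) \le 1$ (equivalently, by the Observation, $K$ being a sensitivity polytope) yields $|\Delta_P(f(X)) - \Delta_P(f(X'))| \le 1$. This bound holds uniformly in $P$.

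\textbf{Privacy.} \Cref{alg:proportional_mech} is exactly the exponential mechanism with output set $\cO = \cP$ (equivalently, the representatives $r(P)$) and scores $\scr_P(X) = \Delta_P(f(X))$. Two details need checking.
\begin{itemize}
\item The output space is countable. The partition consists of Voronoi cells of the discrete set $\msspset$, so this holds.
\item The normalizing sum $\sum_{P} \exp(-\frac{\eps}{2}\Delta_P(f(X)))$ is finite. This follows from \Cref{lem:extending_partitions_large}, which bounds $|\cP_{<\gamma}|$ polynomially in $\gamma$, so the exponential decay makes the sum converge.
\end{itemize}
With these in place, \Cref{thm:exm-dp} gives $\eps$-DP.

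Since $r$ is injective on cells, outputting $r(P)$ rather than $P$ is post-processing and does not affect privacy. There is no real obstacle here. The only subtle point is the normalization's finiteness, which is handled by the tail-counting lemma.
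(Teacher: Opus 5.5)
Your proposal is correct and is essentially the paper's proof: the triangle inequality followed by an infimum over $y \in P$ shows that $\Delta_P$ is $1$-Lipschitz with respect to $\|\cdot\|_K$, and combining this with $\sens_{\|\cdot\|_K}(f) \le 1$ and \Cref{thm:exm-dp} gives $\eps$-DP. Your extra well-definedness checks are sound details that the paper leaves implicit: the countable output set, finiteness of the normalizing constant via \Cref{lem:extending_partitions_large}, and post-processing through $r$ (for which injectivity of $r$ is not actually needed).
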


\begin{proof}
Notice that \Cref{alg:proportional_mech} is an instantiation of the exponential mechanism with $\scr_P(X) := \Delta_P(f(X))$. By \Cref{thm:exm-dp}, it suffices to show that $\scr_P$, i.e., $\Delta_P \circ f$, has sensitivity at most $1$.  

First, for any two points $z_1, z_2 \in \R^d$, and $y \in P$, we have by the triangle inequality
\[
\|z_1 - y\|_K \leq \|z_1 - z_2\|_K + \|z_2 - y\|_K,
\] 
and by taking infimum over $y \in P$ on both sides and using the definition
of $\Delta_P$, we have
\[
\Delta_P(z_1) = \inf_{y \in P} \|z_1 - y \|_K 
\leq \|z_1 - z_2 \|_K + \inf_{y \in P} \|z_2 - y \|_K 
= \|z_1 - z_2 \|_K + \Delta_P(z_2).  
\] 
Thus, $\Delta_P(z_1) - \Delta_P(z_2) \leq \|z_1 - z_2 \|_K$.  By symmetry,
$\Delta_P(z_2) - \Delta_P(z_1) \leq \|z_1 - z_2 \|_K$ as well, from which we
obtain $|\Delta_P(z_1) - \Delta_P(z_2)| \leq \|z_1 - z_2\|_K$, i.e., 
$\Delta_P$ is Lipschitz.  

Now, consider any pair $X \sim X'$ of neighboring datasets.  We have
$|\Delta_P(f(X)) - \Delta_P(f(X'))| \leq \|f(X) - f(X')\|_K
\leq 1$ from the Lipschitzness of $\Delta_P$ and since
$\sens_{\|\cdot\|_K}(f) \leq 1$.  
\end{proof}

\subsection{Utility}

To analyze the expected utility of \cref{alg:proportional_mech}, we note that $\ExpPart$ has two sources of error. The first comes from moving to a partition $P$ rather than outputting $f(x)$. This induces an error that is bounded by $\Delta_P(f(x))$, where $P$ is a random variable. We will show that $\Delta_P(f(x))$ is fairly small in expectation since \cref{lem:tail_bound} guarantees that our mechanism usually outputs representative points of fairly close partition cells. The second source of error comes from $r(\cdot)$, which chooses a representative point within $P$; this error is bounded by the
scaling parameter $\rho$ that we chose.

\begin{lemma}
    \label{lem:utility}
      $\expect[\Delta_P(f(x))] = O\!\left(\frac{\log d}{\varepsilon} \right)$; consequently, \cref{alg:proportional_mech} has expected $\| \cdot \|_K$-error $O\left(\frac{ d}{\varepsilon}\right)$.
\end{lemma}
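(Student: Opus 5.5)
We will apply \cref{lem:tail_bound} with $\alpha = 1$, taking $s_P(v) := \Delta_P(v)$ and $v := f(X)$. The hypotheses are satisfied. \Cref{thm:k_norm_part} gives $k_d(\ell) = O(d\log d \cdot e^{d/\ell}) \le M e^{A d/\ell}$ with $M = O(d \log d)$ and $A = 1$. The scale is $\rho = Cd/\eps$ with $C$ the constant from \cref{lem:tail_bound}. The cell $P_0$ containing $f(X)$ has $s_{P_0}(f(X)) = 0$. The lemma then yields, for every $\gamma \ge 1/d$,
\[
\Pr\left[\Delta_P(f(X)) \ge \gamma\rho\right] \le \Pr[P \notin \cP_{<\gamma}] \le 2M e^{2A+1} e^{-d\gamma} = O(d\log d)\cdot e^{-d\gamma}.
\]

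Next we integrate this tail. Write $Z := \Delta_P(f(X))/\rho \ge 0$, so that $\expect[Z] = \int_0^\infty \Pr[Z \ge \gamma]\, d\gamma$. Split the integral at the threshold $\gamma_0 := \max\{1/d,\ \ln(2M e^{2A+1})/d\} = O(\log d / d)$. On $[0,\gamma_0]$, bound the probability by $1$, contributing at most $\gamma_0$. On $[\gamma_0,\infty)$, use the tail bound:
\[
\int_{\gamma_0}^\infty 2Me^{2A+1}e^{-d\gamma}\,d\gamma = \frac{2Me^{2A+1}e^{-d\gamma_0}}{d} \le \frac1d .
\]
Hence $\expect[Z] = O(\log d/d)$, and therefore $\expect[\Delta_P(f(X))] = \rho \cdot O(\log d/d) = O(\log d/\eps)$. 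This proves the first claim.

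For the error bound, let $z_P \in P$ be a point with $\|f(X) - z_P\|_K \le \Delta_P(f(X)) + \eta$ for an arbitrary $\eta > 0$. By the triangle inequality and the first property of a secluded partition, $\|\repr(P) - z_P\|_K \le \rho$, so
\[
\|\repr(P) - f(X)\|_K \le \rho + \Delta_P(f(X)) + \eta .
\]
Letting $\eta \to 0$ and taking expectations gives error at most $\rho + O(\log d/\eps) = O(d/\eps)$.

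The only delicate step is setting the constants in \cref{lem:tail_bound}. The prefactor $M = O(d\log d)$ must be absorbed by the threshold $\gamma_0 \sim \log M / d$, and this absorption is what produces the $\log d$ in the first claim. One must also confirm that $\cP$ is countable and that the sampling distribution is well defined. Both follow because \cref{lem:extending_partitions_large} bounds the number of cells in every ball.
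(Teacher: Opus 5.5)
Your proof is correct and follows the same route as the paper: invoke \cref{lem:tail_bound} with $s_P=\Delta_P$, integrate the resulting tail with a cutoff at the $O(\log d)$ scale set by the prefactor $M=O(d\log d)$, and add the error from rounding to the representative. The differences are in your favor. Integrating in the normalized variable $\Delta_P(f(X))/\rho$ avoids the additive $O(1)$ that the paper's sum of $\Pr[\lceil \Delta_P(f(X))\rceil>k]$ over integers $k$ incurs, so your bound stays $O(\log d/\eps)$ even when $\eps\gg\log d$. Your rounding error of $\rho$ rather than the paper's $2\rho$ is also the sharper consequence of the first secluded-partition property.
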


\begin{proof}
By \cref{lem:tail_bound}, for 
some universal constants $c_1, c_2$ and
$c_3 = c_2/C$, we have
\[ 
\Pr[ P \notin \cP_{< (k-1)/\rho}]
    \leq 
d^{c_1} \cdot \exp\left(-c_2 d \frac{k-1}{C d / \varepsilon}\right) 
    = d^{c_1} \cdot \exp\left(-c_3  \varepsilon (k-1)\right). \]
Let $k^* = \max\left(\lceil \frac{\rho}{d} \rceil + 1, \lceil \frac{c_1 \ln d}{c_3 \varepsilon} \rceil + 1\right)$. 
    Since $\frac{\rho}{d} = \frac{C}{\varepsilon}$, we have $k^* = O\left(\frac{\log d}{\varepsilon}\right)$.  Now, 
    note that the choice of $k^*$ ensures that $e^{-c_3 \varepsilon (k^*-1)} \le 1/d^{c_1}$ and
    for $k > k^*$, we have $(k-1)/\rho \ge 1/d$.
\begin{align*}
\expect[\Delta_P(f(x))]
& \leq \sum_{k = 0}^\infty \Pr[\lceil \Delta_P(f(x)) \rceil > k] 
\leq \sum_{k = 0}^\infty \Pr[\Delta_P(f(x)) > k-1] 
\leq \sum_{k = 0}^\infty \Pr[ P \notin \cP_{< (k-1)/\rho}] \\
& \leq \sum_{k=0}^{k^*} 1 + \sum_{k=k^*+1}^\infty d^{c_1} \cdot e^{-c_3 \varepsilon (k-1)} 
        \le k^* + 1 + \sum_{j=0}^\infty e^{-c_3 \varepsilon j}
= k^* + 1 + \frac{1}{1 - e^{-c_3 \varepsilon}}\\
& \leq O\left( \frac{\log d}{\varepsilon} \right) + O\left( \frac{1}{\varepsilon} \right)
= O\left( \frac{\log d}{\varepsilon} \right).
\end{align*}
The total expected error is 
the sum of the maximum cell diameter ($2 \rho$) and the expected distance, i.e., $2 \rho + \expect[\Delta_P(f(x))] = O\left(\frac{d}{\varepsilon}\right)$, from our
choice of $\rho$. 
\end{proof}

\subsection{Randomness Complexity}

By \cref{thm:KY}, it suffices to bound the output entropy of our mechanism.  This in turn is upper bounded by the entropy of $P$, the random partition cell chosen by the mechanism.
\begin{lemma}
    \label{lem:randomness}
      $H(P) = O(\log d)$; consequently, \cref{alg:proportional_mech} has randomness complexity $O(\log d)$.
\end{lemma}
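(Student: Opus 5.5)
We bound $H(P)$ by decomposing the distribution of $P$ into distance rings around $v := f(X)$ and applying the tail bound of \cref{lem:tail_bound}. The parameters are set as follows. The MSSP from \cref{thm:k_norm_part} has $k_d(\ell) \le M e^{A\alpha d/\ell}$ with $\alpha = 1$, $A = 1$ and $M = O(d\log d)$. The scaling $\rho = Cd/\eps$ is as in \cref{lem:tail_bound}. The score $s_P(v) = \Delta_P(v)$ is nonnegative and vanishes on the cell containing $v$. So \cref{lem:tail_bound} gives $\Pr[P \notin \cP_{<\gamma}] \le 2M e^{O(1) - d\gamma}$ for every $\gamma \ge 1/d$.

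Define the random index $J$ by $J = 0$ if $P \in \cP_{<1/d}$, and $J = j \ge 1$ if $P \in \cP_{<(j+1)/d} \setminus \cP_{<j/d}$. Then $H(P) \le H(J) + H(P \mid J)$, and we bound the two terms separately.

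\textbf{Bounding $H(P\mid J)$.} Conditional on $J = j$, $P$ lies in $\cP_{<(j+1)/d}$, so $H(P \mid J=j) \le \log|\cP_{<(j+1)/d}|$. For $j < d$, \cref{lem:extending_partitions_small} gives $\log|\cP_{<(j+1)/d}| \le \log M + O(j+1)$. For $j \ge d$, \cref{lem:extending_partitions_large} gives $O(d + d\log(j/d)) \le O(\log M + j)$, exactly as in the proof of \cref{lem:tail_bound}. Hence $H(P\mid J) \le O(\log d) + O(\expect[J])$. The tail bound gives $\Pr[J \ge j] \le \Pr[P \notin \cP_{<j/d}] \le O(M) e^{-j}$, so $\expect[J] \le \sum_j \min(1, O(M)e^{-j}) = O(\log M) = O(\log d)$.

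\textbf{Bounding $H(J)$.} $J$ is a nonnegative integer with $\expect[J] = O(\log d)$. The maximum-entropy (geometric) bound for such variables gives $H(J) \le \log(\expect[J]+1) + O(1) + \log e = O(\log\log d)$. Alternatively, one can use the tail $\Pr[J \ge j] \le O(M)e^{-j}$ directly: indices $j \le c\log d$ contribute at most $\log(c\log d + 1)$, and the remaining terms $-p_j \log p_j$ have summable tail $\sum_{j > c\log d} O(M) e^{-j}\, j = o(1)$. Either way,
\[
H(P) \le H(J) + H(P\mid J) = O(\log d).
\]

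\textbf{Conclusion and main obstacle.} The output $\repr(P)$ is a function of $P$, so its entropy is at most $H(P) = O(\log d)$. Applying \cref{thm:KY} to the output distribution, which is countable, gives randomness complexity $O(\log d) + 2 = O(\log d)$. The delicate step is converting the per-ring count bound into a bound on $H(P \mid J)$ that is linear in $J$. For large $j \ge d$ this needs the $(3\gamma)^d$ polynomial growth from \cref{lem:extending_partitions_large} to be absorbed into $e^{O(j)}$, which works as in \cref{lem:tail_bound} because $d\ln(10j/d) \le 3j$. We then need $\expect[J]$ finite and $O(\log d)$, which follows from the exponential tail with prefactor $M = \poly(d)$.
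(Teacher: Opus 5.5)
Your proposal is correct and follows the paper's proof essentially step for step. The paper likewise uses a ring index $Z=\lceil d\,\Delta_P(f(X))/\rho\rceil$, the chain rule $H(P)\le H(Z)+H(P\mid Z)$, the same per-ring cell counts from the MSSP guarantee and \cref{lem:extending_partitions_large}, and a bound on $H(Z)$ in terms of $\expect[Z]=O(\log d)$. The only differences are cosmetic: you obtain $\expect[J]$ directly from \cref{lem:tail_bound} instead of through \cref{lem:utility}, you absorb the rings with $j\ge d$ into an $O(\log M+j)$ bound rather than invoking exponential decay separately, and you use the sharper geometric max-entropy bound $H(J)\le\log(\expect[J]+1)+\log e$ where the paper uses $H(Z)\le\expect[Z]+O(1)$.
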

\begin{proof}
Let $Z = \lceil d \cdot \Delta_P(f(x))/\rho \rceil$. Note that if $Z = k$, then $P \in \cP_{< (k+1)/d}$.
    Applying \cref{lem:utility}, 
    \[ 
    \expect[Z] 
    = \expect \left[\left\lceil d \cdot \frac{\Delta_P(f(x)}{\rho} \right\rceil \right]
    = O \left(\frac{d}{\rho} \expect[\Delta_P(f(x))] \right)
    = O(\varepsilon) \cdot O\left(\frac{\log d}{\varepsilon} \right)
    = O(\log d).\]
We first bound $H(P \mid Z)$. Given $Z=k$, the mechanism outputs a cell from $\cP_{< (k+1)/d}$.  For $k \leq d$, 
the MSSP guarantee yields $|\cP_{< (k+1)/d}| \le O(d \log d \cdot e^{k+1})$, i.e., $\log |\cP_{< (k+1)/d}| = O(\log d) + O(k)$. For $k > d$, \cref{lem:extending_partitions_large}
yields $|\cP_{< (k+1)/d}| \leq d^{O(1)} \cdot e^{d} \cdot (3 (k+1)/d)^{d}$, i.e., $\log |\cP_{< (k+1)/d}| = O(d \log k)$. Thus,
    \begin{align*}
        H(P \mid Z) 
        &= \sum_{k=1}^\infty \Pr[Z=k] \cdot H(P \mid Z = k) \\
        &\leq \sum_{k=1}^\infty \Pr[Z=k] \cdot \log |\cP_{< (k+1)/d}| \\
        &\leq \sum_{k=1}^d \Pr[Z=k] \cdot \big(O(\log d) + O(k)\big) + \sum_{k=d+1}^\infty \Pr[Z=k] \cdot O(d \log k) \\
        &\leq O(\log d) \cdot \sum_{k=1}^\infty \Pr[Z=k] + O(1) \cdot \sum_{k=1}^\infty k \Pr[Z=k] + O(1) \\
        &= O(\log d) + O(\expect[Z]) + O(1) 
        = O(\log d).
    \end{align*}
    (The $O(1)$ follows because $\Pr[Z=k]$ decays exponentially as $e^{-\Omega(k)}$ by \cref{lem:tail_bound}, which dominates the polynomial $O(d \log k)$ term for $k > d$.)
    
    Next, we bound $H(Z)$. Using the well known information-theoretic fact that for any random variable $X$ with support on $\mathbb{Z}_{\geq 0}$, $H(X) \leq \expect[X] + O(1)$ \cite{golomb2013basic}, we 
    obtain $H(Z) \leq \expect[Z] + O(1) = O(\log d)$.  
    Therefore, $H(P) \leq H(Z) + H(P \mid Z) = O(\log d)$.
\end{proof}
\section{Efficient Mechanism for $\ell_\infty$-norm}
\label{sec:efficient}

Using the MSSP from \cref{thm: inf_main} and the same reasoning as in the proof of \cref{thm:k_norm_to_mech} implies a mechanism that uses $O(\log d)$ bits of randomness. However, the resulting mechanism is not immediately ``efficient''.  In this section, we show the following.
\begin{theorem}
\label{thm:efficient_main}
Let $\varepsilon \leq \log d$.
Let $f: \mathcal{X}^* \rightarrow \mathbb{Q}^d$ be such that
$\sens_{\| \cdot \|_{\infty}}(f) \leq 1$.  There is a mechanism that (i) is $\varepsilon$-DP,
(ii) has expected $\| \cdot \|_{\infty}$-error $O((d \log d)/\varepsilon)$,
(iii) has expected randomness complexity $O(\log d)$, and
(iv) has expected running time $\poly(T(f, X), 1/\varepsilon)$, where $T(f, X)$ denotes the time required to evaluate $f$ on input $X$.
\end{theorem}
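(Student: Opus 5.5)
We instantiate \ExpPart{} (\cref{alg:proportional_mech}) with the explicit partition induced by $\Round$ from \cref{alg: inf_rounding}, scaled by $\rho$. By \cref{thm: inf_main} this is a $k_d$-MSSP with $k_d(\ell) = (\ell+1)^{O(d/\ell)} \le e^{A \log d \cdot d/\ell}$. This is exactly the form $M e^{A\alpha d/\ell}$ required by \cref{lem:tail_bound}, with $\alpha = \log d$ and $M = O(1)$. So we take $\rho = C d\log d/\varepsilon$.

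\textbf{Privacy.} We do not sample exactly proportional to $\exp(-\frac{\varepsilon}{2}\Delta_P)$. Instead we use a score $s_P(v)$ that is efficiently computable, $1$-Lipschitz in $v$, zero on the cell containing $v$, and satisfies $s_P\ge \Delta_P$. Privacy then follows by \cref{thm:exm-dp} (as in \cref{lem:main-dp}).

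\textbf{Utility.} \cref{lem:tail_bound} with $\alpha=\log d$ gives
\[
\Pr[P\notin\cP_{<\gamma}] \le 2M e^{(2A+1)\log d - d\gamma\log d}.
\]
Repeating the summation in \cref{lem:utility}, the expected distance is $O(\rho/d)$. Adding the $O(\rho)$ rounding error gives error $O(d\log d/\varepsilon)$.

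\textbf{Randomness.} We repeat \cref{lem:randomness} with $Z=\lceil d\Delta_P/\rho\rceil$. Here $\Pr[Z\ge k]\le d^{O(1)}e^{-k\log d}$, so $\expect[Z]=O(1)$. For $Z\le d$, $\log|\cP_{<(k+1)/d}| = O(k\log d)$, which gives $H(P)=O(\log d)$. The assumption $\varepsilon\le\log d$ is only used to keep $\rho/d \gtrsim 1$ where convenient.

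\textbf{Efficient sampling (main obstacle).} The cells are indexed by a shift $a\in\{0,\dots,d-1\}$ and an integer vector $u$, with representative $a/d+u$. Cell membership requires $\Round$ to select shift $a$, which is a global $\argmin$ condition. My plan is to avoid sampling from the true distribution and instead sample the cell by sequential rejection, run in stages.

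\emph{Stage 1.} With the Bernoulli sampler of \cref{lem:knuth-yao}, decide whether the output is the cell $P_0=\Round$-cell of $f(X)$, using its probability mass estimate. Since $\Pr[P=P_0]\ge 1-1/\poly(d)$ by the tail bound, this costs $O(1)$ bits in expectation.

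\emph{Stage 2.} Otherwise, sample the ring index $Z$, and then sample a cell inside the ring. The ring contains $d^{O(Z)}$ candidates $(a,u)$ with $u_i\in\{\lfloor v_i - a/d\rfloor + t\}$ for small offsets $t$. For each candidate, $\Delta_P$ can be computed by a small LP or convex program: minimize $\|v-z\|_\infty$ subject to $z$ rounding to $(a,u)$. Since $\Phi_a$ is piecewise linear and the constraints $\Phi_a(z)\le\Phi_b(z)$ are linear once the floors are fixed, this is a polynomial-size LP whose value is rational. This gives $s_P=\Delta_P$ exactly. Normalizing constants are sums over infinitely many cells, so we use dominating upper bounds instead:
\[
\exp\bigl(-\tfrac\varepsilon2\Delta_P\bigr)\le \exp\bigl(-\tfrac\varepsilon2\cdot\tfrac{j\rho}{d}\bigr)\quad\text{on ring } j .
\]
We sample the ring from this geometric envelope and accept with probability $\exp(-\frac\varepsilon2(\Delta_P-\text{lower}))$. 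Each acceptance test is a Bernoulli with a rationally approximable parameter, implemented via \cref{lem:knuth-yao} with lazy bit precision.

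The hardest step is keeping both the expected time and the expected random bits at $O(\log d)$. Enumerating $d^{O(Z)}$ candidates is polynomial only because $\Pr[Z=k]$ decays like $d^{-\Omega(k)}$, which makes the expected enumeration cost $\sum_k d^{O(k)}d^{-\Omega(k)}=O(\poly(d))$ when $C$ is large. The expected number of rejections must stay $O(1)$, which follows since the envelope is within a constant factor of the true mass. I expect the careful accounting of randomness in the rejection loop to be the most delicate part.
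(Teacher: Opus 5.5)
Your privacy and utility arguments are fine and match the paper. The gap is in the sampler, which is what has to deliver (iii) and (iv).

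\textbf{Stage 1 rests on a false premise.} You claim $\Pr[P=P_0]\ge 1-1/\poly(d)$, but \cref{lem:tail_bound} only controls cells at distance at least $\rho/d$, and many cells can sit at distance essentially $0$. For example, take $f(X)=0$. Lowering one coordinate slightly keeps the shift at $0$ (by the tie-breaking rule) but drops that floor. So the $d+1$ cells with representatives $0$ and $-\rho e_i$ all have $\Delta_P=0$, and $\Pr[P=P_0]\le 1/(d+1)$.

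\textbf{An ``estimate'' of $P_0$'s mass breaks privacy.} That mass is $1/\sum_P e^{-\varepsilon\Delta_P/2}$, an infinite sum. An inexact Bernoulli parameter means you are no longer sampling the exponential mechanism, so \cref{thm:exm-dp} does not give pure DP.

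\textbf{The envelope is not within a constant factor of the true mass.} Suppose you repair Stage 1 by making everything one honest rejection loop, with envelope weight $e^{-\varepsilon j\rho/(2d)}$ per slot on ring $j$ and a restart on rejection. On ring $0$ the envelope puts weight $1$ on each of up to $N_0=\poly(d)$ slots, while the true weights there can be as small as $(d+1)^{-\Omega(C)}$. For instance, at $f(X)=\frac{\rho}{2d}\mathbf{1}$ every other cell is at distance at least $\tau/2$, so the true mass is $1+o(1)$ while the envelope mass is at least $d+1$. The per-round acceptance probability is then $O(1/d)$ or worse, and the expected number of random bits becomes $\poly(d)$, not $O(\log d)$.

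\textbf{The missing idea is to treat the whole inner ring exactly, not just $P_0$.} Enumerate $L_0=\{P:\lceil\Delta_P\rceil\le\tau\}$, which has at most $N_0=\poly(d)$ cells by \cref{lem:efficient_enum}. Compute every weight, and their sum $W_0\ge 1$, exactly. Apply the geometric envelope only to rings $m\ge 1$, whose \emph{total} envelope mass $S=\sum_{m\ge1}N_m\lambda^{m\tau}$ is $(d+1)^{-\Omega(1)}$. Each round (Bernoulli$(W_0/U)$ with $U=W_0+S$, otherwise propose from the envelope) then accepts with probability at least $W_0/(W_0+S)=\Omega(1)$. This holds not because the envelope is tight (inside a ring it is off by up to $\lambda^{-\tau}=d^{\Theta(C)}$), but because its mass outside $L_0$ is negligible next to $W_0$. \cref{lem:first_round_bound} then reduces the expected bits and time to those of one round:
\begin{itemize}
\item $O(\log d)$ bits to pick a cell within $L_0$;
\item $O(1)$ bits for the geometric ring index $M$;
\item $\log N_M=O(M\log d)$ bits for the slot, with $\expect[M]=O(1)$.
\end{itemize}
The expected enumeration cost is governed by this proposal law $\Pr[M=m]\propto\gamma^{m-1}$, not by the target law of $Z$ as you wrote. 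Also, your entropy bound $H(P)=O(\log d)$ only yields $O(\log d)$ bits for the inefficient Knuth--Yao sampler. For (iii) you must count the bits of the sampler you actually run.

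\textbf{Exact rational parameters.} The paper samples proportional to $\lambda^{\lceil\Delta_P\rceil}$ with dyadic $\lambda=e^{-\varepsilon'/2}$, $\varepsilon'\in[\varepsilon/2,\varepsilon]$; note $\lceil\Delta_P\circ f\rceil$ still has sensitivity $1$. This makes $W_0/U$ and all acceptance probabilities exact rationals of bounded bit length. Your lazy-precision Bernoullis could replace this, but only if $W_0/U$ is also sampled exactly.

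\textbf{The LP for $\Delta_P$ is not justified as stated.} Cells are non-convex (the L-shapes in \cref{fig:secluded_partition}). Fixing the shift-$a$ floors does not make $\Phi_b$ linear, since each $\{t_i+m/d\}$ still has a breakpoint at $t_i=1-m/d$. The constraint $\Phi_a\le\Phi_b$ is really the counting condition $|\{i:t_i\ge 1-m/d\}|\le m$. The paper linearizes it with an exchange argument: one may take $t$ sorted in the same order as $w=v-a/d-z$. This turns the counting conditions into linear order-statistic constraints on fixed coordinates, giving a single polynomial-size LP.
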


Let $c = \lceil \log(4 e^{\varepsilon/2}/\varepsilon) \rceil$, $a = \lceil 2^c e^{-\varepsilon/2} \rceil $ and choose $\varepsilon'$ so that
\[ \lambda = e^{-\varepsilon'/2} = \frac{a}{2^c} .\]
Note that $\lambda$ can be represented in binary using only $c$ bits; also, it is easy to see that $\varepsilon/2 \leq \varepsilon' \leq \varepsilon$.  To make sampling computationally easier, we will show that our mechanism is $\varepsilon'$-DP.

Throughout this section, $\cP$ will refer to the partition induced by \cref{alg: inf_rounding} scaled so that $\rho = C \frac{d \log d}{\varepsilon}$ for some sufficiently large constant $C$.
Let $B = 7$, the constant in the exponent of $N_m$ from \cref{lem:efficient_enum}.
The high-level idea of our efficient mechanism is as follows. We wish to sample from a distribution with probabilities proportional to $\exp(-\frac{\varepsilon}{2} \Delta_P(f(X)))$ efficiently (actually, we sample from $\exp(- \frac{\varepsilon}{2} \lceil \Delta_P(f(X)) \rceil)$ for technical bit-complexity reasons). We will do so by first using some randomness to determine how large $\Delta_P(f(X))$ should be. This gives us an annulus from which we should sample $P$. We then enumerate the partitions in this annulus and use more randomness to select one among them.

\subsection{Efficiently Enumerating Partitions}

We first describe two algorithms that will be crucial to \cref{alg:efficient_sampler}. The first is for computing $\Delta_P(f(X)) = \inf_{y\in P}\|f(X)-y\|_\infty$ efficiently, where the challenge is due to the infimum.  We then use this to design a second algorithm that efficiently enumerates the partition cells that are close (for some definition of ``close'') to a particular point.

For $a \in \{ 0, \dots, d-1\}$ and $z \in \mathbb{Z}^d$, let $P_{a, z} \in \cP$  denote the partition cell whose representative point is $(z_1+a/d, \dots, z_d+a/d)$.
\begin{lemma}
\label{lem:dist_comp}
There is an algorithm
that given \(v\in\mathbb Q^d\) and $a \in \{0, \ldots, d-1\}$, $z \in \mathbb{Z}^d$, computes $\Delta_{P_{a, z}}(v)$ in time polynomial in the input length.

\end{lemma}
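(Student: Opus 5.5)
Throughout, rescale so that $\rho = 1$ (the scaled partition is $\rho\cdot\cP$, so $\Delta$ scales by $\rho$ and we may divide $v$ by $\rho$, which keeps it rational). The plan is to write the cell $P_{a,z}$ explicitly as a set cut out by simple coordinatewise constraints, and then compute the $\ell_\infty$ distance by a one-parameter feasibility search over finitely many candidate radii.

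\textbf{Step 1: describe the cell.} A point $y$ lies in $P_{a,z}$ iff $A(y) = a$ and $\lfloor y_i - a/d \rfloor = z_i$ for all $i$. Substitute $t_i := y_i - a/d - z_i$, so the floor conditions become $t_i \in [0,1)$ and $t_i = \fr{y_i - a/d}$. By \cref{lem:phi_diff}, for $b \equiv a - m \pmod d$ with $m \in \{1,\dots,d-1\}$, we have $\Phi_b(y) - \Phi_a(y) = m - N_m(t)$, where $N_m(t) := |\{ i : t_i \ge 1 - m/d\}|$. Hence $A(y) = a$ iff, for every such $m$, $N_m(t) \le m$, with strict inequality $N_m(t) < m$ when $b < a$ (the tie-breaking rule picks the smallest minimizer). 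Each $N_m$ is a coordinatewise nondecreasing function of $t$.

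\textbf{Step 2: feasibility for a fixed radius $r$.} The ball $\calB_\infty(v; r)$ intersects $P_{a,z}$ iff there is $t$ with $t_i \in [\max(0, u_i - r), \min(1, u_i + r)]$ for every $i$, where $u_i := v_i - a/d - z_i$. Here the upper endpoint must be treated as open when it equals $1$. Because every constraint $N_m(t) \le m$ (or $< m$) is monotone, a feasible $t$ exists iff each interval is nonempty and the pointwise-smallest choice $t_i = \max(0, u_i - r)$ satisfies all $d-1$ counting constraints. This can be checked in $\poly(d)$ arithmetic operations on rationals.

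\textbf{Step 3: exact infimum.} Feasibility is monotone in $r$, so $\Delta_{P_{a,z}}(v) = \inf\{r \ge 0 : \text{feasible}(r)\}$. As $r$ varies, the truth value of each test in Step 2 changes only when $u_i - r$ crosses one of $0$, $1$, or $1 - m/d$, or when $u_i + r$ crosses $0$. These give $O(d^2)$ rational breakpoints $r_j$, together with $r_0 = 0$. Sort them. The feasibility status is constant on each open interval between consecutive breakpoints. Therefore the infimum equals the smallest breakpoint $r_j$ such that $\text{feasible}(r)$ holds either at $r_j$ itself or just to its right, which we test at a midpoint $(r_j + r_{j+1})/2$. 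Each such test costs polynomial time, and all numbers have bit length polynomial in the input.

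\textbf{Main obstacle.} The delicate part is Step 1 together with the boundary behaviour. Strict versus non-strict inequalities arise from the half-open intervals $[0,1)$ and from the tie-breaking rule. As a result the feasible set is not closed, and the infimum need not be attained. The midpoint test in Step 3 is designed exactly to handle this, and the care needed is in verifying that no breakpoint is missed, so that feasibility really is constant between consecutive candidates.
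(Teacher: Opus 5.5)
Your argument is correct, but it reaches the result by a different route from the paper.

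Both proofs start from the same description of $P_{a,z}$. Write $y = z + a/d + t$ with $t \in [0,1)^d$. Via \cref{lem:phi_diff}, membership in the cell is then equivalent to $N_m(t) < m$ for $1 \le m \le a$ and $N_m(t) \le m$ for $a < m \le d-1$.

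The paper then turns these counting conditions into order-statistic constraints. It uses an exchange argument to assume that $t$ is sorted in the same order as $w_i = v_i - a/d - z_i$ (your $u_i$). This makes the problem a linear program in $(t,r)$, which it hands to a generic polynomial-time LP solver. The strict inequalities are treated only up to closure, as in its footnote about boundary points.

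You instead use that each $N_m$ is coordinatewise nondecreasing. So for a fixed radius, the single point $t^{\min}_i = \max(0, u_i - r)$ is a canonical witness. You then do a parametric search over the $O(d^2)$ explicit thresholds $u_i - 1$, $-u_i$, $u_i - 1 + m/d$, where some test can flip. This buys you several things:
\begin{itemize}
\item no exchange argument and no LP solver are needed, and the algorithm is strongly polynomial;
\item the half-open boundaries and the strictness coming from tie-breaking are handled exactly, not by passing to a closure;
\item as a by-product, $\Delta_{P_{a,z}}(v)$ is always $0$ or one of these explicit rationals, which directly gives the bit-length control needed later in \cref{lem:expected_time}.
\end{itemize}
The LP route is more black-box: once the sorted cell is written as a polyhedron, no monotone structure is needed.

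One small point to make explicit concerns the largest breakpoint. There is no ``midpoint'' to its right, so you should test at any point beyond it. Feasibility always holds there, because eventually $t^{\min} = 0$, every $N_m = 0$, and every interval is nonempty. This is also what guarantees that the infimum is one of your candidates.
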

\begin{proof}
We use the notation from \cref{alg: inf_rounding}.
We now obtain a precise characterization of the points in
$P_{a, z}$.  It is easy to see that 
each $y \in P_{a, z}$ must be of the form 
\begin{align}
y_i = z_i + \frac{a}{d} + t_i \quad \text{where }0 \leq t_i < 1.
\label{eq:lem51}
\end{align}
Therefore, an exact characterization of $P_{a,z}$ is the set of points $y \in \mathbb{R}^d$ such that \eqref{eq:lem51} holds and $A(y) = a$. \cref{lem:phi_diff} tells us that the latter condition is exactly
    \begin{enumerate}
        \item For $b < a$, $\Phi_a(y) < \Phi_b(y)$ (equivalently, $C_m < m$), and 
        \item For $b > a$, $\Phi_a(y) \leq \Phi_b(y)$ (equivalently, $C_m \leq m$),
    \end{enumerate}
where $m = (a-b) \mod d$ and $C_m = |\{ i : t_i \geq 1-m/d\}|$). Indeed, if $\Phi_a(y) \geq \Phi_b(y)$ for any $b < a$, then $A(y) \neq a$ by our choice of tie-breaking rule in \cref{alg: inf_rounding}. Similarly, if $\Phi_a(y) > \Phi_b(y)$, then $A(y) \neq a$. The equivalent conditions $C_m < m$ and $C_m \leq m$ follow by unpacking the definition of $\Phi_b - \Phi_a$ in \cref{lem:phi_diff}.

Therefore, an exact characterization of $P_{a,z}$ is the set of points $y \in \mathbb{R}^d$ such that \eqref{eq:lem51} holds and $|\{ i : t_i \geq 1-m/d\}| < m$ for all $m \in [0, a]$ and $|\{ i : t_i \geq 1-m/d\}| \leq m$ for all $m \in [a+1, d-1]$. Equivalently\footnote{This equivalence is true up to boundary points, which will not change the final result.}, we may write the constraints on the coordinates $t_i$ as $0 \leq t_i < 1$ and once the $t_i$'s are sorted as $t_{(1)}, \dots, t_{(d)}$, then $t_{(d-m+1)} \leq 1-m/d$ for $m \leq a$ and $t_{(d-m)} \leq 1-m/d$ for $m > a$. Let $T$ be the set of points $(t_1, \dots, t_d)$ satisfying these constraints.  Now, consider.
    \[ \Delta_P(v) 
    = \inf_{y \in P} \max_{i \in [d]} |v_i - y_i|
    = \inf_{t \in T} \max_{i \in [d]} \left| v_i - \left( z_i + \frac{a}{d} + t_i \right) \right|
    = \inf_{t \in T} \max_{i \in [d]} \left| w_i - t_i \right|,\]
    where $w_i = v_i - (a/d) - z_i$.

    Let $\pi: [d] \rightarrow [d]$ be the permutation that orders $w_1, \dots, w_d$. We now show that we may assume $t_{\pi(1)} \leq \dots \leq t_{\pi(d)}$.  Indeed, if $t_i < t_j$ and $w_i > w_j$, we may safely swap $t_i$ and $t_j$ without increasing since $\max(|t_i - w_i|, |t_j - w_j|) \geq \max(|t_j - w_i|, |t_i - w_j|)$. 

    Therefore, we must minimize $r = \Delta_P(v)$ under the following constraints.
    \begin{enumerate}[nosep]
        \item $-r \leq w_i - t_i \leq r$ for all $i \in [d]$
        \item $0 \leq t_i < 1$ for all $i \in [d]$
        \item $t_{\pi(1)} \leq \dots \leq t_{\pi(d)}$
        \item $t_{\pi(d-m+1)} \leq 1-m/d$ for all $m \in [0, a]$
        \item $t_{\pi(d-m)} \leq 1-m/d$ for all $m \in [a+1, d-1]$
    \end{enumerate}
    This is a linear program with $\poly(d)$ variables and constraints, each of which can be represented using polynomially many bits, and can therefore be solved in time polynomial in the input length.
\end{proof}

Next, we show a bound on the number of
cells in a certain annulus.
\begin{lemma}
\label{lem:efficient_enum}
For every \(v\in\mathbb Q^d\), every \(m\ge0\), and $\tau=\rho/d$,
the number of cells \(P\in\cP\) with $ m\tau\le \Delta_P(v)\le (m+1)\tau$ is at most $N_m=(d+1)^{7(m+2)}$.
Moreover, these cells can be enumerated in time \(\poly(d) \cdot N_m\).
\end{lemma}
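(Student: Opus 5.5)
Two claims need proof: a counting bound and an enumeration procedure. Throughout, $\cP$ is the partition induced by \cref{alg: inf_rounding} scaled by $\rho$, so a cell is $P_{a,z}$ and its representative is $\rho(z + a\mathbf{1}/d)$. Rescaling by $1/\rho$, the annulus condition $\Delta_P(v)\le (m+1)\tau$ becomes: the unscaled cell meets the closed $\ell_\infty$-ball of radius $(m+1)/d$ around $v/\rho$.

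\textbf{Counting.} I will bound the number of cells meeting $\calB_\infty(x;(m+1)/d)$ and split on the size of $m$.

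\emph{Case $m+1\le d$.} Set $\ell=\lfloor d/(m+1)\rfloor\ge 1$, so the radius satisfies $(m+1)/d\le 1/\ell$. Applying \cref{lem: inf_intersect_bound} (modulo the closed/open boundary) gives at most $(\ell+1)^{O(d/\ell)}$ cells. Since $d/\ell\le 2(m+1)$ and $\ell+1\le d+1$, this is $(d+1)^{O(m+1)}$.

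The constant $7$ requires the explicit constant from the proof of \cref{lem: inf_intersect_bound}. For $\ell\ge 3$ that proof gives $d\,(e\ell/3)^{3d/\ell}$, and for $\ell\le 2$ it gives $d\cdot 3^d$ with $d\le 2(m+1)$. Both are at most $(d+1)^{1+6(m+1)}\le (d+1)^{7(m+2)}$.

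\emph{Case $m+1>d$.} Here the radius $R=(m+1)/d$ may exceed $1$. For each shift $a$, every coordinate of $\lfloor z_i-a/d\rfloor$ ranges over at most $2R+2$ integers. The count is therefore at most $d(2R+2)^d\le d\,(4(m+1))^d$, which is $\le (d+1)^{7(m+2)}$ because $d< m+1$ and $4(m+1)\le (d+1)^{O((m+1)/d)}\cdot\ldots$. This step needs a small calculation. I would check $d\log(4(m+1))\le 6(m+2)\log(d+1)$ using $\log(4t)\le 4t/d\cdot\log(d+1)$ for $t\ge d$, which is routine.

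\textbf{Enumeration.} The enumeration mirrors the counting argument.
\begin{enumerate}
\item Loop over $a\in\{0,\dots,d-1\}$.
\item In the regime of \cref{lem: inf_intersect_bound}, compute $M_i(a)$ for each $i$.
\item Enumerate subsets $S$ of size at most $\lfloor 2d/\ell\rfloor+1$ on which the floor is $M_i(a)-1$, and set $z_i=M_i(a)-\mathbf{1}_{i\in S}$. In the large-$m$ regime, and for $\ell\le 2$, instead enumerate the full product of candidate integer values per coordinate.
\item For each candidate $(a,z)$, compute $\Delta_{P_{a,z}}(v)$ exactly via the LP of \cref{lem:dist_comp} and keep the candidate if it lies in $[m\tau,(m+1)\tau]$.
\end{enumerate}
This candidate superset is the one already counted, and it has size $O(N_m)$. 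The argument in \cref{lem: inf_intersect_bound} shows that every cell meeting the ball is indexed by such an $(a,z)$, so no cell is missed. Some candidates may not be actual (nonempty) cells, but the LP reports infeasibility or a large distance for those, so they are filtered out. Each LP takes $\poly(d)$ time (with bit-lengths treated as polynomial), giving total time $\poly(d)\cdot N_m$.

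\textbf{Main obstacle.} I expect the hardest part to be pinning the explicit exponent $7$ uniformly across both regimes, together with boundary and off-by-one issues (closed vs.\ open balls, and $\lfloor d/(m+1)\rfloor$). If the constants are tight, I would first loosen the subset bound to $\binom{d}{\le k}\le (d+1)^k$ with $k\le 2d/\ell+1\le 4(m+1)+1$. Together with the factor $d$ for the shift, this gives $(d+1)^{4m+6}$, comfortably within $(d+1)^{7(m+2)}$.
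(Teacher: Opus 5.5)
Your decomposition is the paper's. You reduce the annulus to cells meeting an $\ell_\infty$-ball and count them with \cref{lem:balancing} when the radius is small; the paper re-runs that argument directly with $\theta = 2s/d$ rather than citing \cref{lem: inf_intersect_bound} through $\ell=\lfloor d/(m+1)\rfloor$. For large radius you use a per-coordinate product bound, and you filter candidates with the LP of \cref{lem:dist_comp}.

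The step that fails as written is your opening reduction. $\Delta_P(v)\le(m+1)\tau$ does \emph{not} imply that $P$ meets the closed ball of radius $(m+1)\tau$. Cells are half-open (the constraint $t_i<1$ is strict), so the infimum need not be attained. Already for $d=1$, $m=0$ and, after rescaling, $v/\rho=2$: the cell $[0,1)$ is at distance exactly $1$ but misses $[1,3]$. Four cells ($z=0,1,2,3$) have $\Delta_P(v)\le\tau$, while the bound $d\cdot 3^d=3$ you import from \cref{lem: inf_intersect_bound} gives $3$. So that intermediate inequality is false. Worse, for the enumeration claim, listing floor values of points in $[1,3]$ never produces $z=0$, so your list of $L_m$ would be incomplete. ``Modulo the boundary'' is therefore not harmless here.

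The fix is what the paper does, and it is why the exponent is $7(m+2)$ rather than something in $m+1$. Any $P$ with $\Delta_P(v)\le(m+1)\tau$ contains a point at distance $<(m+2)\tau$, so it suffices to count and enumerate cells meeting $\calB_\infty(v;(m+2)\tau)$. Your estimates go through with $m+2$ in place of $m+1$. With $\ell=\lfloor d/(m+2)\rfloor\ge 3$ you get $d\,(e\ell/3)^{3d/\ell}\le (d+1)^{1+6(m+2)}$. In the large case you get $d\,(4(m+2))^d\le (d+1)^{1+4(m+2)}$. Both are at most $(d+1)^{7(m+2)}$.

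There is also a smaller slip in your $\ell\le 2$ subcase. $\lfloor d/(m+1)\rfloor\le 2$ only gives $d<3(m+1)$, not $d\le 2(m+1)$. The bound $d\cdot 3^d\le (d+1)^{1+6(m+1)}$ still holds, since $3^d<27^{m+1}$ and $27\le (d+1)^6$.
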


\begin{proof}
If $\Delta_P(v)\le (m+1)\tau,$
then \(P\) intersects the larger ball $\bbB := \calB_\infty(v; s\tau)$, where $s=m+2$.
Hence, it suffices to enumerate all cells that intersect $\bbB$, i.e., for each fixed $a$, to bound and enumerate
\[ W_a = \bbB \cap \left( \bigcup_{z \in \mathbb{Z}^d} P_{a,z} \right). \]
Define
\[
  F_i(a)=
  \left\{
  \left\lfloor \frac{y_i}{\rho}-\frac ad\right\rfloor:
  y\in \bbB
  \right\}.
\]
Since $\bbB$ has $\ell_\infty$-radius $s\tau=s\rho/d$, the interval of
possible values of $y_i/\rho$ has length $2s/d$.  We consider two cases depending on $s$. 

(i) Case $s\le d/3$.
Then \(2s/d<1\), so each \(F_i(a)\) has size at most \(2\).

Suppose $P_{a,z}\cap \bbB\neq\emptyset$, and choose \(y\in P_{a,z}\cap \bbB\).
For every coordinate $i$ with \(|F_i(a)|=2\), write $F_i(a)=\{r_i,r_i+1\}$.
Let
\[
  E(z)=\{i: |F_i(a)|=2 \text{ and } z_i=r_i\}.
\]
Thus \(E(z)\) is the set of coordinates where \(z\) chooses the lower of the two
possible floor values.

If \(i\in E(z)\), then \(y_i\) realizes the lower floor value, while some point
of \(\bbB\) realizes the upper floor value. Since the coordinate width of \(\bbB\) is
\(2s\tau\), this implies
\[
  \left\{\frac{y_i}{\rho}-\frac ad\right\}
  \ge
  1-\frac{2s}{d}.
\]
Because \(y\in P_{a,z}\), the chosen shift of \(y\) is \(a\). By \cref{lem:balancing}, $|E(z)|\le 2s+1$.

For fixed \(a\), the vector \(z\) is determined by \(E(z)\): coordinates with
\(|F_i(a)|=1\) have no choice, and coordinates with \(|F_i(a)|=2\) choose the
lower value exactly on \(E(z)\). Hence the number of candidates for this shift
is at most
\[
  \sum_{j=0}^{2s+1}\binom dj
  \le
  (d+1)^{2s+1}.
\]
Therefore, $|W_a| \leq (d+1)^{2s + 1}$. These candidates are enumerated by looping over \(a\) and all subsets
\(E\subseteq[d]\) of size at most \(2s+1\).

(ii) Case: \(s>d/3\).
Here we use the crude product enumeration. For fixed \(a\), each coordinate $y_i$ has
at most $\frac{2s}{d}+3$ possible floor values for $\lfloor y_i/\rho - a/d \rfloor$. Therefore, the number of candidate vectors \(\Round(y)\) for this
shift is at most
\[
  \left(\frac{2s}{d}+3\right)^d.
\]
Since for all $x \geq 1/3$, $\ln(2x + 3) \leq 4x$,
\[ \left(\frac{2s}{d}+3\right)^d \leq e^{4s} \leq (d+1)^{6s} .\]
The candidates are enumerated by looping over all $\leq (d+1)^{6s}$ possible combinations of coordinate floor values.

Combining the two cases and enumerating over all choices of $a$, the number of cells intersecting \(\bbB\) is at most
$d \cdot (d+1)^{6s} < (d+1)^{7(m+2)} = N_m.$  After enumerating these cells, we 
use \cref{lem:dist_comp} to compute \(\Delta_P(v)\) exactly in time
$\poly(d)$ and retain only  those cells satisfying $m\tau\le \Delta_P(v)\le (m+1)\tau$.
\end{proof}

\subsection{Sampling Algorithm}

The key technique we use is that of rejection sampling. Let $\tau = \lceil \frac{C}{\varepsilon}  \cdot \log(d+1) \rceil$ for a sufficiently large absolute constant $C$ and set $\rho = d \tau$. Choose $C$ so that $\varepsilon \tau \geq 400 B \cdot \log(d+1)$. Notice that we have chosen $\tau$ so that $\lambda^\tau \leq e^{-100 B \cdot \log(d+1)} = (d+1)^{-100 B}$.  We define several annuli of grid cells depending on their distance from $v$. These annuli partition the grid cells into disjoint sets:
\[ 
L_0 = \{ P : \lceil \Delta_P(v) \rceil \leq \tau \},
\qquad
L_m = \{ P: m \tau < \lceil \Delta_P(v)  \rceil \leq (m+1) \tau \}.
\]
Define $w_P := \lambda^{\lceil \Delta_P(v) \rceil}$.  We first enumerate $L_0$ and compute
\[ W_0 = \sum_{P \in L_0} w_P .\]
Note that since the cell containing $v$ has $\lceil \Delta_P(v) \rceil = 0$, $W_0 \geq 1$.
For $m \geq 1$, let
\[ U_m = N_m \lambda^{m \tau} =  (d+1)^{B(m+2)} \lambda^{m \tau},
\quad 
\gamma = \frac{U_{m+1}}{U_m} =  (d+1)^B \lambda^\tau \leq (d+1)^{-99B}.
\]
Finally, set
\begin{equation}
\label{eq:SU}
S = \sum_{m \geq 1} U_m = \frac{(d+1)^{3B} \lambda^\tau}{1-(d+1)^{B} \lambda^\tau}, \quad \mbox{ and } \quad 
U = W_0 + S.
\end{equation}

\begin{algorithm}[H]
\caption{Efficient Partition Sampler}
\label{alg:efficient_sampler}
\begin{algorithmic}
    \Require Dataset $X \in \cX^*$
    \State\hspace*{-5mm} \textbf{Parameters:} Function $f: \cX^* \to \R^d$, Privacy parameter $\varepsilon$, Representative function $\repr: \cP \to \R^d$
    \Statex
    \While{\textbf{true}}
        \State $b_0 \sim \text{Ber}(W_0/U)$ \Comment{Step 1: Possibly sample from $L_0$}
        \If{$b_0 = 1$}
            \State Sample a cell $P \in L_0$ with probability $w_P/W_0$ \Comment{Use \cref{lem:knuth-yao} repeatedly}
            \State \Return $r(P)$
        \EndIf
        \State
        \State $M \gets \mathrm{Geom}(1-\gamma)$ \Comment{Step 2: Sample $M$}
\State
        \State Enumerate $L_M = \{ P_1, \dots , P_h\}$ \Comment{Step 3: Possibly sample from $L_M$}
        \State Sample $J \sim \text{Uniform}([N_M])$
        \State
        \If{$J \le h$}
            \State $p \gets \lambda^{\lceil \Delta_{P_J}(f(X)) \rceil - M \tau}$
            \State $b_1 \sim \text{Ber}(p)$
            \If{$b_1 = 1$}
                \State \Return $r(P_J)$
            \EndIf
        \EndIf
    \EndWhile
\end{algorithmic}
\end{algorithm}

We use $\mathrm{Geom}(p)$ to denote the geometric distribution over $\Z_{> 0}$ where $k$ appears with probability $p (1 - p)^{k-1}$. This can be sampled from by repeatedly sampling from $\mathrm{Ber}(p)$ and return the index of the first sample that equals $1$, which takes $1/p$ samples from $\mathrm{Ber}(p)$ in expectation.

\begin{lemma}\label{lem:samples-correct}
\cref{alg:efficient_sampler} outputs $P \in \calP$ with probability proportional to $\exp(-\frac{\varepsilon}{2} \lceil \Delta_P(f(X)) \rceil)$.
\end{lemma}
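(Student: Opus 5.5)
The plan is to analyze a single iteration of the \textbf{while} loop in \cref{alg:efficient_sampler}. I will show that for every cell $P$, the probability that this iteration returns $r(P)$ is exactly $w_P/U$, where $w_P=\lambda^{\lceil \Delta_P(f(X))\rceil}$ and $U$ is as in \eqref{eq:SU}. Iterations use fresh randomness and are identical, so conditioned on termination the output is $P$ with probability $w_P/\sum_{P'} w_{P'}$. The algorithm also terminates almost surely: each iteration halts with probability $\sum_P w_P/U \ge W_0/U>0$, which is positive because $W_0\ge 1$ and $U<\infty$ (the geometric series in $S$ converges since $\gamma<1$). Finally, $w_P=\exp(-\tfrac{\varepsilon'}{2}\lceil\Delta_P\rceil)$ by the definition of $\lambda$, which is the claimed form with the rounded parameter $\varepsilon'$ in place of $\varepsilon$; this is the sense in which the paper intends the mechanism to be $\varepsilon'$-DP.

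\textbf{Cells in $L_0$.} The iteration returns $P\in L_0$ only through Step 1, which happens with probability $(W_0/U)\cdot(w_P/W_0)=w_P/U$.

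\textbf{Cells in $L_m$, $m\ge1$.} For $P\in L_m$ the iteration must pass through Steps 2 and 3, so I will multiply four factors:
\begin{itemize}
\item $b_0=0$ occurs with probability $1-W_0/U=S/U$;
\item $M=m$ occurs with probability $(1-\gamma)\gamma^{m-1}$;
\item $J$ hits the index of $P$ with probability $1/N_m$;
\item $b_1=1$ occurs with probability $\lambda^{\lceil\Delta_P\rceil-m\tau}$.
\end{itemize}
Since $U_m=U_1\gamma^{m-1}$, we have $S=U_1/(1-\gamma)$ and hence $(1-\gamma)\gamma^{m-1}=U_m/S$. The product therefore telescopes:
\[
\frac{S}{U}\cdot\frac{U_m}{S}\cdot\frac{1}{N_m}\cdot\lambda^{\lceil\Delta_P\rceil-m\tau}
=\frac{\lambda^{m\tau}\lambda^{\lceil\Delta_P\rceil-m\tau}}{U}=\frac{w_P}{U}.
\]
Along the way I will check that $S=\sum_{m\ge1}U_m$ agrees with the closed form in \eqref{eq:SU}.

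\textbf{Well-definedness.} Two conditions are needed for the above to be valid. First, $p\le 1$: this holds because $\lceil\Delta_P\rceil>m\tau$ for $P\in L_m$ and $\lambda<1$. Second, and this is the main obstacle, $h=|L_m|\le N_m$, so that every cell of $L_m$ receives a distinct index $J\le N_m$.

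\cref{lem:efficient_enum} as stated counts only cells with $m\tau\le\Delta_P\le(m+1)\tau$. A cell in $L_m$, however, may have $\Delta_P\in(m\tau-1,m\tau)$, so that interval does not directly contain $L_m$. I would handle this through the proof of \cref{lem:efficient_enum} rather than its statement. That proof actually bounds, by $d(d+1)^{6s}<N_m$, all cells meeting the ball $\calB_\infty(v;(m+2)\tau)$. Every $P\in L_m$ has $\Delta_P\le\lceil\Delta_P\rceil\le(m+1)\tau$, so it meets this ball. Enumerating those cells and filtering by the exact value of $\lceil\Delta_P\rceil$, computed via \cref{lem:dist_comp}, lists $L_m$ with $h\le N_m$. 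Given these two checks, the per-iteration computation above completes the proof.
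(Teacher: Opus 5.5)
Your proposal is correct and follows the paper's proof. Both compute the per-round probability of returning $P$ as $w_P/U$: via $(W_0/U)(w_P/W_0)$ for $L_0$, and via the identity $\gamma^{m-1}(1-\gamma)=U_m/S$ for $L_m$. Your remark that the resulting weights are $\exp(-\tfrac{\varepsilon'}{2}\lceil\Delta_P\rceil)$ rather than the same expression with $\varepsilon$ is also accurate.

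Your detour through the proof of \cref{lem:efficient_enum} is harmless but unnecessary. Since $\tau$ is an integer, $\lceil\Delta_P\rceil>m\tau$ forces $\Delta_P>m\tau$; a cell with $\Delta_P\in(m\tau-1,m\tau]$ lies in $L_{m-1}$, not $L_m$. Hence $L_m$ is already covered by the statement of the lemma itself.
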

\begin{proof}
Suppose $P \in L_0$. In one round, the probability of proposing $L_0$ and then choosing $P \in L_0$ is
\[ \frac{W_0}{U} \cdot \frac{w_P}{W_0} = \frac{w_P}{U} .\]
Now suppose $P \in L_m$, for some $m \geq 1$.  From \eqref{eq:SU}, $\gamma^{m-1} = U_m/U_1$, $S = U_1/(1-\gamma)$, and $\gamma^{m-1} (1-\gamma) = U_m/S$. In one round, the probability of proposing $L_m$, selecting $P$, and accepting is
\begin{align*}
& \frac{S}{U} \cdot \gamma^{m-1} (1-\gamma) \cdot \frac{1}{N_m} \cdot \lambda^{(\lceil \Delta_P(f(X)) \rceil - m \tau)}
= \frac{S}{U} \cdot \frac{U_m}{S} \cdot \frac{1}{N_m} \cdot \lambda^{(\lceil \Delta_P(f(X)) \rceil - m \tau)} \\
& = \frac{U_m}{U \cdot N_m} \cdot \lambda^{(\lceil \Delta_P(f(X)) \rceil - m \tau)} 
= \frac{\lambda^{m \tau}}{U} \cdot \lambda^{(\lceil \Delta_P(f(X)) \rceil - m \tau)} 
= \frac{\lambda^{\lceil \Delta_P(f(X))  \rceil}}{U} 
= \frac{w_P}{U}.
\end{align*}
    Therefore, the probability of outputting $P$ over all rounds is proportional to $w_P$.
\end{proof}

We prove a useful fact that for any sequence $\{C_t\}_{t \in \mathbb{N}}$ of random variables parameterized by and depending on round $t$ of \cref{alg:efficient_sampler}, $\mathbb{E}[C_1]$ is a fairly good estimate of the expectation of the sum of random variables. We then use this fact to bound the expected number of random bits consumed by our algorithm as well as the expected running time.
\begin{lemma}
    \label{lem:first_round_bound}
    Let $\{ \tilde{C}_t \}_{t \in \mathbb{N}}$ be a sequence of integer random variables where $\tilde{C}_t$ depends only on the randomness that would be used during round $t$ of \cref{alg:efficient_sampler} (if round $t$ were executed) and let $C_t = \tilde{C}_t$ if round $t$ of \cref{alg:efficient_sampler} executes and 0 otherwise.
    \[ \mathbb{E} \left[ \sum_{t \ge 1} C_t \right] = O(\mathbb{E}[C_1]). \]
\end{lemma}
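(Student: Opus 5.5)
The plan is to view the rounds of \cref{alg:efficient_sampler} as i.i.d.\ trials that stop at the first success, and to apply Wald's identity. Every round uses fresh randomness with the same distribution, since $W_0$, $U$, $\gamma$ and the sets $L_m$ depend only on $f(X)$ and not on the round index. Hence the $\tilde C_t$ are identically distributed, and $\mathbb{E}[\tilde C_t]=\mathbb{E}[C_1]$ because round $1$ always executes. Let $\tau_{\mathrm{stop}}$ be the index of the round that returns. Round $t$ executes exactly when rounds $1,\dots,t-1$ all fail to return. That event depends only on the randomness of the earlier rounds, so it is independent of $\tilde C_t$. Therefore
\[
\mathbb{E}\Big[\sum_{t\ge1} C_t\Big]=\sum_{t\ge1}\Pr[\text{round } t \text{ executes}]\cdot\mathbb{E}[\tilde C_t]=\mathbb{E}[C_1]\cdot\mathbb{E}[\tau_{\mathrm{stop}}].
\]
We may interchange sum and expectation because the $C_t$ are nonnegative, as they will be in every application (bit counts and running times). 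If signs were allowed we would apply the same argument to $|\tilde C_t|$.

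It then remains to show $\mathbb{E}[\tau_{\mathrm{stop}}]=O(1)$, which is the one place where work is needed. Let $q$ be the probability that a single round returns. Then $\tau_{\mathrm{stop}}$ is geometric and $\mathbb{E}[\tau_{\mathrm{stop}}]=1/q$. By the computation in \cref{lem:samples-correct}, each cell $P$ is returned in a round with probability $w_P/U$. Summing over $P$ gives
\[
q=\frac{\sum_P w_P}{U}\ \ge\ \frac{W_0}{W_0+S}\ \ge\ \frac{1}{1+S},
\]
where the last step uses $W_0\ge1$.

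Finally, we bound $S$ from \eqref{eq:SU}. By the choice of $\tau$ we have $\lambda^\tau\le (d+1)^{-100B}$. Consequently $S\le (d+1)^{3B}(d+1)^{-100B}/(1-(d+1)^{-99B})\le 1$. This gives $q\ge 1/2$ and $\mathbb{E}[\tau_{\mathrm{stop}}]\le 2$, so $\mathbb{E}[\sum_t C_t]\le 2\,\mathbb{E}[C_1]$.

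The main subtlety is justifying independence between $\tilde C_t$ and the event that round $t$ executes. This holds because $\tilde C_t$ is defined on round $t$'s own fresh randomness, which is exactly the hypothesis of the lemma. The rest is the arithmetic bound $S\le1$.
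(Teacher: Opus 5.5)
Your proof is correct and follows essentially the same route as the paper. You factor the expectation of $\tilde C_t$ times the indicator that round $t$ executes using independence from earlier rounds, sum the geometric series to get $\mathbb{E}[C_1]/q$, and lower bound $q \ge W_0/(W_0+S) \ge 1/(1+S)$ via \cref{lem:samples-correct}. Your explicit estimate $S \le 1$ from $\lambda^\tau \le (d+1)^{-100B}$, which gives $q \ge 1/2$ for every $d$, is a slightly cleaner finish than the paper's appeal to $S \to 0$ as $d \to \infty$.
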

\begin{proof}
    Let $\calE_t$ denote the event that round $t$ of the sampling algorithm results in an accepting computation (i.e., sampling happens and no restart occurs) and $p = \Pr[\calE_t]$ (note that the probability is the same across every round). Let $N = \min_t \one_{\calE_t}$; clearly,
    $\Pr[N \geq t] = (1-p)^{t-1}$.  Note that $C_t = \tilde{C}_t \cdot \one_{N \geq t}$ and $\mathbb{E}[\tilde{C}_t] = \mathbb{E}[C_1]$.
    \[ \mathbb{E} \left[ \sum_{t \ge 1} C_t \right]
    = \mathbb{E} \left[ \sum_{t \geq 1} \tilde{C}_t \cdot \one_{N \geq t} \right]
= \sum_{r \geq 1} \mathbb{E}[\tilde{C}_t] \cdot \Pr[N \geq t] 
    = \mathbb{E}[C_1] \sum_{t \geq 1} (1-p)^{t-1} = \frac{\mathbb{E}[C_1]}{p}.
    \]
Therefore, it suffices to show that $p = \Theta(1)$.  By \cref{lem:sampler_dist}, the probability in any round of 
    outputting $P$ is $w_P/U$, the acceptance probability in each
    round is
    \[ 
    p = \frac{\sum_{P \in \cP} \lambda^{\lceil \Delta_P(f(X)) \rceil}}{U}
    =
    \frac{\sum_{P \in \cP} \lambda^{\lceil \Delta_P(f(X)) \rceil}}{W_0 + S}
    \geq \frac{W_0}{W_0 + S} \geq \frac{1}{1+S} = \Omega(1),\]
    where the final inequality follows from \eqref{eq:SU} that  
    $\lim_{d \rightarrow \infty} S = 0$.  
\end{proof}

\subsection{Privacy}

We show that \cref{alg:efficient_sampler} samples from the correct distribution and is therefore $\varepsilon$-DP.

\begin{lemma}
\label{lem:sampler_dist}
\cref{alg:efficient_sampler} is $\varepsilon$-DP.
\end{lemma}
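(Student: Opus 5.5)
By \cref{lem:samples-correct}, \cref{alg:efficient_sampler} outputs $r(P)$ with probability proportional to $w_P = \lambda^{\lceil \Delta_P(f(X)) \rceil} = \exp\bigl(-\tfrac{\varepsilon'}{2}\lceil \Delta_P(f(X))\rceil\bigr)$. The plan is to recognise this as an exponential mechanism and then invoke \cref{thm:exm-dp}. Two small points need care: the score is a ceiling, and the privacy parameter is $\varepsilon'$ rather than $\varepsilon$.

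First, the output distribution is the exponential mechanism over $\cP$ with parameter $\varepsilon'$ and score $\scr_P(X) := \lceil \Delta_P(f(X))\rceil$. The representative map $r$ is injective, since $r(P)\in P$ and the cells are disjoint. So releasing $r(P)$ is the same as releasing $P$; alternatively, it is post-processing.

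Second, I would check the sensitivity of the score. By \cref{lem:main-dp}, specialised to $\|\cdot\|_\infty$, we have $|\Delta_P(f(X)) - \Delta_P(f(X'))| \le 1$ for neighbours $X \sim X'$. Since $\Delta_P$ is computed for the scaled partition in the unscaled norm, this bound holds without change. The ceiling does not increase the gap by more than rounding allows: if $|u-u'|\le 1$ then $|\lceil u\rceil - \lceil u'\rceil| \le 1$, because $u \le u'+1$ gives $\lceil u\rceil \le \lceil u'\rceil + 1$, and symmetrically. Thus $\scr_P$ has sensitivity at most $1$.

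Third, \cref{thm:exm-dp} then gives $\varepsilon'$-DP. Since $\varepsilon' \le \varepsilon$, which follows from $\lambda = \lceil 2^c e^{-\varepsilon/2}\rceil/2^c \ge e^{-\varepsilon/2}$, the mechanism is also $\varepsilon$-DP.

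The one subtle step is confirming that the normalising constant $\sum_P w_P$ is finite and positive, so that the exponential mechanism is well defined over the countable set $\cP$. It is positive because the cell containing $f(X)$ has weight $1$. It is finite by the annulus bounds of \cref{lem:efficient_enum}, which give $\sum_P w_P \le W_0 + S < \infty$. These are also the facts that make the proportionality in \cref{lem:samples-correct} meaningful. With this in place, the pointwise ratio bound $\Pr[\cM(X)=o]\le e^{\varepsilon'}\Pr[\cM(X')=o]$ follows from the standard exponential-mechanism argument: numerators change by at most a factor $e^{\varepsilon'/2}$, and so do the normalisers.
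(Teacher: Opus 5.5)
Your proposal is correct and follows the same route as the paper: \cref{lem:main-dp} bounds the sensitivity of $\lceil \Delta_P(f(\cdot))\rceil$ by $1$, and \cref{lem:samples-correct} identifies the output as an exponential mechanism. You also fill in details the paper leaves implicit, namely that the weights actually use $\varepsilon'$ and $\varepsilon' \le \varepsilon$ (\cref{lem:samples-correct} writes $\varepsilon$), that the ceiling preserves sensitivity, and that the normaliser is finite and positive.
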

\begin{proof}
Since $f$ has sensitivity at most 1, by
\cref{lem:main-dp}, $\Delta_P(f(X))$
and hence $\lceil \Delta_P(f(X)) \rceil$ has sensitivity at most $1$. Thus, from \Cref{lem:samples-correct}, the $\eps$-DP property follows.
\end{proof}

\subsection{Utility}

The analysis of the total error is
similar to that of \cref{lem:utility}.
\begin{lemma}
    \label{lem:utility_v2}
    For $P$ sampled with probability proportional to $\exp(-\frac{\eps}{2} \lceil \Delta_P(f(X)) \rceil)$, it holds that $\expect[\Delta_P(f(X))] = O\!\left(\frac{\log d}{\varepsilon} \right)$; consequently, \cref{alg:efficient_sampler} has expected $\ell_\infty$-error $O\left(\frac{ d \log d}{\eps}\right)$.
\end{lemma}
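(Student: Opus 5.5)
Our plan is to mirror the proof of \cref{lem:utility}, using \cref{lem:efficient_enum} in place of the MSSP bounds. Fix $v = f(X)$ and recall $\tau = \rho/d = \lceil \frac{C}{\varepsilon}\log(d+1)\rceil$. The sampling weights are $w_P = \lambda^{\lceil \Delta_P(v)\rceil}$. By \cref{lem:samples-correct} we may equally use $\exp(-\frac{\varepsilon}{2}\lceil\Delta_P\rceil)$ or $\lambda$ with $\varepsilon'\in[\varepsilon/2,\varepsilon]$; this only changes constants, so we argue with $\lambda$. The cell containing $v$ has weight $1$, so the normalizer is at least $1$ and $\Pr[P] \le w_P$ for every $P$.

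\textbf{Tail bound.} For $m \ge 1$ we bound
\[
\Pr[\Delta_P(v) > m\tau] \;\le\; \sum_{j\ge m} \sum_{P:\, j\tau\le \Delta_P(v)\le (j+1)\tau} \lambda^{j\tau}.
\]
\Cref{lem:efficient_enum} bounds the number of cells in the $j$-th annulus by $N_j=(d+1)^{B(j+2)}$. Hence each term is at most $(d+1)^{B(j+2)}\lambda^{j\tau}$. Since $\lambda^\tau \le (d+1)^{-100B}$, the ratio between consecutive terms is at most $(d+1)^{-99B}$, and the sum is dominated by its first term:
\[
\Pr[\Delta_P(v) > m\tau] \;\le\; 2(d+1)^{2B}\bigl((d+1)^{B}\lambda^{\tau}\bigr)^m \;\le\; 2(d+1)^{2B-99Bm}.
\]
This already makes $\Pr[\Delta_P(v)>\tau]\le (d+1)^{-\Omega(1)}$ tiny, and it decays geometrically in $m$.

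\textbf{Expectation.} We split the expectation at $\tau$:
\[
\expect[\Delta_P(v)] \;\le\; \tau + \sum_{m\ge1} (m+1)\tau \cdot \Pr[\Delta_P(v) > m\tau] \;\le\; \tau\Bigl(1+\sum_{m\ge1}(m+1)\,2(d+1)^{2B-99Bm}\Bigr) \;=\; O(\tau).
\]
Since $\tau = O(\log d/\varepsilon)$, this gives $\expect[\Delta_P(v)] = O(\log d/\varepsilon)$. One subtlety is the rounding in $\tau$: if $\varepsilon$ is large, $\tau\ge 1$ could exceed $C\log(d+1)/\varepsilon$. With the theorem's hypothesis $\varepsilon\le\log d$, however, $C\log(d+1)/\varepsilon \ge 1$, so the ceiling at most doubles $\tau$.

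\textbf{Total error.} The output is $r(P)$. Choose a point $y \in P$ with $\|v-y\|_\infty$ within any slack of $\Delta_P(v)$. By \cref{lem: inf_movement_bound} (scaled by $\rho$), every point of the cell lies within $\ell_\infty$ distance $\rho$ of its representative, so $\|r(P)-y\|_\infty\le\rho$. The triangle inequality then gives
\[
\|r(P)-v\|_\infty \;\le\; \rho + \Delta_P(v).
\]
Taking expectations yields $\rho + O(\log d/\varepsilon) = O(d\log d/\varepsilon)$, since $\rho = d\tau$.

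\textbf{Main obstacle.} The delicate step is the tail sum. We must make sure the annulus counts $N_j$ from \cref{lem:efficient_enum} are beaten by $\lambda^{j\tau}$ uniformly in $j$, including for large $j$ where case (ii) of that lemma applies. This is exactly where the choice $\varepsilon\tau \ge 400B\log(d+1)$, together with $\varepsilon'\ge\varepsilon/2$, is used.
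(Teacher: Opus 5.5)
Your proof is correct, but it follows a different route from the paper's.

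The paper re-runs the argument of \cref{lem:utility}. It takes the MSSP parameters $k_d(\ell)\le(\ell+1)^{O(d/\ell)}\le e^{O(d\log d/\ell)}$ from \cref{thm: inf_main} and feeds them into the generic tail bound \cref{lem:tail_bound} with $\alpha=\log d$. That tail bound in turn needs \cref{lem:extending_partitions_small} and \cref{lem:extending_partitions_large} to pass from the discrete radii $\rho/\ell$ to rings of width $\rho/d$ and to radii beyond $\rho$. The result is $\Pr[P\notin\cP_{<(k-1)/\rho}]\le d^{c_1}e^{-c_3\varepsilon(k-1)}$, which is then summed over integer thresholds $k$.

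You instead count cells directly at granularity $\tau=\rho/d$, using the annulus bound $N_j=(d+1)^{B(j+2)}$ of \cref{lem:efficient_enum}. Because that bound is uniform in $j$, no small/large-radius case split is needed. Your tail sum $\sum_{j\ge m}N_j\lambda^{j\tau}=\sum_{j\ge m}U_j$ is the same geometric series that defines $S$ in the sampler. So your argument stays inside \cref{sec:efficient} and reuses only the condition $\varepsilon\tau\ge 400B\log(d+1)$ already imposed there.

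Your route gives a self-contained and sharper conclusion: the sampled cell is within distance $\tau$ of $f(X)$ except with probability $(d+1)^{-\Omega(1)}$, so $\expect[\Delta_P]\le(1+o(1))\tau$. You also handle the ceiling in $\tau$ explicitly via $\varepsilon\le\log d$, which the paper leaves implicit in its choice of $k^*$, and you get the tighter rounding term $\rho$ instead of the paper's $2\rho$.

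The paper's route buys modularity. It shows the lemma is an instance of the generic MSSP utility argument, valid for any MSSP with $k_d(\ell)\le Me^{A\alpha d/\ell}$. Yours depends on the enumeration lemma, which is specific to the explicit $\ell_\infty$ partition.
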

\begin{proof}
This lemma is essentially the same as \cref{lem:utility}, we reiterate that argument for completeness. By \cref{thm: inf_main}, we have $k_d(\ell) \leq (\ell+1)^{O(d/\ell)} \leq e^{O(d \log d/\ell)}$. Then, by \cref{lem:tail_bound} applied with $\alpha = \log d$, for 
    some universal constants $c_1$ and
    $c_3 = 1/C$, we have
    \[ 
    \Pr[ P \notin \cP_{< (k-1)/\rho}]
        \leq 
    \exp\left(c_1 \log d- \log d \cdot d \frac{k-1}{C d \log d/ \varepsilon}\right) 
        = d^{c_1} \cdot \exp\left(-c_3 \varepsilon (k-1)\right). \]
    Let $k^* = \max\left(\lceil \frac{\rho}{d} \rceil + 1, \lceil \frac{c_1 \ln d}{c_3 \varepsilon} \rceil + 1\right)$. 
        Since $\frac{\rho}{d} = \frac{C \log d}{\varepsilon}$, we have $k^* = O\left(\frac{\log d}{\varepsilon}\right)$.  Now, 
        note that the choice of $k^*$ ensures that $e^{-c_3 \varepsilon (k^*-1)} \le 1/d^{c_1}$ and
        for $k > k^*$, we have $(k-1)/\rho \ge 1/d$.
\begin{align*}
    \expect[ \lceil \Delta_P(f(X)) \rceil]
    & = \sum_{k = 0}^\infty \Pr[\lceil \Delta_P(f(X)) \rceil > k] 
    \leq \sum_{k = 0}^\infty \Pr[\Delta_P(f(X)) > k-1] 
    \leq \sum_{k = 0}^\infty \Pr[ P \notin \cP_{< (k-1)/\rho}] \\
    & \leq \sum_{k=0}^{k^*} 1 + \sum_{k=k^*+1}^\infty d^{c_1} \cdot e^{-c_3 \varepsilon (k-1)} 
            \le k^* + 1 + \sum_{j=0}^\infty e^{-c_3 \varepsilon j}
    = k^* + 1 + \frac{1}{1 - e^{-c_3 \varepsilon}}\\
    & \leq O\left( \frac{\log d}{\varepsilon} \right) + O\left( \frac{1}{\varepsilon} \right)
    = O\left( \frac{\log d}{\varepsilon} \right).
    \end{align*}

    Since the expected $\ell_\infty$ error of \cref{alg:efficient_sampler} is at most $2 \rho + \Delta_P(f(X))$, substituting $\rho = C \cdot d \log d/\varepsilon$ and the above bounds lets us conclude that \cref{alg:efficient_sampler} has expected $\ell_\infty$ error at most $O(d \log d/\varepsilon)$.

\end{proof}

\subsection{Randomness Complexity}
\begin{lemma}
\label{lem:rand_complexity}
\cref{alg:efficient_sampler} has randomness complexity $O(\log d)$.
\end{lemma}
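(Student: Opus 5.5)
We plan to use \cref{lem:first_round_bound}: take $\tilde C_t$ to be the number of random bits consumed in round $t$. Then the expected total is $O(\expect[C_1])$, so it suffices to bound the expected number of random bits used in a single round by $O(\log d)$. We split one round into its random steps and bound the expected cost of each.

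\textbf{Step 1.} Sampling $b_0 \sim \mathrm{Ber}(W_0/U)$ costs $O(1)$ bits in expectation by \cref{lem:knuth-yao}. A caveat is that $W_0/U$ need not have a short binary expansion. The Knuth--Yao Bernoulli sampler only reads bits of $p$ until the first disagreement with the random bits, which is expected $O(1)$, so $O(1)$ random bits suffice regardless of precision. We assume the lemma covers this case, since the bits of $p$ can be computed on demand.

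If $b_0 = 1$, we sample $P \in L_0$ with probability $w_P/W_0$. We would do this by sampling $\lceil \Delta_P(v)\rceil = j \in \{0,\dots,\tau\}$ first, or, more simply, by applying a Knuth--Yao/DDG sampler to the whole distribution on $L_0$. That costs at most $H(\cdot)+2 \le \log|L_0| + 2$ bits. By \cref{lem:efficient_enum}, $|L_0| \le N_0 + N_1 = (d+1)^{O(1)}$, so this costs $O(\log d)$ bits. If the implementation uses repeated Bernoulli draws as indicated, we would instead use a binary search over the cumulative weights. This takes $O(\log |L_0|)=O(\log d)$ Bernoulli draws, each costing $O(1)$ expected bits.

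\textbf{Steps 2 and 3.} These execute only with probability $S/U \le S = O((d+1)^{3B}\lambda^\tau) \le (d+1)^{-90B}$. Conditioned on reaching them, $M \sim \mathrm{Geom}(1-\gamma)$ costs $O(1)$ Bernoulli$(1-\gamma)$ draws in expectation, each $O(1)$ bits. Next we sample $J$ uniformly from $[N_M]$, which takes $O(\log N_M) = O((M+2)\log d)$ bits via rejection. The final $\mathrm{Ber}(p)$ costs $O(1)$ bits. So the conditional expected cost is $O(\expect[M]\log d) = O(\log d)$, because $\expect[M] = 1/(1-\gamma) = O(1)$. Multiplying by the tiny probability of reaching these steps gives a negligible contribution. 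Summing all steps, $\expect[C_1] = O(\log d)$.

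The main obstacle is making sure the Bernoulli parameters $W_0/U$ and $p=\lambda^{\lceil\Delta\rceil - M\tau}$ can be sampled with $O(1)$ expected bits despite their precision. Here $\lambda = a/2^c$ has $c$ bits, and the exponents are integers (which is why $\lceil \Delta_P\rceil$ was used), so $p$ is a dyadic rational with $O(c\,\tau M)$ bits and \cref{lem:knuth-yao} applies directly. For $W_0/U$, the bit-length is polynomial: the sum over $L_0$ of powers of $\lambda$, plus the closed form for $S$, which we can make dyadic up to the geometric denominator. Here we invoke the general Knuth--Yao property that the expected number of random bits for any Bernoulli is at most $2$, independent of precision. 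The running time concern does not affect the randomness count.
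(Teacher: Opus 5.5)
Your proposal is correct and follows the paper's proof. Like the paper, you reduce to a single round via \cref{lem:first_round_bound}, then bound step 1 by $O(\log d)$ using $|L_0| = \poly(d)$, step 2 by $O(1/(1-\gamma)) = O(1)$, and step 3 by $O(\log N_M) = O(M\log d)$ bits, which averages to $O(\log d)$ because $\mathbb{E}[M] = O(1)$; your extra remarks (steps 2--3 are reached only with probability $S/U \le (d+1)^{-90B}$, and a Bernoulli draw costs $O(1)$ expected bits regardless of the precision of $W_0/U$) are valid but not needed.
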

\begin{proof}
    Let $R_1$ denote the number of random bits used in round 1 of the sampling algorithm. By \cref{lem:first_round_bound}, the expected number of random bits used by the sampler is $O(\mathbb{E}[R_1])$. It therefore suffices to upper bound $\mathbb{E}[R_1]$.

    We write $R_1 = K_1 + K_2 + K_3$, where $K_1, K_2, K_3$ denote the number of random bits used in steps 1, 2, and 3 of the sampling algorithm respectively. First, $\mathbb{E}[K_1] = O(\log d)$ since $N_0 = \poly(d)$ and we can choose $P \in N_0$ by $O(\log d)$ calls to \cref{lem:knuth-yao}, using each call to decide which half of the remaining cells of $N_0$ to sample from. Next, $K_2$ is geometrically distributed with parameter $O(1-\gamma)$ and therefore, $\mathbb{E}[K_2] = O(\frac{1}{1-\gamma}) = O(1)$.
    
    Finally, $\mathbb{E}[K_3] = \underset{m \sim M}{\mathbb{E}}[\mathbb{E}[K_3 \mid M = m]]$. Since we require $\log(N_m)$ bits to sample from $[N_m]$ and $O(1)$ bits to sample $\text{Ber}(p)$, by \cref{lem:knuth-yao},
    \[ \mathbb{E}[K_3 \mid M =m] = \log(N_m) + O(1) = 7(m+2) \log(d+1) + O(1) = O(m \log d) . \]
    Therefore,
    \[ \mathbb{E}[K_3] 
    = O \left( \underset{m \sim M}{\mathbb{E}}[m \log d] \right) 
    = O \left( \log d \cdot \mathbb{E}[M] \right)
    = O \left(\log d \cdot \frac{1}{1-\gamma} \right) = O(\log d).\]
    So, $\mathbb{E}[R_1] = \mathbb{E}[K_1] + \mathbb{E}[K_2] + \mathbb{E}[K_3] = O(\log d)$, as desired.
\end{proof}

\subsection{Running Time}
\label{sec:time}

\begin{lemma}
    \label{lem:expected_time}
    Let $T(f, X)$ denote the time to evaluate $f$ on input $X$; consequently, \cref{alg:efficient_sampler} has expected running time $\poly(T(f, X), 1/\varepsilon)$.
\end{lemma}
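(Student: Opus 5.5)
We mirror the randomness-complexity argument. We reduce to bounding the expected time of a single round, via \cref{lem:first_round_bound} applied with $\tilde C_t$ equal to the running time of round $t$. First, there is a one-time preprocessing cost outside the loop. We evaluate $v = f(X)$ in time $T(f,X)$. We compute $c$, $a$ and $\lambda$, which have $O(\log(1/\varepsilon)+\varepsilon)=O(\log(1/\varepsilon)+\log d)$ bits since $\varepsilon\le\log d$. We then enumerate $L_0$ using \cref{lem:efficient_enum} with $m=0,1$ together with \cref{lem:dist_comp}, at cost $\poly(d)\cdot N_1 = \poly(d)$ since $\tau=\rho/d$, so that $\lceil\Delta_P\rceil\le\tau$ implies $\Delta_P\le\tau$. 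Finally we compute the rationals $w_P=\lambda^{\lceil\Delta_P(v)\rceil}$, $W_0$, $\gamma$, $S$ and $U$. Each of these is a power of $\lambda$ with exponent at most $O(\tau)=O(\log d/\varepsilon)$, or a sum or quotient of $\poly(d)$ such terms. So their bit lengths are $\poly(\log d/\varepsilon, c)$, and exact rational arithmetic costs $\poly(d,1/\varepsilon)$. Note also that $d\le T(f,X)$, since $f$ outputs $d$ numbers.

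For a single round, we split the cost into the three steps, as in \cref{lem:rand_complexity}. In Step 1, the Bernoulli draw with parameter $W_0/U$ and the sampling from $L_0$ via repeated calls to \cref{lem:knuth-yao} take $\poly(d,1/\varepsilon)$ time, because all probabilities have polynomial bit length. In Step 2, sampling $M\sim\mathrm{Geom}(1-\gamma)$ takes an expected $O(1/(1-\gamma))=O(1)$ Bernoulli draws, each costing polynomial time.

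The key step, and the main obstacle, is Step 3. Enumerating $L_M$ costs $\poly(d)\cdot N_M=\poly(d)(d+1)^{7(M+2)}$, plus one call to \cref{lem:dist_comp} per enumerated cell. The Bernoulli parameter $p$ has exponent at most $O(M\tau)$, so its bit length is $O(M\tau c)$. This cost is exponential in $M$, so we must show that $\mathbb{E}[(d+1)^{7M}]$ is small. Since $\Pr[M=m]=\gamma^{m-1}(1-\gamma)$ with $\gamma\le (d+1)^{-99B}=(d+1)^{-693}$, we get
\[
\mathbb{E}\big[(d+1)^{7(M+2)}\big]\le (d+1)^{21}\sum_{m\ge1}\big((d+1)^{7}\gamma\big)^{m-1}=O\big((d+1)^{21}\big).
\]
The polynomial factors in $M$, coming from the bit length of $p$ and from sampling $J$ uniform on $[N_M]$, are absorbed in the same way. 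Hence the expected cost of Step 3 is $\poly(d,1/\varepsilon)$.

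Summing the three steps gives an expected per-round cost of $\poly(d,1/\varepsilon)$. By \cref{lem:first_round_bound}, which gives constant acceptance probability per round, the total expected time is $O(T(f,X))+\poly(d,1/\varepsilon)=\poly(T(f,X),1/\varepsilon)$. We also check that the inputs to \cref{lem:dist_comp} have polynomial length. This holds because $v\in\mathbb{Q}^d$ has bit length at most $T(f,X)$, and every enumerated $z$ lies within $O(M)$ of $v/\rho$ coordinatewise.
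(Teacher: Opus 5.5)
Your proof follows the paper's argument: you reduce to a single round via \cref{lem:first_round_bound}, split the round into the three steps, and control the $(d+1)^{7(M+2)}$ enumeration cost of Step 3 using $\Pr[M=m]=\gamma^{m-1}(1-\gamma)$ with $\gamma\le(d+1)^{-99B}$. One small bookkeeping slip: the calls to \cref{lem:dist_comp} and the floor computations in the enumeration take time polynomial in the bit length of $f(X)$, so the per-round cost of Steps 1 and 3 is $\poly(T(f,X),1/\varepsilon)$ (times the $N_M$ factor in Step 3), not $\poly(d,1/\varepsilon)$, and the total is $\poly(T(f,X),1/\varepsilon)$ rather than $O(T(f,X))+\poly(d,1/\varepsilon)$ --- which is still exactly the claimed bound.
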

\begin{proof}
    By \cref{lem:first_round_bound}, it suffices to bound $O(\mathbb{E}[C_1])$.  Let $|\cdot |_b$ denote the bit-length and $T := T(f, X)$.
    
    Let $K_1, K_2, K_3$ denote the amount of time spent in steps 1, 2, and 3 of \Cref{alg:efficient_sampler} respectively; we will bound these separately. 
    
    To bound $K_1$, note that we would spend $\poly(d)$ time in step 1 if all operations were over real numbers.  Furthermore, the computation of each $\lceil \Delta_P(f(X)) \rceil$ can be done in time polynomial in the length of $T$ and $d$ (which is $\poly(T)$) by \cref{lem:dist_comp}. Since these quantities are rational numbers with denominator $\leq (2^c)^{|\lceil \Delta_P(f(X)) \rceil|_b} \leq 2^{\polylog(1/\varepsilon) \cdot \poly(T)}$, we can perform bit operations with overhead bounded above by $\poly(T, 1/\varepsilon)$. Therefore, $\mathbb{E}[K_1] \leq \poly(T, 1/\varepsilon)$. 
    
    To bound $K_2$, we observe that we may write $K_2 = \sum_{j=1}^\infty R_j$, where $R_j$ denotes the bit complexity of round $j$ of step 2 (where $R_j = 0$ if the process has terminated). 
    \[ \mathbb{E}[R_j] = \mathbb{E}[R_j \mid R_j > 0] \cdot \Pr[R_j > 0] = \mathbb{E}[R_j \mid R_j > 0] \cdot \gamma^{j-1} = \poly(d, 1/\varepsilon, \log j) \cdot \gamma^{j-1}.\] From this, 
    \begin{align*}
        \mathbb{E}[K_2] 
        &= \sum_{j=1}^\infty \mathbb{E}[R_j] 
        = \poly(d, 1/\varepsilon)  \sum_{j=1}^\infty \polylog(j) \cdot \gamma^{j-1}
        = \poly(d, 1/\varepsilon).
    \end{align*}

    To bound $\mathbb{E}[K_3]$, we compute $\mathbb{E}[K_3 \mid M = m]$.  By \Cref{lem:efficient_enum}, $L_m$ can be enumerated in time $\poly(d) \cdot N_m = \poly(d) \cdot (d+1)^{B (m+2)}$ and choosing $J$ can be done in (expected) time $\poly(\log N_m) = \poly(m, \log d)$.  Next, consider the bit complexity of $\lceil \Delta_{P_j}(f(X)) \rceil$.  This quantity is computed by a polynomial time call to the subroutine \cref{lem:dist_comp} with input $f(X)$ and $a, z$ representing $P_j$; this yields $|f(X)|_{b} \leq T$ and $|a, z|_b \leq \poly(\tau m) = \poly(1/\varepsilon, \log d, m)$. Hence, $\lceil \Delta_{P_j} (f(X)) \rceil \leq \poly(T, 1/\varepsilon, m)$ and $|\lambda^{\lceil \Delta_{P_j}(f(X)) \rceil}|_b = \poly(T, 1/\varepsilon, m)$. Therefore, $p$ can be sampled in expected time $\poly(T, 1/\varepsilon, m)$.  Together, $\mathbb{E}[K_3 \mid M=m] = \poly(T, 1/\varepsilon, m) \cdot (d+1)^{B(m+2)}$.
    \begin{align*}
        \mathbb{E}[K_3] 
        &= \sum_{m=1}^\infty \mathbb{E}[K_3 \mid M=m] \cdot \Pr[M=m] \\
        &= \sum_{m=1}^\infty \poly(T, 1/\varepsilon, m) \cdot (d+1)^{B(m+1)}  \cdot \gamma^{m-1} (1-\gamma)\\
        &\leq \sum_{m=1}^\infty \poly(T, 1/\varepsilon, m) \cdot (d+1)^{B(m+2)} \cdot \gamma^{m-1}\\
        &\leq \sum_{m=1}^\infty \poly(T, 1/\varepsilon, m) \cdot (d+1)^{B(m+2)} \cdot (d+1)^{B (m-1)} (d+1)^{-100B (m-1)}\\
        &\leq \poly(T, 1/\varepsilon) + \sum_{m=1}^\infty \poly(T, 1/\varepsilon, m) \cdot (d+1)^{-m} 
        = \poly(T, 1/\varepsilon).  \qedhere
    \end{align*}
\end{proof}

 \section{Conclusion and Open Questions} \label{sec:open}

We give an $\eps$-DP algorithm for answering $d$ linear queries with nearly optimal error and randomness complexity, which (essentially) resolves the main open questions from \cite{canonne2025randomness}. Despite this progress, there are several obvious remaining gaps, which we highlight below.
\begin{itemize}
\item As mentioned earlier, our randomness complexity of $O(\log d)$ is only known to be tight in the regime where $\eps \leq 1/d$. Moreover, our efficient mechanism has $\ell_\infty$-error $O(d \log d / \eps)$ which is a factor of $O(\log d)$ off from the optimal bound. \item Our algorithms are only efficient for the $\ell_\infty$-sensitivity case. It would be interesting to obtain efficient algorithms for all $K$-norms, or at least all $\ell_p$-norms.
\item While not the focus of our paper, we do obtain improvements for \emph{approximate}-DP (i.e., $(\eps, \delta)$-DP) over the previous algorithms \cite{canonne2025randomness,ghentiyala2026efficient} as well. Without a constraint on randomness complexity, the optimal $\ell_\infty$-error is $O(\sqrt{d \log(1/\delta)} / \eps)$~\cite{SteinkeU16,DaganK22,GhaziKM21}. If we restrict the {\em worst case} number of random bits to $(\log d)^{O(1)}$, then the \emph{approximate}-DP mechanisms from \cite{canonne2025randomness,ghentiyala2026efficient} incur $\ell_\infty$-error of at least $\tilde{\Omega}(d^{1.5} / \eps)$, which is \emph{worse} than our \emph{pure}-DP algorithm with $O(d / \eps)$ error. Nevertheless, there is still a gap of $\tilde{\Omega}(\sqrt{d})$ between our error bound and the upper bound without any worst-case randomness complexity constraint.
\end{itemize}

\noindent Finally, we remark that the study of DP algorithms under the lens of randomness complexity is still nascent, and a large number of problems are yet to be explored. Specifically, one could take any fundamental DP problem and examine its randomness complexity vs utility tradeoff. 
\section*{AI Disclosure Statement}

Gemini 3.1 Pro and GPT 5.5 Pro were used for several parts of this manuscript. The authors defined a multi-scale secluded-partition (MSSP) and observed that such objects may be used to obtain particularly low randomness DP mechanisms. AI was used to construct a good MSSP for the $\ell_\infty$-norm, which it did by modifying the construction of Hoza and Klivans \cite{HozaK18}. The authors confirmed the correctness of this construction by formalizing it in {\tt Lean4} using an AI coding agent (Google Antigravity). The authors then observed that an efficient mechanism could be obtained using the ring sampling technique described in \cref{sec:efficient}, which AI was used to formalize. AI was also used to generate the Rogers \cite{Rogers_1957} inspired construction in \cref{subsec:k-norm}. AI was used to generate drafts of almost all proofs, which the authors then iterated over. AI was also used to generate \cref{fig:secluded_partition} and check proofs for errors. The authors verified the correctness and originality of all content and are ultimately responsible for any oversight. 
\bibliographystyle{alpha}
\bibliography{main}

\end{document}